\documentclass[times,final]{elsarticle}
\pdfoutput=1


\usepackage{subcaption}

\usepackage{amsmath}
\usepackage{amssymb}

\usepackage{tikz}
\usepackage{quiver}
\usepackage{adjustbox} 

\usepackage{siunitx}

\usepackage{physics}

\usepackage{mathtools}

\usepackage{stmaryrd}

\newcommand{\bR}{\mathbb{R}}

\newcommand{\bT}{\mathbb{T}}

\newcommand{\cC}{\mathcal{C}}
\newcommand{\cD}{\mathcal{D}}
\newcommand{\cE}{\mathcal{E}}

\newcommand{\cO}{\mathcal{O}}
\newcommand{\cP}{\mathcal{P}}

\newcommand{\cR}{\mathcal{R}}
\newcommand{\cS}{\mathcal{S}}

\newcommand{\cX}{\mathcal{X}}

\newcommand{\NamespaceSet}{\mathsf{Namespace}}
\newcommand{\SpaceSet}{\mathbf{S}}
\newcommand{\QuantitySet}{\mathbf{Q}}
\newcommand{\List}{\mathsf{List}}
\newcommand{\Dtry}{\mathsf{Dtry}}
\newcommand{\InterfaceSet}{\mathsf{Interface}}
\newcommand{\SystemSet}{\mathsf{System}}

\DeclareMathOperator{\dom}{dom}

\newcommand\mlnode[1]{\mbox{\begin{tabular}{@{}c@{}}#1\end{tabular}}}

\newcommand{\semantic}[1]{\left\llbracket {#1} \right\rrbracket}

\usepackage[unicode=true]{hyperref}

\hypersetup{%
	plainpages=false,
	bookmarksopen=true,
	bookmarksnumbered=true,
	breaklinks=true,
}

\usepackage{amsthm}

\usepackage[capitalize,nameinlink,noabbrev]{cleveref}


\newtheoremstyle{break} 
  {}          
  {}          
  {}          
  {}          
  {\bfseries} 
  {.}         
  {\newline}  
  {}          
\theoremstyle{break}



\newenvironment{proposition}
  {\pushQED{\qed}\propositionx}
  {\popQED\endpropositionx}


\newenvironment{definition}
  {\pushQED{\qed}\definitionx}
  {\popQED\enddefinitionx}



\newenvironment{example}
  {\pushQED{\qed}\examplex}
  {\popQED\endexamplex}



\newenvironment{remark*}
  {\pushQED{\qed}\remarkxx}
  {\popQED\endremarkxx}

\usepackage{booktabs} 



\hyphenation{inter-connection}
\hyphenation{multi-physical}
\hyphenation{thermo-dynamics}
\hyphenation{thermo-dynamic}
\hyphenation{thermo-dynamic-ally}
\hyphenation{electro-magnetism}


\journal{arXiv}

\newcommand{\titel}{Exergetic Port-Hamiltonian Systems Modeling Language}

\hypersetup{%
	pdfauthor={%
		Markus Lohmayer,
    Owen Lynch
		and Sigrid Leyendecker
		},
  pdftitle=\titel{}
}

\begin{document}

\begin{frontmatter}
	\title{\titel}

	\author[fau]{Markus Lohmayer\corref{cor1}}
	\ead{markus.lohmayer@fau.de}
  \cortext[cor1]{Corresponding author}

	\author[topos]{Owen Lynch}

	\author[fau]{Sigrid Leyendecker}

	\address[fau]{%
		Institute of Applied Dynamics\\
  	Friedrich-Alexander Universität Erlangen-Nürnberg,
		Erlangen, Germany
	}

	\address[topos]{%
		Topos Institute\\
    Berkeley, CA, USA
	}

	\begin{abstract}
		Mathematical modeling of real-world physical systems
		requires the consistent combination of
		a multitude of
		physical laws
		and
		phenomenological models.
		This challenging task can be greatly simplified by
		hierarchically decomposing systems
		into ultimately simple components.
		Moreover, the use of diagrams
		for expressing the decomposition
		helps make the process more intuitive
		and facilitates communication, even with non-experts.
		As an important requirement,
		models have to respect fundamental physical laws
		such as the first and the second law of thermodynamics.
		While some existing modeling frameworks
		make such guarantees
		based on structural properties of their models,
		they lack a formal graphical syntax.
		We present
		a compositional and thermodynamically consistent modeling language
		with a graphical syntax.
		In terms of its semantics,
		we essentially endow port-Hamiltonian systems
		with additional structural properties
		and a fixed physical interpretation,
		ensuring thermodynamic consistency
		in a manner closely related to the metriplectic or GENERIC formalism.
		While port-Hamiltonian systems are inspired by
		graphical modeling with bond graphs,
		neither the link between the two,
		nor bond graphs themselves,
		can be easily formalized.
		In contrast,
		our syntax is based on a refinement of
		the well-studied operad of undirected wiring diagrams.
		By combining
		a compositional, graphical syntax
		with
		an energy-based, thermodynamic approach,
		the presented modeling language
		simplifies
		the understanding, reuse, and modification
		of complex physical models.
	\end{abstract}

	\begin{keyword}
    bond graphs \sep{}
    compositionality \sep{}
    GENERIC \sep{}
    metriplectic structure \sep{}
    nonequilibrium thermodynamics \sep{}
    undirected wiring diagrams
	\end{keyword}
\end{frontmatter}

\section{Introduction}%

In this introduction
we mainly want to answer the following questions:
\begin{enumerate}
  \item{What are Exergetic Port-Hamiltonian Systems (EPHS)?}
  \item{How do they enable compositional modeling?}
  \item{Why is this practically relevant?}
\end{enumerate}
In addition,
we briefly summarize key ideas of nonequilibrium thermodynamics,
and
we discuss related work that inspired our developments.
The introduction finishes with an outline of the main part of the paper.

\subsection{What?}%

Exergetic Port-Hamiltonian Systems (EPHS) provide
a compositional and thermodynamically consistent
language for expressing mathematical models
of multiphysical systems at macroscopic scales.
Such systems may combine
\begin{enumerate}[(a)]
  \item{%
    classical mechanics\\
    (e.g.~spring-mass systems, multibody dynamics, fluid mechanics),
  }
  \item{%
    classical electromagnetism\\
    (e.g.~LC circuits, electromagnetic wave propagation)
  }
  \item{%
    and irreversible processes with local thermodynamic equilibrium\\
    (e.g.~friction/viscosity, electrical resistance/conduction, thermal conduction).
  }
\end{enumerate}

The EPHS language is termed compositional
because it enables
a fully modular and hierarchical approach to model specification
through its simple, graphical syntax,
as elaborated in the next subsection.

Flattening any hierarchical nesting reveals that
every EPHS model is fundamentally defined by
a power-preserving interconnection of primitive subsystems.
There are three main classes of primitive EPHS:
\begin{enumerate}
  \item{%
    Systems that represent energy storage
    are characterized by
    a function which yields the stored energy
    for each state of the primitive system.
  }
  \item{%
    Systems that represent reversible energy exchange
    are characterized by
    a so-called Dirac structure~\cite{1990Courant},
    as known e.g.~from (constrained) mechanical systems.
  }
  \item{%
    Systems that represent irreversible energy exchange
    are characterized by
    another type of structure that we call Onsager structure.
  }
\end{enumerate}
Since both classical mechanics and electromagnetism
can be expressed using Dirac structures,
item 2.~corresponds to (a) and (b) above,
while item 3.~corresponds to (c).
Structural properties of the primitive systems
and their interconnection
guarantee that all models respect
the first and the second law of thermodynamics,
as well as Onsager's reciprocal relations.

Inspired by the contribution in~\cite{2018BadlyanMaschkeBeattieMehrmann},
previous work~\cite{2021LohmayerKotyczkaLeyendecker}
has explored EPHS as
a systematic approach to thermodynamic modeling
based on port-Hamiltonian systems
(see, e.g.,~\cite{2014SchaftJeltsema}).
By interpreting the `Hamiltonian'
as an exergy storage function,
a stricter version of port-Hamiltonian structure can be formulated,
ensuring
a thermodynamically consistent
reversible-irreversible splitting of the dynamics,
as known from the metriplectic or GENERIC formalism
(see, e.g.,~\cite{1984Morrison,2018PavelkaKlikaGrmela}).
The exergy storage function represents
the theoretically `useful' amount of energy,
i.e.,~the maximum amount of work
that can be extracted
from a thermodynamic system
using an ideal device,
such as a Carnot engine.
This ideal device has access
not only to all forms of energy stored in the system
but also to a (reference) environment,
primarily characterized by
a heat bath with constant temperature.

The present work formalizes EPHS
as a compositional modeling language
with a simple, graphical syntax
inspired by bond graphs
(see, e.g.,~\cite{1961Paynter,2010Borutzky}).
This approach enables the construction of complex models
by composing them from simpler subsystems,
which can themselves be constructed from
more basic subsystems, and so on.
We draw on and extend ideas form (applied) category theory
(see, e.g.,~\cite{2019FongSpivak,1998Maclane})
to propose a simple mathematical framework
suitable for computer implementation.
In principle, the graphical syntax
and the associated compositional framework
could also be adopted without
the parts that ensure thermodynamic consistency.
While this would relax certain conditions
imposed on the primitive systems,
it would also weaken the correspondence between
the remaining structure and its physical interpretation.

\subsection{How?}%
\label{ssec:syntax}

\begin{figure}[ht]
  \centering
  \includegraphics[width=15em]{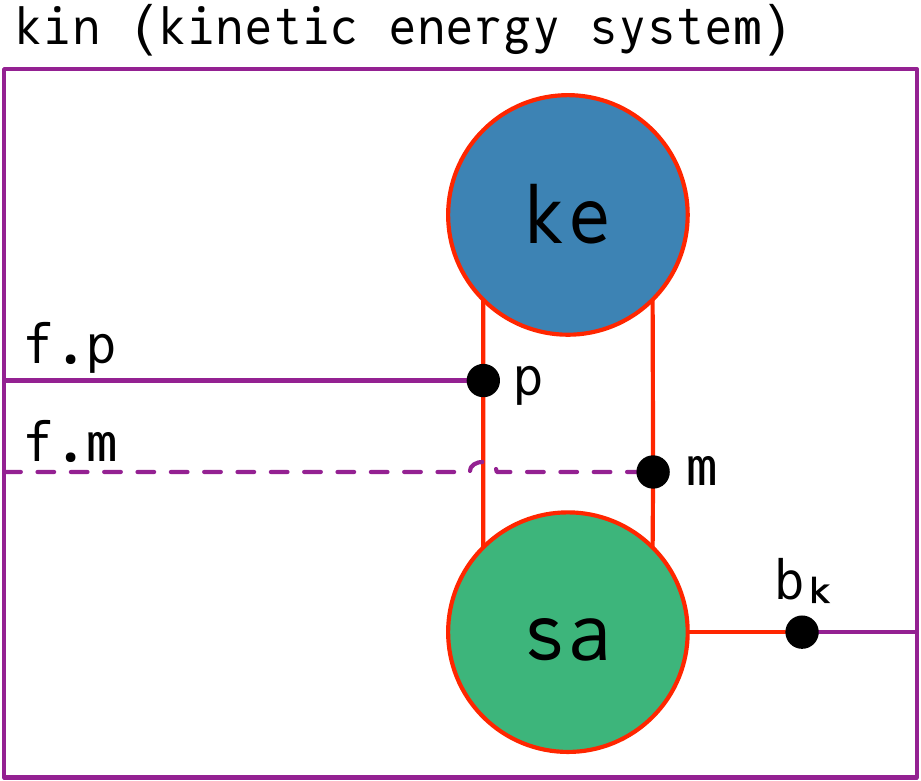}%
  \caption{%
    Graphical display of
    an interconnection pattern
    with different syntactic elements
    shown in different colors.
    The outer box
    shown in {\color[HTML]{942192}purple}
    represents the interface of the resulting composite system.
    The three protruding lines
    represent its ports
    called $\mathtt{f.p}$, $\mathtt{f.m}$ and $\mathtt{b_k}$.
    The two inner boxes
    shown in {\color[HTML]{FF2600}red}
    represent (the interface of)
    the subsystems called
    $\mathtt{ke}$ (storage of kinetic energy)
    and
    $\mathtt{sa}$ (self-advection of kinetic energy).
    Both have ports called
    $\mathtt{p}$ (momentum) and $\mathtt{m}$ (mass).
    Box $\mathtt{sa}$ has another port called
    $\mathtt{b_k}$
    (advection of kinetic energy across the boundary of the spatial domain).
    The common prefix $\mathtt{f}$ (fluid)
    of the outer ports
    $\mathtt{f.p}$ (fluid momentum)
    and
    $\mathtt{f.m}$ (fluid mass)
    combines them into a multiport.
    The black dots represent junctions
    where power is exchanged among the connected ports.
    The dashed line indicates that
    only information about the state (fluid mass)
    is exchanged via
    the outer port $\mathtt{f.m}$,
    with no energy transfer occurring.
  }%
  \label{fig:kin_colored}
\end{figure}

Expressions in the graphical syntax
are called interconnection patterns,
since such a pattern specifies
how a composite system is formed
through a power-preserving interconnection of subsystems.
\Cref{fig:kin_colored} shows
an example from fluid dynamics,
see~\cite{2024LohmayerKrausLeyendecker}.

The syntax deals with systems only in terms of
their interfaces, i.e.~their exposed ports.
Each port has two attributes:
The first attribute gives
the physical quantity that can be exchanged via the port
together with the corresponding state space.
To rule out ill-defined connections,
only ports with the same quantity can be connected.
The second attribute is of Boolean type
and indicates whether
power can be exchanged over the port
or merely information about system state,
as indicated graphically by a solid or a dashed line, respectively.

We think of an interconnection pattern as a function.
For each inner box,
it takes as input a system with matching interface
and it outputs a system whose interface matches the outer box.
For instance,
the pattern shown in~\cref{fig:kin_colored}
represents a function
that takes two subsystems,
whose interfaces are determined by
the ports of the boxes $\mathtt{ke}$ and $\mathtt{sa}$,
and it combines them into a composite system,
whose interface is given by the ports of the outer box.
We draw inner boxes filled with a color
to indicate
the nature of the subsystems
that we consider as inputs.
In~\cref{fig:kin_colored},
box $\mathtt{ke}$ has a {\color[HTML]{346F99}blue} filling,
as it represents energy storage
(the fluid's kinetic energy).
Box $\mathtt{sa}$ has a {\color[HTML]{349A69}green} filling,
as it represents reversible energy exchange
(advection of momentum and mass).
Written on top of the outer box,
\texttt{kin} is an identifier
akin to a function name
and the text in parentheses is just a short description.
The multiport $\mathtt{f}$
facilitates integrating the system
into a more complex model
without drawing parallel connections.

\begin{figure}[ht]
  \centering
  \subcaptionbox{Kinetic energy system}{
    \includegraphics[width=.3\textwidth]{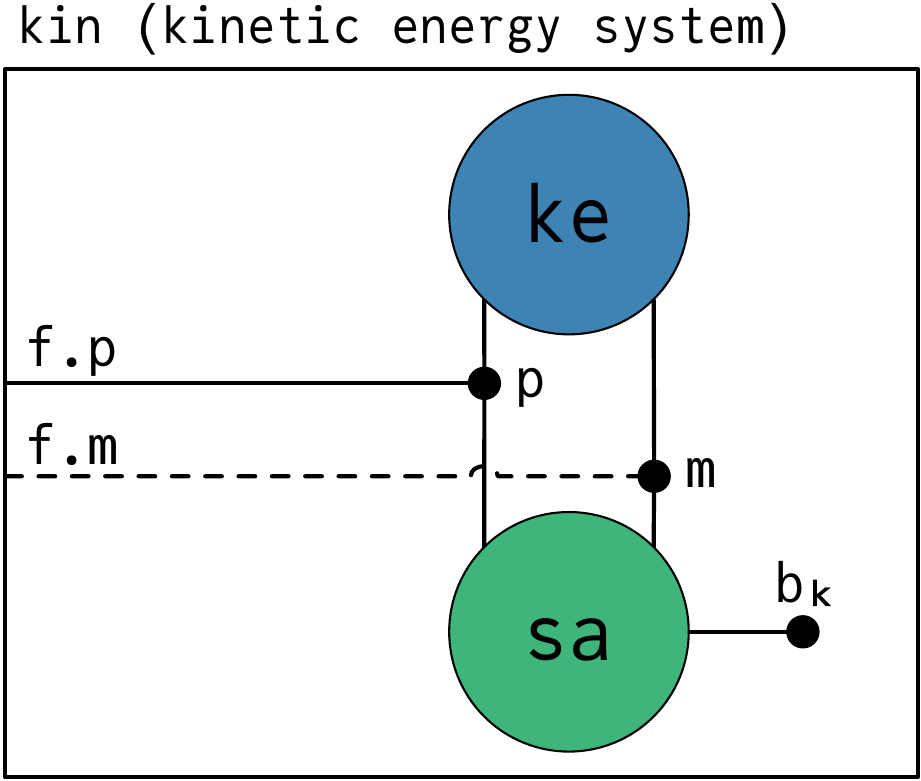}
  }
  \subcaptionbox{Fluid system}{
    \includegraphics[width=.3\textwidth]{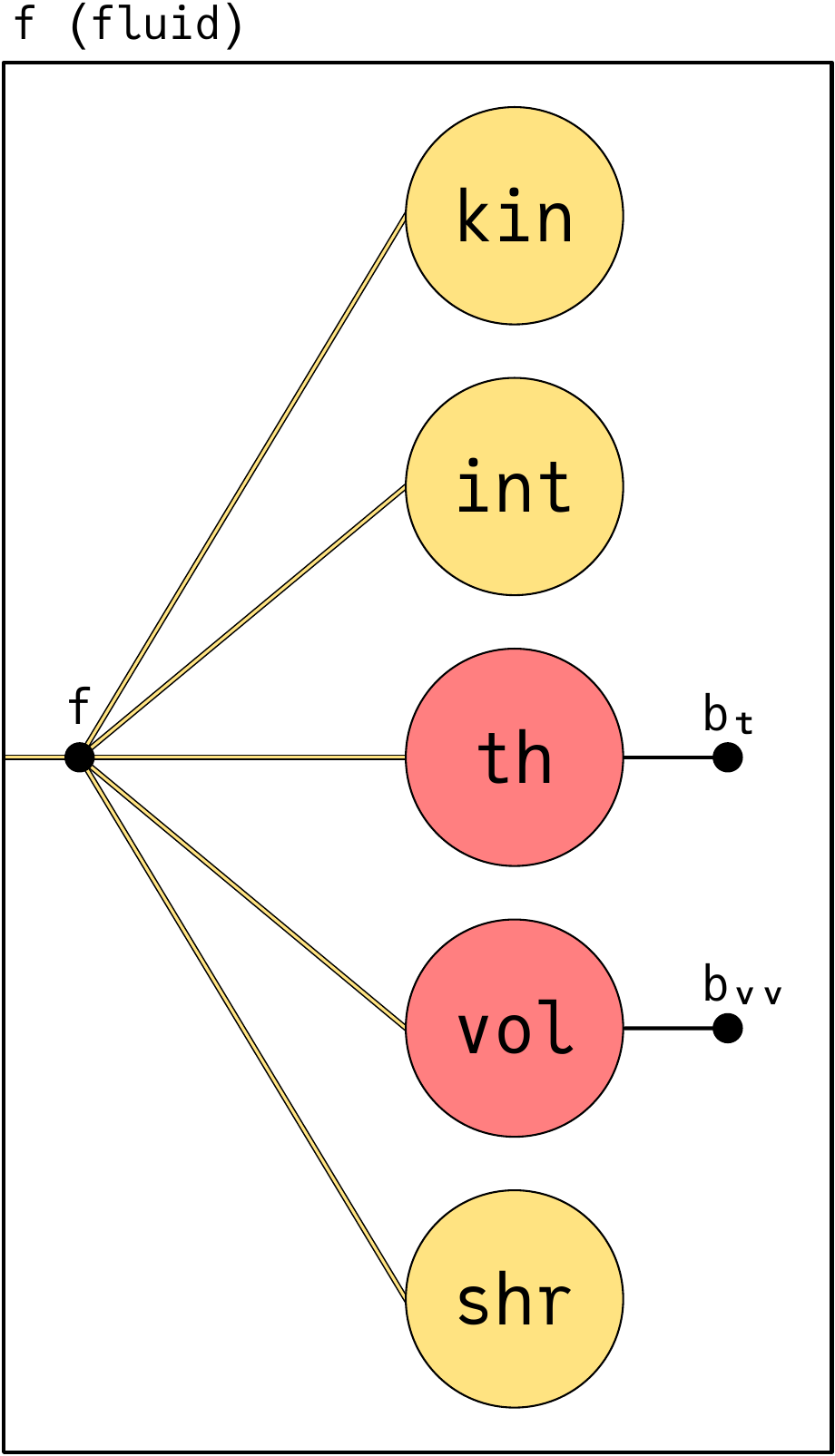}
  }
  \subcaptionbox{EMHD system}{
    \includegraphics[width=.3\textwidth]{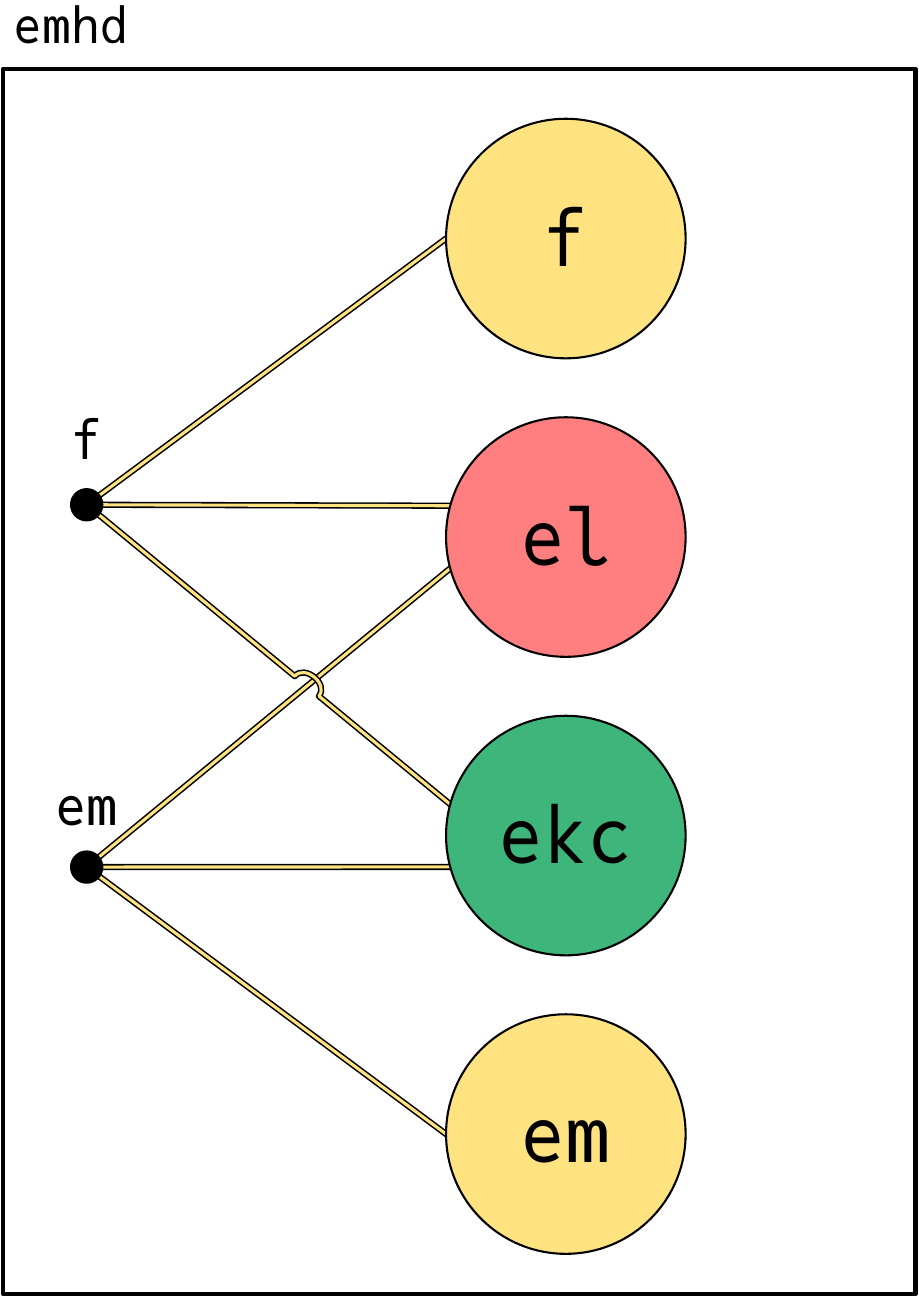}
  }
  \caption{%
    Hierarchical specification of
    an electro-magneto hydrodynamics (EMHD) model~%
		\cite{2024LohmayerKrausLeyendecker}:
		The fluid model \texttt{f} shown in (b) includes
		the kinetic energy system \texttt{kin} depicted in (a),
		as well as four additional models not detailed here:
		an internal energy system \texttt{int},
		a thermal conduction model \texttt{th},
		and volume and shear viscosity models \texttt{vol} and \texttt{shr}.
    The electro-magneto hydrodynamics model \texttt{emhd}
    shown in (c) includes
		the fluid model \texttt{f} depicted in (b),
		as well as three additional models not detailed here:
    an electromagnetic system \texttt{em},
    an electric conduction model \texttt{el},
    and a model for the electro-kinetic coupling \texttt{ekc}.
  }%
  \label{fig:emhd}
\end{figure}
\Cref{fig:emhd} shows
the hierarchical specification of a plasma model.
For simplicity,
a spatial domain with isolated boundary
is assumed
such that
only the in-domain interconnection of the different models
needs to be dealt with.
A {\color[HTML]{D96C6D}red} filling indicates that
the respective input is
a primitive system modeling an irreversible process.
A {\color[HTML]{D9C16D}yellow} filling signifies that
the input is a nested composite system,
specified in terms of its own pattern.
Recursively substituting
into all yellow boxes
the respective patterns
would yield an equivalent model specification
without the hierarchical decomposition.
The composition of patterns
through substitution is
uniquely defined whenever interfaces match
(possibly up to a given renaming of the ports).
No additional data or handling of `edge cases' is required.
One can hence easily add levels of abstraction
by refactoring subsystems
into reusable parts.
In the fluid model,
one could for instance factor out
the kinetic energy system
filling box $\mathtt{kin}$
and
the internal energy system
filling box $\mathtt{int}$
into a reusable model of an ideal compressible fluid,
see~\cite{2024LohmayerKrausLeyendecker}.

\subsection{Why?}%

EPHS provide
a formal and intuitive modeling language
suited for humans and software tools alike.
As shown above,
increasingly complex models
are assembled from simpler
and ultimately primitive parts.
This fosters the reusability of models
and encourages the conceptual separation
of different levels of detail.
Beyond the modeling process,
this can be beneficial when
the language is used in
interdisciplinary communication
and education.
Some modeling errors can be completely avoided due to
the built-in thermodynamic consistency
and
the typed interfaces
(given by ports with fixed physical quantities).
Since interconnection patterns are essentially
energy/exergy flow diagrams,
EPHS models are particularly useful for
thermodynamic analysis and optimization.
The compositional nature of EPHS
could enable
a scalable approach
to model transformations, such as discretizations.
It could also enable a practical integration of
first-principles-based thermodynamic modeling
and scientific machine learning.

\subsection{Nonequilibrium thermodynamics in a nutshell}%
\label{ssec:thermodynamics}

Most physical systems
are characterized by
a huge number of degrees of freedom
but
keeping track of them
is usually neither possible nor useful.
Nonequilibrium thermodynamics aims to
provide a framework for coarse-grained modeling,
meaning that
fast dynamics at small scales are not resolved,
and
a quantity called entropy is introduced to account for
the missing information,
see, e.g.,~\cite{2018PavelkaKlikaGrmela}.
Coarse-graining is possible because,
despite fluctuations at the microscopic scale,
one can eventually observe non-fluctuating properties
at a more macroscopic scale;
at least this is the case for isolated systems.
For instance,
temperature is a property
that only appears at more macroscopic scales.
Thermodynamic properties are classified
as extensive
when they are proportional to the size of the system
(e.g.~entropy, volume, mass)
or as intensive
when they do not depend on the size of the system
(e.g.~temperature, pressure, chemical potential),
see, e.g.,~\cite{1985Callen}

A system is said to be
in local thermodynamic equilibrium if
for every material point
there exists a neighborhood
large enough
to observe thermodynamic properties,
yet small enough
for the properties to be essentially non-varying inside the neighborhood,
see, e.g.,~\cite{2005Oettinger}.
The size of these neighborhoods
corresponds to a macroscopic scale
at which the dynamics of the thermodynamic properties can be observed
and also modeled.

The first law of thermodynamics
restates a fact known from mechanics,
namely that energy is a conserved quantity.
However,
instead of considering
directly the mechanical and electromagnetic energy
associated to microscopic degrees of freedom,
a thermodynamic model with local equilibrium
resorts to a phenomenological description.
This so-called internal energy
is considered to depend on
a number of extensive properties,
including entropy,
and possibly further mechanical or electromagnetic state variables
which are observable on the considered macroscopic scale.
These quantities then serve as
the state variables of a coarse-grained description.
Moreover,
temperature is defined as the derivative of
the internal energy with respect to entropy.
Similarly,
pressure is given by the derivative with respect to volume
and the chemical potential is the derivative with respect to mass,
see, e.g.,~\cite{1985Callen}.

Models in classical mechanics and electromagnetism
exhibit reversible dynamics,
meaning that one could in principle arrive again at
the initial condition by formally reversing the direction of time
(although for many systems this would be practically impossible due to chaos).
Incomplete knowledge about the microscopic state
immediately leads to an irreversible dynamics
and
an irreversible dynamics in turn leads to a growth of entropy,
as information
(e.g.~about the initial state of the system)
is lost as the system evolves.
This is reflected by
the second law of thermodynamics
which states that
the entropy of an isolated system never decreases
and that the system evolves towards a state of maximum entropy,
see, e.g.,~\cite{2005Oettinger}.
In other words,
the uncertainty about the microscopic state
can only grow, not decrease.

For models with local equilibrium,
the irreversible dynamics
can be expressed in terms of
so-called thermodynamic fluxes
that determine instantaneous changes
of coarse-grained state variables.
The fluxes are expressed as
a function of
so-called thermodynamic forces,
which are given by local differences in intensive quantities,
see, e.g.,~\cite{2005Oettinger,1984GrootMazur}.
For instance,
temperature differences cause heat flux,
according to Fourier's law,
and
differences in chemical potential cause a diffusive mass flux,
according to Fick's law.
Further,
one may observe cross effects such as
a diffusive mass flux caused by temperature differences (Soret effect)
and likewise
a heat flux caused by differences in chemical potential (Dufour effect).
Onsager's reciprocal relations state that
the relation between fluxes and forces
must possess a certain symmetry~\cite{1931Onsager}.
Specifically,
the ratio of
the mass flux caused by the Soret effect
and
the respective temperature difference
is equal to
the ratio of
the heat flux caused by the Dufour effect
and
the respective difference in chemical potential.

\subsection{Related work}%

\Cref{fig:related_work} gives
a high-level view of this subsection.
\begin{figure}[ht]
  \centering
  \includegraphics[width=0.78\textwidth]{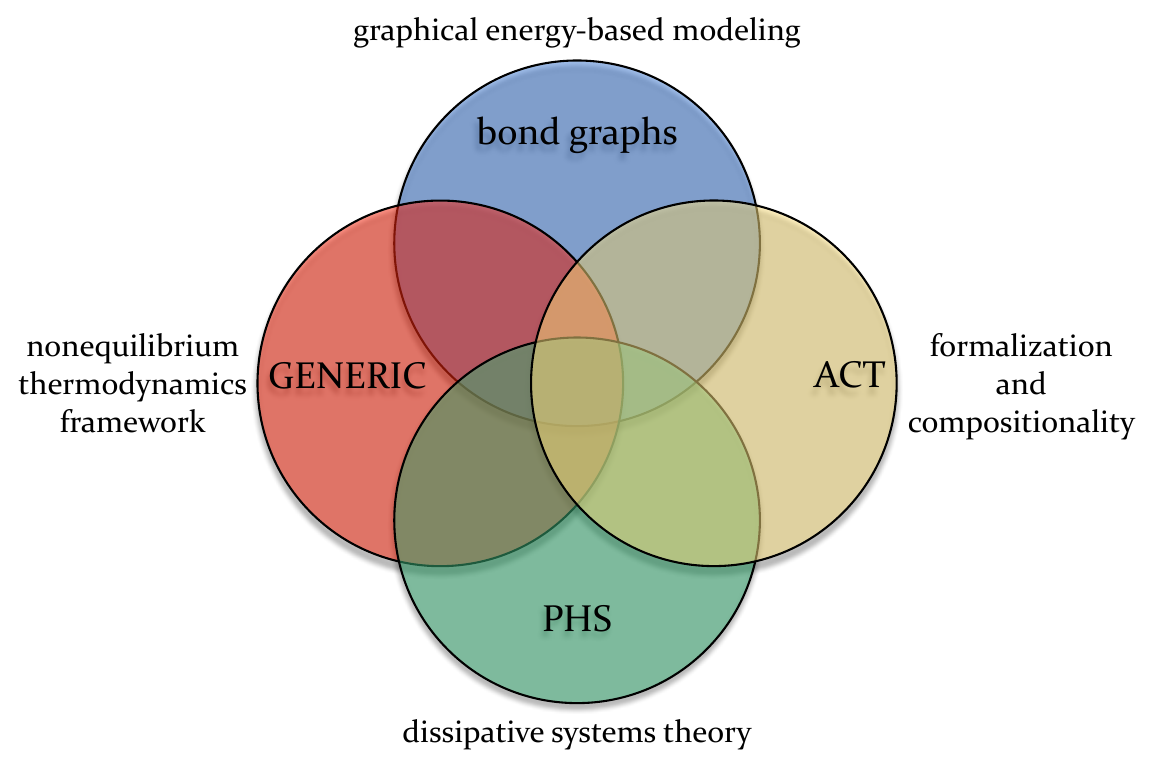}
  \caption{%
    Informal Venn diagram illustrating that
    EPHS integrates ideas from
    graphical and energy-based modeling
    of physical systems
		with bond graphs
		(see, e.g.,~\cite{1961Paynter,2010Borutzky}),
    the metriplectic/GENERIC framework for
		nonequilibrium thermodynamics
		(see, e.g.,~\cite{1984Morrison,2005Oettinger}),
    port-Hamiltonian theory
    for open, dissipative systems (PHS)
		(see, e.g.,~\cite{2014SchaftJeltsema}),
    and applied category theory (ACT) research on
    the formalization of graphical languages
    and compositional dynamical systems
		(see, e.g.,~\cite{2019FongSpivak,2013Spivak,2015BaezErberle,2020Libkind,2023Myers}).
  }%
  \label{fig:related_work}
\end{figure}

\subsubsection{Bond graphs}

Bond graphs provide
a graphical notation
for representing multi-domain models,
particularly those
involving mechanical and electrical systems,
see, e.g.,~\cite{1961Paynter,2010Borutzky}.
The main idea is that elements interact
via so-called bonds that represent links for energy exchange.
Each bond is associated with
two dynamic variables,
called flow and effort,
whose product
gives the exchanged power.
Bonds can be connected through two types of junctions:
$0$-junctions,
where the flow variables of all connected ports balance
while all effort variables are equal,
and $1$-junctions,
where the effort variables balance
while all flow variables are equal.
In both cases,
the net power at a junction is zero.
The elements are commonly generalized
capacitors, inductors, transformers, resistors, etc.
The junctions are then seen as
generalized Kirchhoff's circuit laws.

\subsubsection{Port-Hamiltonian systems}

Port-Hamiltonian Systems (PHS)
were originally inspired by
relating
the dynamical systems obtained from
\textit{generalized bond graphs}\footnotemark
to Hamiltonian mechanics~\cite{1991Maschke}.
\footnotetext{
  While bond graphs commonly use
  generalized capacitors and inductors as storage elements,
  so-called \emph{generalized bond graphs} use
	only generalized capacitors
  together with gyrator elements~\cite{1982Breedveld}.
}%
While many authors make use of
PHS in input-state-output form,
PHS generally constitute an acausal systems theory,
also inspired by
Willems' behavioral approach~\cite{1974Willems,2007Willems}.
The dynamics of networks composed of
energy storing elements
and reversible coupling elements
(generalized transformers and gyrators)
can,
from a geometric perspective,
be formalized
as Hamiltonian systems
whose Poisson structure
reflects the network topology~%
\cite{1992MaschkeSchaftBreedveld}.
The generalization to systems with
algebraic constraints
and/or external port variables
is achieved
using Dirac structures~%
\cite{1995SchaftMaschke}.
These offer a relational approach that unifies
Poisson and presymplectic structures
(see, e.g.,~\cite{1990Courant,2013Bursztyn}).
By modeling resistive elements
as relations imposed on external port variables,
Hamiltonian dynamics can be combined with
(dissipative) gradient dynamics.
A key feature of PHS is their composability:
interconnecting simpler systems
yields a more complex one.
This is achieved by composing the Dirac structures
of the individual systems with an
`interconnecting Dirac structure'
(see, e.g.,~\cite{2014SchaftJeltsema}).
Another key feature is that
every PHS satisfies a power-balance equation,
stating that
at every instant
the sum of the stored energy and the dissipated energy
equals the supplied energy.
As the dissipated energy cannot be negative,
PHS are inherently passive.
In this context, the term `energy'
stems from the `Hamiltonian' storage function.
For EPHS, this function is interpreted as the stored exergy,
making the power-balance equation
a statement about rates of exergy transfer.
This interpretation fully reconciles
the inherent passivity of PHS
with the first and second law of thermodynamics,
as exergy is indeed dissipated (or destroyed) by irreversible dynamics.

\subsubsection{Metriplectic and GENERIC framework}

Metriplectic systems
as well as
the \emph{general equation for reversible-irreversible coupling} (GENERIC)
refer to a framework
for combining reversible and irreversible dynamics
in a thermodynamically consistent manner,
see, e.g.~\cite{1984Morrison,1984Grmela1,1984Grmela2,1984Kaufman,1997GrmelaOettinger}.
The reversible part is modeled as
a Hamiltonian dynamics,
which inherently conserves energy,
while
the irreversible part is modeled as
a (generalized) gradient flow,
which inherently drives the system toward maximum entropy.
Two additional conditions ensure that
the reversible dynamics conserves entropy,
and the irreversible dynamics conserves energy.
As a result, the combined dynamics respects
both the first law and the second law
(see, e.g.,~\cite{2005Oettinger,2018PavelkaKlikaGrmela}).
While many well-known physical models
can be identified as
metriplectic systems or instances of the GENERIC,
the framework is particularly useful
as a guide for developing new thermodynamic models.

\subsubsection{Applied category theory}

Category theory
was first applied in
algebraic geometry and topology
and now is used to
organize many branches of mathematics
(see, e.g.,~\cite{1998Maclane}).
In recent decades,
category theory has grown beyond pure math
to organize scientific and engineering disciplines,
giving birth to
applied category theory
(see, e.g.~\cite{2019FongSpivak}).
The following ideas from
categorical systems theory
(see, e.g.,~\cite{2023Myers})
have inspired our work.
\begin{enumerate}
  \item{%
    The shift of focus from
    individual systems to
    how systems relate to each other.
  }
  \item{%
    An isolated system is
    a system with empty interface.
  }
  \item{%
    The description of systems involves
    \emph{syntax} and \emph{semantics}.
    Syntax consists of
    combinatorial data
    that can be manipulated computationally,
    while semantics are given by geometric objects
    that exist only platonically.
    %
    %
    An explicit understanding of the translation
    from syntax to semantics
    serves as a guide for computer implementation.
  }
\end{enumerate}

A prominent line of research
formalizes graphical languages such as
block diagrams~\cite{2015BaezErberle},
Petri nets~\cite{2017BaezPollard},
junctions of bond graphs~\cite{2017Coya}, and
passive linear circuits~\cite{2018BaezFong}.
In each case,
the graphical syntax is defined as
some sort of string diagrams,
using a generators and relations approach.

Another line of research uses the theory of operads to
define a graphical syntax
that supports hierarchical (de-)composition
(see, e.g.,~\cite{%
2020BaezFoleyMoellerPollard,%
2021FoleyBreinerSubrahmanianDusel%
}).
Operads organize formal operations
with finitely many inputs (subsystems)
and one output (composite system)
(see, e.g.,~\cite{2004Leinster,2016Yau}).
As proposed in~\cite{2022LohmayerLeyendecker1},
the EPHS syntax,
is based on
the operad of undirected wiring diagrams,
which initially appeared in~\cite{2013Spivak},
together with a relational semantics
used to express database queries,
see also~\cite{2014Spivak,2018Yau}.
Directed wiring diagrams have been used to formalize
nested systems of
Moore machines and ODEs with inputs and outputs~%
\cite{%
2015VagnerSpivakLerman,%
2019SchultzSpivakVasilakopoulou%
}.
A framework for
discrete and continuous dynamical systems
with directed and undirected notions of composition
is presented in~\cite{2021LibkindBaasPattersonFairbanks}.
The composition of
lossless PHS
is studied
from a categorical perspective
in~\cite{2022Lynch}.

\subsection{Outline}%

\Cref{sec:interfaces}
defines system interfaces.
Based on this,
\cref{sec:interconnection-patterns}
defines interconnection patterns.
\Cref{sec:interconnection-semantics}
shows how an interconnection pattern
implies a relation between the port variables
of its interfaces.
\Cref{sec:exergy}
discusses how
exergy quantifies
the energy stored in a system
that is available for doing work.
Based on this,
\cref{sec:primitive-systems}
introduces the primitive systems.
\Cref{sec:composite-systems}
defines composite systems
and
how they are formed
using the graphical syntax.
\Cref{sec:thermodynamic-consistency}
shows that EPHS models are thermodynamically consistent.
\Cref{sec:example} presents
an example of
a direct current (DC) motor model,
along with
a comparison to
a classical bond graph.
\Cref{sec:discussion} concludes with
a discussion.

\section{Interfaces}%
\label{sec:interfaces}

While the narrative of discovery,
usually proceeds from systems to interfaces,
logically speaking,
the syntax needs to be introduced before the semantics
and hence we start with interfaces.
Graphically,
the interface of a system is a box
with a set of ports drawn around it.
While we could treat such sets abstractly,
as is often done in applied category theory,
for practical reasons
we instead treat them as sets of names
called \emph{namespaces}.

\subsection{Namespaces}%

\begin{definition}
	A \textbf{string} is a non-empty list of characters
	and
	a \textbf{name} is a list of strings.
\end{definition}

We write names
using a monospace font
and
with strings separated by dots.
For instance,
the name
consisting of the list of strings ``oscillator'', ``spring'', ``q''
is written as $\mathtt{oscillator.spring.q}$.
Further,
we use $[]$ to denote the empty name,
i.e.~the empty list of strings.
Given two names $n$ and $n'$,
we denote their concatenation by $n \ast n'$.
For example,
$\mathtt{a.b} \ast \mathtt{c} = \mathtt{a.b.c}$.

\begin{definition}
	Given two names $p$ and $n$,
	we say that $p$ is a \textbf{prefix} of $n$
	and write $p \leq n$
	if there exists any name $s$ such that $n = p \ast s$.
	In case $s \neq []$,
	we say that $p$ is a \textbf{strict prefix} of $n$
	and write $p < n$.
\end{definition}

\begin{example}
	It holds that
	$\mathtt{a.b} < \mathtt{a.b.c}$
	and
	$\mathtt{a.b} \leq \mathtt{a.b}$.
	Also,
	$[] < \mathtt{a}$.
	It does \emph{not} hold that
	$\mathtt{a.a} \leq \mathtt{a}$.
\end{example}

\begin{definition}
	A set of names $N$ is \textbf{prefix-free}
	if for all $n, n' \in N$,
	$n$ is \emph{not} a strict prefix of $n'$.
	A \textbf{namespace} is a finite, prefix-free set of names.
	Let $\NamespaceSet$ refer to the set of all namespaces.
\end{definition}

\begin{example}
	The set of names
	$\{ \mathtt{a}, \, \mathtt{b} \}$ is a namespace
	and so is
	$\{ \mathtt{a.a}, \, \mathtt{a.b} \}$.
	Since
	$\mathtt{a} < \mathtt{a.b}$,
	the set of names
	$\{ \mathtt{a}, \, \mathtt{a.b} \}$
	is not prefix-free and hence not a namespace.
\end{example}

Sets of names are equivalent to prefix trees,
called \emph{tries} in computer science~\cite{1960Fredkin}.

\begin{example}
	The namespace
	$\{ \mathtt{a.b.b}, \, \mathtt{a.b.c}, \, \mathtt{b.a}, \, \mathtt{b.y}, \, \mathtt{c} \}$
	corresponds to the following trie:
	\begin{equation*}
		\includegraphics{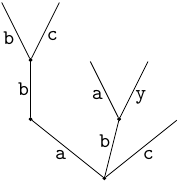}
		\qedhere
	\end{equation*}
\end{example}

Note that
for a namespace $N$,
the leaf nodes of its trie correspond to the names in $N$,
while for a set of names that is not prefix-free,
at least one name corresponds to an internal node.
In the following,
we think of namespaces both as sets of names and as tries.

Given several namespaces,
we could take their union as sets.
However, this runs the risk of \emph{name conflicts},
as it is possible for a given name in the union to have come from either of the original sets.
It is for this reason that we introduce the concept of a \emph{named sum},
which can be thought of as a practical implementation of \emph{disjoint union}.

\begin{definition}
	Given a namespace $N$,
	and a function $f \colon N \to \NamespaceSet$,
	the \textbf{named sum} of $f$ is
	\begin{equation*}
		\sum f
		\: = \:
		\Bigl\{
			\,
			n \ast a
			\, \mid \,
			n \in N, \:
			a \in f(n)
			\,
		\Bigr\}
		\,.
		\qedhere
	\end{equation*}
\end{definition}

We also sometimes write expressions like
$\sum \left[ \mathtt{a} \mapsto I_1, \, \mathtt{b} \mapsto I_2 \right]$,
which refers to the named sum of the anonymous function defined by
\begin{equation*}
	x
	\: \mapsto \:
	\begin{cases}
		I_1 & \text{if } x = \mathtt{a} \\
		I_2 & \text{if } x = \mathtt{b}
		\,.
	\end{cases}
\end{equation*}

\begin{example}
	Given
	$N = \{ \mathtt{p}, \, \mathtt{q} \}$,
	we have
	\begin{equation*}
		\sum \left[ \mathtt{l} \mapsto N, \, \mathtt{r} \mapsto N \right]
		\: = \:
		\bigl\{
		\mathtt{l.p}, \, \mathtt{l.q}, \,
		\mathtt{r.p}, \, \mathtt{r.q}
		\bigr\}
		\,.
		\qedhere
	\end{equation*}
\end{example}

\begin{proposition}
	Given a namespace $N$
	and a function
	$f \colon N \to \NamespaceSet$,
	the named sum
	$\sum f$ is again a namespace, that is to say, it is prefix-free.
\end{proposition}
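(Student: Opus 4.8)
The plan is to verify the two defining properties of a namespace separately: finiteness and prefix-freeness. Finiteness is immediate, since $\sum f = \bigcup_{n \in N} \{ n \ast a \mid a \in f(n) \}$ is a finite union (as $N$ is finite) of finite sets (as each $f(n)$ is a namespace, hence finite). The substance of the statement is prefix-freeness, and the approach is to assume toward a contradiction that some element of $\sum f$ is a strict prefix of another, and then trace this back to a violation of prefix-freeness of either $N$ itself or one of the $f(n)$.

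Before doing that, I would record two elementary facts about names viewed as lists of strings. First, concatenation is associative and left-cancellative, so that $n \ast a \leq n \ast a'$ holds exactly when $a \leq a'$, and strictly exactly when $a < a'$. Second, and this is the key combinatorial fact, the prefixes of a fixed name are linearly ordered by $\leq$: if $p \leq m$ and $q \leq m$, then $p \leq q$ or $q \leq p$, because each of $p$ and $q$ is determined as an initial segment of the list $m$, and of two such initial segments the shorter is necessarily an initial segment of the longer. Both facts follow by a straightforward induction on lists.

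With these in hand, suppose toward a contradiction that $n \ast a < n' \ast a'$ for some $n, n' \in N$, $a \in f(n)$, and $a' \in f(n')$. Since $n \leq n \ast a$ and $n \ast a \leq n' \ast a'$, transitivity gives $n \leq n' \ast a'$; and trivially $n' \leq n' \ast a'$. Thus $n$ and $n'$ are both prefixes of the single name $n' \ast a'$, so by linearity they are comparable: $n \leq n'$ or $n' \leq n$. Since $n, n' \in N$ and $N$ is prefix-free, neither $n < n'$ nor $n' < n$ can hold, so comparability forces $n = n'$. But then $n \ast a < n \ast a'$, and by left-cancellation $a < a'$ with $a, a' \in f(n)$, contradicting that $f(n)$ is prefix-free. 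Hence no element of $\sum f$ is a strict prefix of another, and $\sum f$ is a namespace.

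I expect the only genuine obstacle to be the careful statement and inductive proof of the linearity of the prefix order on a fixed name, together with the left-cancellation property of concatenation; once those list lemmas are in place, the rest is bookkeeping with the definitions of $\ast$, $\leq$, and $<$ from the preceding paragraphs. The conceptual heart of the argument is the observation that $n$ and $n'$ are forced to be prefixes of one common name $n' \ast a'$, which is what lets prefix-freeness of $N$ collapse the two indices into a single case handled by prefix-freeness of $f(n)$.
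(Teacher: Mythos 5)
Your proof is correct, but it takes a genuinely different route from the paper's. The paper proves the proposition in a single sentence by passing to the trie picture: a finite set of names is prefix-free exactly when every name labels a \emph{leaf} of the associated trie, and the trie of $\sum f$ is obtained from the trie of $N$ by grafting the trie of $f(n)$ onto each leaf $n$ --- an operation that manifestly keeps all the names of $\sum f$ at leaves. You instead work directly with names as lists of strings; your key lemma, that the prefixes of a fixed name are linearly ordered (so that $n$ and $n'$ are comparable once both are seen to be prefixes of the single name $n' \ast a'$), is exactly what lets prefix-freeness of $N$ collapse the two indices into one, after which left-cancellation of $\ast$ reduces the contradiction to prefix-freeness of $f(n)$. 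Your version is more elementary and self-contained --- it does not lean on the informal names-as-tries correspondence the paper sets up only in prose --- at the cost of having to state and prove the two list lemmas; the paper's version buys brevity and a clean structural intuition but leaves those combinatorial details implicit. You also explicitly verify finiteness, which the paper silently omits as immediate.
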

\begin{proof}
	The trie $\sum f$ is simply the trie $N$ with,
	for each leaf node $n \in N$,
	the trie $f(n)$ grafted onto it.
\end{proof}

As shown next,
we can use namespaces and named sums
to develop a simple and effective formalism
for working with hierarchically-structured data.

\subsection{Directories}%

The interface of a system is given by a namespace,
where each name refers to a port.
However, this is not all,
as each port has two associated attributes.
This motivates the definition of \emph{directories}:

\begin{definition}
	Given a set $A$ (of possible attributes),
	we define
	\begin{equation*}
		\Dtry(A)
		\: = \:
		\Bigl\{
			\,
			(N, \, \tau)
			\, \mid \,
			N \in \NamespaceSet, \,
			\tau \colon N \to A
			\,
		\Bigr\}
		\,.
	\end{equation*}
  An element $(N, \, \tau) \in \Dtry(A)$
  is called a \textbf{directory} of elements of $A$.
  We also just write
  $\tau \in \Dtry(A)$
  with $N = \dom(\tau)$.
\end{definition}

The name `directory' points to the similarity
with hierarchical file systems,
where complete paths (names) are associated to
file contents (attributes).

A directory $(N, \, \tau) \in \Dtry(A)$
can be seen as
a trie $N$,
where each leaf node $n \in N$
has an associated attribute $\tau(n) \in A$.
Hence,
the concept of named sum
carries over
from $\NamespaceSet$
to $\Dtry(A)$,
for any fixed set $A$.
Named sum can then be seen as
a flattening operation,
analogous to flattening a list of lists.
Specifically, for
the set of all lists of elements of $A$
\begin{equation*}
	\List(A)
	\: = \:
	\Bigl\{
		\,
		(n, \, f)
		\, \mid \,
		n \in \mathbb{N}, \,
		f \colon \{ 1, \, \ldots, \, n \} \to A
		\,
	\Bigr\}
	\,,
\end{equation*}
flattening is a map
$\List(\List(A)) \to \List(A)$,
which for instance sends $[[1, \, 2], \, [3, \, 4]]$
to $[1, \, 2, \, 3, \, 4]$.
Similarly,
named sum is a map
$\Dtry(\Dtry(A))$ to $\Dtry(A)$.
An input to this is
a trie $N$,
which has associated to each leaf node $n \in N$
another trie $N_n$,
whose leaves carry elements of $A$.
The named sum operation then simply
grafts onto each node $n \in N$ its associated trie $N_n$,
carrying along the elements of $A$ at the leaves.
In terms of category theory,
the flattening operation,
together with a unit operation,
makes
$\List$ and $\Dtry$
examples of monads.

\subsection{Bundle of quantities}%

The first attribute
associated to a port
represents
the physical quantity that can be exchanged via the port,
e.g.~momentum or entropy,
along with
a state space
in which values of the quantity live.
For instance
a momentum variable can take values in $\bR$
if motion along a single axis is considered,
or it can take values in the vector space $\mathfrak{se}^*(3)$
if rigid body motions in $3$-dimensional Euclidean space are considered,
etc.

\begin{definition}
	Let $\SpaceSet$ be
  a finite set of state spaces
  (i.e.~smooth manifolds)
	and for each space $\cX \in \SpaceSet$,
	let $\QuantitySet_\cX$ be
  a finite set of physical quantities
	taking values in $\cX$.
	Further,
	let $\QuantitySet$ be
  the disjoint union of all quantities, i.e.
	\begin{equation*}
		\QuantitySet
		\: = \:
		\bigsqcup_{\cX \in \SpaceSet} \QuantitySet_\cX
		\,.
	\end{equation*}
	This forms
	a \textbf{bundle of quantities}
  (in the category of finite sets),
	whose projection
	\begin{equation*}
		\mathrm{space} : \QuantitySet \twoheadrightarrow \SpaceSet
	\end{equation*}
	returns the underlying space of a quantity.
\end{definition}

\begin{example}
	The bundle of quantities given by
	\begin{equation*}
		\begin{split}
			\SpaceSet
			\: &= \:
			\{ \bR, \, \mathrm{SE}(3), \, \mathfrak{se}^*(3) \}
			\\
			\QuantitySet_\bR
			\: &= \:
			\{ \mathtt{displacement}, \, \mathtt{momentum}, \, \mathtt{entropy}, \, \mathtt{volume} \}
			\\
			\QuantitySet_{\mathrm{SE}(3)}
			\: &= \:
			\{ \mathtt{pose} \}
			\\
			\QuantitySet_{\mathfrak{se}^*(3)}
			\: &= \:
			\{ \mathtt{momentum} \}
			,
		\end{split}
	\end{equation*}
	would be sufficient for
	some spatially-lumped models
	from mechanics and thermodynamics.
	A port with associated quantity
	$q = (\bR, \, \mathtt{entropy}) \in \QuantitySet$
	can exchange entropy
	and its associated state space is
	$\mathrm{space}(q) = \bR \in \SpaceSet$.
	\label{ex:bundle_of_port_types}
\end{example}

We note that
$(\bR, \, \mathtt{energy}) \notin \QuantitySet$,
because
in the context of EPHS,
internal energy is a function of the state variable entropy,
and not vice versa,
see also~\cite{2021LohmayerKotyczkaLeyendecker}.

To distinguish
reversible and irreversible dynamics
later on in~\cref{ssec:reversible,ssec:irreversible},
we make use of the concept of \emph{parity},
see~\cite{2014PavelkaKlikaGrmela,2021LohmayerKotyczkaLeyendecker} for details.

\begin{definition}
	The \textbf{parity} of each quantity
	with respect to time-reversal transformation
	is given by a function
	$\mathrm{P} : \QuantitySet \rightarrow \{ -1, \, +1 \}$.
\end{definition}

\begin{example}
	For the bundle of quantities
	from~\cref{ex:bundle_of_port_types},
	we have
	\begin{equation*}
		\begin{alignedat}{1}
			&\mathrm{P}(\bR, \, \mathtt{displacement}) = +1
			\\
			&\mathrm{P}(\bR, \, \mathtt{momentum}) = -1
			\\
			&\mathrm{P}(\bR, \, \mathtt{entropy}) = +1
			\\
			&\mathrm{P}(\bR, \, \mathtt{volume}) = +1
		\end{alignedat}
	\end{equation*}
	because
	displacement, entropy, and volume do not change instantaneously
	when a recording
	is suddenly played backwards (positive parity),
	whereas
	momentum changes its sign (negative parity).
\end{example}

A comprehensive collection of physical quantities
with their associated parities
would be part of a software implementation.

\subsection{Interfaces as directories}%

Each port has two attributes:
the previously discussed quantity,
and a Boolean attribute
that indicates whether the port is a power port or a state port.
Indeed,
only power ports allow for an exchange of energy,
based on an exchange of the respective quantity.
State ports are required whenever
the behavior of one system
depends on the state of another system,
although no energy is exchanged between the two.

\begin{definition}
	An \textbf{interface} is a directory of elements of
	$\QuantitySet \times \{ \mathsf{p}, \mathsf{s} \}$, i.e.~let
	\begin{equation*}
		\InterfaceSet
		\: = \:
		\Dtry \bigl( \QuantitySet \times \{ \mathsf{p}, \mathsf{s} \} \bigr)
		\,.
	\end{equation*}
	Here, element $\mathsf{p}$ stands for power port
	and $\mathsf{s}$ stands for state port.
\end{definition}

\begin{example}
	Let $I = (N, \, \tau)$ be an interface with
	namespace
	$N = \{ \mathtt{p_1}, \, \mathtt{p_2}, \, \mathtt{q}\}$
	and attributes given by
	$
		\tau(\mathtt{p_1}) =
		\tau(\mathtt{p_2}) =
		\left( (\bR, \, \mathtt{momentum}), \, \mathsf{p} \right)
	$
	and
	$
		\tau(\mathtt{q}) =
		\left( (\bR, \, \mathtt{displacement}), \, \mathsf{s} \right)
	$.
	Ports $\mathtt{p_1}$ and $\mathtt{p_2}$
	are power ports
	that can exchange kinetic energy by exchanging momentum,
	along with information about the current amount of momentum.
	Port $\mathtt{q}$
	is a state port
	that can only exchange information about the current displacement.
	\label{ex:interface_lever}
\end{example}


\subsection{Bundles of port variables}%
\label{ssec:port_variables}

A state port has one port variable called state.
A power port has two additional port variables called flow and effort.
Given an interface,
the port variables of all its ports
combined
take values in a vector bundle,
called
the \emph{bundle of port variables}
associated to the interface.

\begin{definition}
	Let
	$I = (N, \, \tau)$ be an interface
	and let
	$p \in I$ be a port,
	i.e.~$p \in N$.
	\\
	The \textbf{state space associated to port} $p$ is
	\begin{equation*}
		\cX_p
		\: = \:
		(\mathrm{space} \circ \mathrm{pr}_1 \circ \tau)(p)
		\,,
	\end{equation*}
	where $\mathrm{pr}_1$ projects onto the first attribute, i.e.~the quantity.
	\\
	The \textbf{state space associated to interface} $I$ is
	given, up to isomorphism, by the Cartesian product
	\begin{equation*}
		\cX_I
		\: = \:
		\prod_{p \, \in \, I} \cX_p
		\,.
	\end{equation*}
	To define $\cX_I$ strictly,
	we assume that
	the elements of $I$ are in lexicographic order
	and that the Cartesian product preserves this order.
\end{definition}

\begin{example}
	Let
	$I_\text{gas} = (N_\text{gas}, \, \tau_\text{gas})$
	be an interface with namespace
	$N_\text{gas} = \{ \mathtt{s}, \, \mathtt{v} \}$
	and attributes given by
	$
		\tau_\text{gas}(\mathtt{s}) =
		\left( (\bR, \, \mathtt{entropy}), \, \mathsf{p} \right)
	$,
	$
		\tau_\text{gas}(\mathtt{v}) =
		\left( (\bR, \, \mathtt{volume}), \, \mathsf{p} \right)
	$.
	Then, its associated state space is
	$
		\cX_{I_\text{gas}} =
		\cX_{\mathtt{s}} \times \cX_{\mathtt{v}} =
		\bR \times \bR
	$.
	Let
	$I_\text{mass} = (N_\text{mass}, \, \tau_\text{mass})$
	be another interface with
	$N_\text{mass} = \{ \mathtt{p} \}$
	and
	$
		\tau_\text{mass}(\mathtt{p}) =
		\left( (\bR, \, \mathtt{momentum}), \, \mathsf{p} \right)
	$.
	Then,
	the named sum
	\begin{equation*}
		I
		\: = \:
		(N, \, \tau)
		\: = \:
		\sum [\mathtt{gas} \mapsto I_\text{gas}, \, \mathtt{mass} \mapsto I_\text{mass}]
	\end{equation*}
	has the namespace
	$N = \{ \mathtt{gas.s}, \, \mathtt{gas.v}, \, \mathtt{mass.p} \}$
	with attributes given by
	$
		\tau(\mathtt{gas.s}) =
		\left( (\bR, \, \mathtt{entropy}), \, \mathsf{p} \right)
	$,
	$
		\tau(\mathtt{gas.v}) =
		\left( (\bR, \, \mathtt{volume}), \, \mathsf{p} \right)
	$,
	$
		\tau(\mathtt{mass.p})) =
		\left( (\bR, \, \mathtt{momentum}), \, \mathsf{p} \right)
	$.
	We note that
	\begin{equation*}
		\cX_I
		\: = \:
		\cX_{I_\text{gas}}
		\times
		\cX_{I_\text{mass}}
		\,.
		\qedhere
	\end{equation*}
	\label{ex:interface}
\end{example}

Given any interface $I$,
each port $p \in I$
has an associated state variable
$p \mathtt{.x} \in \cX_p$.
If $p$ is a power port,
it additionally has
a flow variable $p \mathtt{.f}$
and
an effort variable $p \mathtt{.e}$.
All port variables implicitly depend on time.
As detailed in~\cref{ssec:storage},
the flow variable is the instantaneous rate of change of the state variable
and
the effort variable is the differential of a function of the state variable.
Hence, we have
$
	\left( p \mathtt{.x}, \, p \mathtt{.f}, \, p \mathtt{.e} \right)
	\in \bT \cX_p
$,
where
$
	\bT \cX_p =
	\mathrm{T} \cX_p \oplus \mathrm{T}^* \cX_p
$
denotes the direct sum (or Whitney sum)
of the respective tangent and cotangent bundles
(see, e.g.,~\cite{2012Lee}).
The flow and effort variables
are called power variables
because
their duality pairing
$\langle p \mathtt{.e} \mid p \mathtt{.f} \rangle$
yields the instantaneous power
that is
exchanged through port $p$.

Now we can formally define
the bundle of port variables associated to an interface:

\begin{definition}
	Let $I$ be an interface.
	Based on the Boolean attribute of its ports,
	we can think of its associated state space $\cX_I$ as
	the Cartesian product
	$\cX_{I,p} \times \cX_{I,s}$,
	where
	$\cX_{I,p}$ is
	the \textbf{state space of all power ports}
	and
	$\cX_{I,s}$ is
	the \textbf{state space of all state ports}.
	Accordingly,
	we let
	$
		x = (x_p, \, x_s) \in
		\cX_{I,p} \times \cX_{I,s}
	$
	denote all state variables,
	and
	we let
	$
		(f, \, e) \in
		\bT_{x_p} \cX_{I,p}
	$
	denote the flow and effort variables.
	The \textbf{bundle of port variables}
	$\mathrm{pr}_I : \cP_I \twoheadrightarrow \cX_I$
	associated to
	interface $I$
	is defined, up to isomorphism, as
	the pullback bundle of
	$\bT \cX_{I,p}$
	along the projection
	$
		\mathrm{pr}_p :
		\cX_{I,p} \times \cX_{I,s} \twoheadrightarrow
		\cX_{I,p}
	$.
	Hence,
	$\cP_I \cong  \bT \cX_{I,p} \times \cX_{I,s}$.
	\begin{equation*}
		\begin{tikzcd}[ampersand replacement=\&]
			\mlnode{
				$\bT \cX_{I,p}$
				\\
				$\color{gray} \ni (x_p, \, f, \, e)$
			}
			\&\&
			\mlnode{
				$\color{violet} \cP_I$
				\\
				$\color{gray} \ni (x, \, f, \, e)$
			}
			\\
			\mlnode{
				$\cX_{I,p}$
				\\
				$\color{gray} \ni x_p$
			}
			\&\&
			\mlnode{
				$\cX_I$
				\\
				$\color{gray} \ni x$
			}
			\&\&
			\mlnode{
				$\cX_{I,s}$
				\\
				$\color{gray} \ni x_s$
			}
			\arrow["{\mathrm{pr}_p}"', two heads, from=2-3, to=2-1]
			\arrow["{\mathrm{pr}_s}", two heads, from=2-3, to=2-5]
			\arrow[two heads, from=1-1, to=2-1]
			\arrow[two heads, color={violet}, from=1-3, to=1-1]
			\arrow["{\mathrm{pr}_\mathtt{I}}", two heads, color={violet}, from=1-3, to=2-3]
			\arrow["\lrcorner"{text={violet}, anchor=center, pos=0.125, rotate=-90}, draw=none, from=1-3, to=2-1]
		\end{tikzcd}
	\end{equation*}
	Again,
	to define the bundle of port variables strictly,
	we assume that the elements of $I$ are in lexicographic order
	and that the port variables are written in the same order as the ports.
\end{definition}

\begin{example}
	Consider again the interface from~\cref{ex:interface_lever}.
	Its bundle of port variables
	$\cP_I \twoheadrightarrow \cX_I$
	has the total space (or port space)
	\begin{equation*}
		\cP_I
		\: = \:
		\bT \bR \times \bT \bR \times \bR
		\: \ni \:
		\bigl(
		(\mathtt{p_1.x}, \, \mathtt{p_1.f}, \, \mathtt{p_1.e}), \,
		(\mathtt{p_2.x}, \, \mathtt{p_2.f}, \, \mathtt{p_2.e}), \,
		\mathtt{q.x}
		\bigl)
	\end{equation*}
	and the base space (or state space)
	\begin{equation*}
		\cX_I
		\: = \:
		\bR \times \bR \times \bR
		\: \ni \:
		\left(
		\mathtt{p_1.x}, \,
		\mathtt{p_2.x}, \,
		\mathtt{q.x}
		\right)
		.
		\qedhere
	\end{equation*}
\end{example}

\section{Syntax}%
\label{sec:interconnection-patterns}

As introduced in~\cref{ssec:syntax},
an interconnection pattern
expresses
how a composite system is formed
from a finite number of subsystems.
Syntax refers to
the entirety of such expressions and their composition,
i.e.~their hierarchical nesting.

\subsection{Interconnection patterns}%
\label{ssec:interconnection_patterns}

\begin{figure}[ht]
  \centering
  \includegraphics[width=0.9\textwidth]{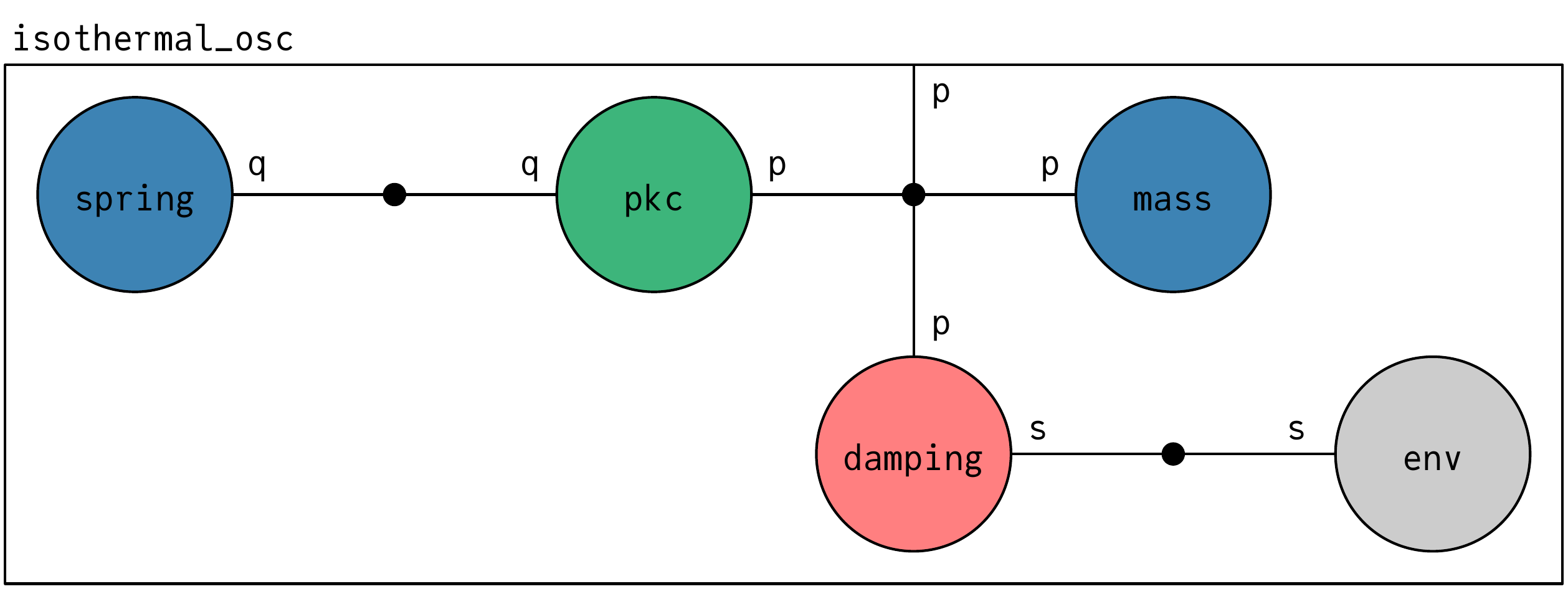}
  \caption{%
    Interconnection pattern
    of an isothermal oscillator model.
    Box $\mathtt{pkc}$ represents
    the reversible \emph{coupling} of
    the \emph{potential} energy domain
    represented by the junction on its left
    and
    the \emph{kinetic} energy domain
    represented by the junction on its right.
    Box $\mathtt{env}$ represents the
    isothermal environment,
    which absorbs the heat
    that is generated by the damping.
		The outer box represents the interface of
		the composite system.
    Its port $\mathtt{p}$
    exposes the kinetic energy domain,
    allowing for an external forcing of the oscillator model.
  }%
  \label{fig:isothermal_osc}
\end{figure}

\Cref{fig:isothermal_osc} shows
an interconnection pattern.
The mathematical content of such a diagram
is captured by the following definition.

\begin{definition}
  An \textbf{interconnection pattern}
  $(\beta, \, I, \, J)$
  is defined by
  the following data:
  \begin{enumerate}
    \item{
      A directory of interfaces
      $\beta \in \Dtry(\InterfaceSet)$.
      Names in $\dom(\beta)$ refer to \textbf{inner boxes},
      $\sum \beta$ is called \textbf{inner interface}
      and names in $\dom(\sum \beta)$ refer to \textbf{inner ports}.
    }
    \item{
      An interface $I$
      called \textbf{outer interface},
      whose ports are called \textbf{outer ports}.
    }
    \item{
      A partition $J$ of
      the combined interface
      $
      P =
      \sum \left[
        \mathtt{inner} \mapsto \sum \beta, \,
        \mathtt{outer} \mapsto I
      \right]
      $
      such that
      each subset or \textbf{junction}
      $j \in J$ contains
      \begin{itemize}
        \item{at least one inner port,}
        \item{at most one outer port,}
        \item{and all ports in $j$ have the same associated quantity. \qedhere}
      \end{itemize}
    }
  \end{enumerate}
  \label{def:pattern}
\end{definition}

To prevent name conflicts,
the definition distinguishes
inner ports and outer ports
with the prefixes $\mathtt{inner}$ and $\mathtt{outer}$.
While preventing name conflicts is crucial for
implementation of the modeling language,
we omit these prefixes
for conciseness
in what follows.

We graphically depict interconnection patterns as follows.
Each element $b \in \dom(\beta)$ is the name of an inner box,
which we draw as a circle.
The ports of its interface $\beta(b)$ are
drawn as lines emanating from the box.
The ports of the outer interface $I$ are
drawn as lines to the surrounding outer box.
Each subset $j \in J$
corresponds to a junction
drawn as a black dot.
The elements of $j \subseteq P$ are precisely the connected ports.
To avoid confusion, we recall that
the identifier \texttt{isothermal\_osc}
written on the top left of the outer box
refers to the entire pattern
(akin to a function name in programming),
rather than just to the outer box
(which similar to the return value of a function
does not need a name).

The pattern from~\cref{fig:isothermal_osc}
is redrawn in~\cref{fig:isothermal_osc_sugar},
firstly
to emphasize that
the color of the inner boxes
is merely an annotation
hinting at the nature of the considered subsystems
and secondly
to introduce an abbreviated way of writing port names.

\begin{figure}[ht]
  \centering
  \includegraphics[width=0.9\textwidth]{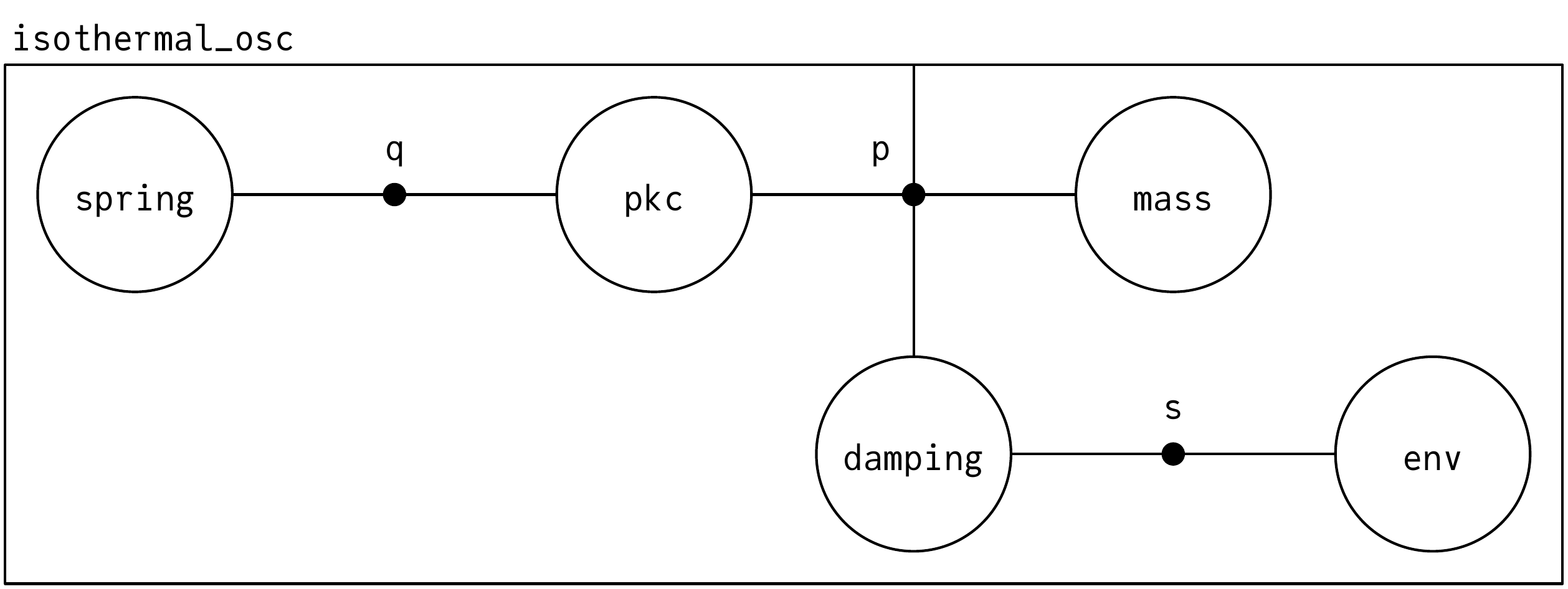}
  \caption{%
    Interconnection pattern of an isothermal oscillator model.
    Whenever all ports connected to a certain junction have the same name,
    we write the name only once at the junction.
  }%
  \label{fig:isothermal_osc_sugar}
\end{figure}

\begin{example}
  For the pattern
  $(\beta, \, I, \, J)$
  depicted in~\cref{fig:isothermal_osc_sugar},
  the namespace of inner boxes is
  \begin{equation*}
    \dom(\beta)
    \: = \:
    \{
      \mathtt{damping}, \,
      \mathtt{env}, \,
      \mathtt{mass}, \,
      \mathtt{pkc}, \,
      \mathtt{spring}
    \}
    \,.
  \end{equation*}
  The function
  $\beta : \dom(\beta) \rightarrow \InterfaceSet$
  is given by
  \begin{equation*}
    \begin{split}
      \mathtt{damping}
      \: &\mapsto \:
      (\{ \mathtt{p}, \, \mathtt{s} \}, \, \tau_\text{damping})
      \\
      \mathtt{env}
      \: &\mapsto \:
      (\{ \mathtt{s} \}, \, \tau_\text{env})
			\\
      \mathtt{mass}
      \: &\mapsto \:
      (\{ \mathtt{p} \}, \, \tau_\text{mass})
      \\
      \mathtt{pkc}
      \: &\mapsto \:
      (\{ \mathtt{p}, \, \mathtt{q} \}, \, \tau_\text{pkc})
      \\
      \mathtt{spring}
      \: &\mapsto \:
      (\{ \mathtt{q} \}, \, \tau_\text{spring})
    \end{split}
  \end{equation*}
  with
  $\tau_\text{damping}(\mathtt{p}) = ((\bR, \, \mathtt{momentum}),\, \mathsf{p})$, etc.
  Hence,
  the namespace of the inner interface $\sum \beta$ is
  \begin{equation*}
    \dom(\textstyle \sum \beta)
    \: = \:
    \{
      \mathtt{damping.p}, \,
      \mathtt{damping.s}, \,
      \mathtt{env.s}, \,
      \mathtt{mass.p}, \,
      \mathtt{pkc.p}, \,
      \mathtt{pkc.q}, \,
      \mathtt{spring.q}
    \}
    \,.
  \end{equation*}
  The namespace of the outer interface $I$ is
  $\{ \mathtt{p} \}$.
  The partition $J$ is given by
  \begin{equation*}
    \bigl\{
      \{
        \mathtt{damping.p}, \,
        \mathtt{mass.p}, \,
        \mathtt{pkc.p}, \,
        \mathtt{p}
      \}, \,
      \{
        \mathtt{pkc.q}, \,
        \mathtt{spring.q}
      \}, \,
      \{
        \mathtt{damping.s}, \,
        \mathtt{env.s}
      \}
    \bigr\}
    \,.
    \qedhere
  \end{equation*}
\end{example}

We may also denote an interconnection pattern
using the condensed notation
\begin{equation*}
  \begin{tikzcd}
    \sum \left[
      \mathtt{I_1} \mapsto I_1, \,
      \ldots, \,
      \mathtt{I_n} \mapsto I_n
    \right]
    \ar[r, twoheadrightarrow] &
    J
    \ar[r, hookleftarrow] &
    I.
  \end{tikzcd}
\end{equation*}
In this we implicitly let
$\dom(\beta) = \{ \mathtt{I_1}, \, \ldots, \, \mathtt{I_n} \}$
and
$\beta(\mathtt{I_i}) = I_i$.
The left arrow is a surjection
because each junction has at least one connected inner port
and
the right arrow is an injection
because each junction has at most one connected outer port.
Interconnection patterns are a special case of
\emph{undirected wiring diagrams}~\cite{2013Spivak},
which do not have
the injectivity and surjectivity conditions.

Multiports are just
multiple ports with a common prefix.
As demonstrated in~\cref{fig:emhd},
this enables
a \emph{syntactic sugar}
for expressing parallel connections
in a tidy and efficient manner.
We defer
the details to later work,
as the examples considered here make no use of this.

\subsection{Composition of interconnection patterns}%

Whenever
the outer interface of one pattern
and
the interface corresponding to an inner box of another pattern
are equal (up to a given renaming of the ports),
the former pattern can be substituted into the latter,
with no further data required.
This is called composition.
Having a compositional syntax
facilitates dealing with complex systems,
as it enables the encapsulation of subsystems.
These can be easily reused and replaced,
as long as interfaces match.
As an example,
the isothermal oscillator model
can be refined into
a nonisothermal oscillator model
based on the damper model in~\cref{fig:nonisothermal_damper}.

\begin{figure}[htbp]
  \centering
  \includegraphics[width=0.3\textwidth]{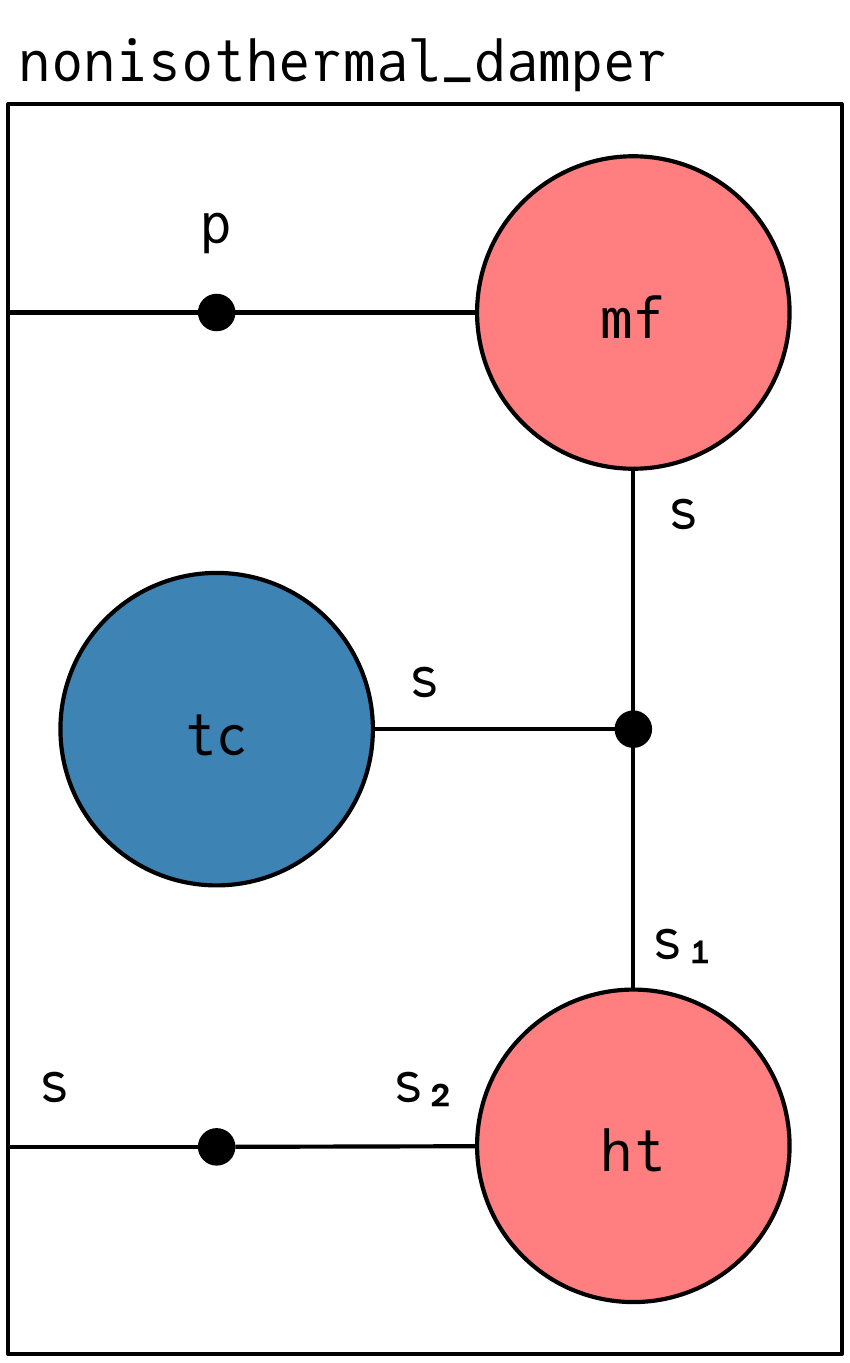}
  \caption{%
    Interconnection pattern of a nonisothermal damper model.
    Box $\mathtt{mf}$ represents
    a mechanical friction model
    connected to the kinetic energy domain
		represented by the outer port $\mathtt{p}$.
    Box $\mathtt{tc}$ represents
    a thermal capacity,
    which stores the heat generated by friction.
    Box $\mathtt{ht}$ represents
    a heat transfer model,
    which allows the thermal capacity to reach equilibrium
    with the thermal energy domain
		represented by the outer port $\mathtt{s}$.
  }%
  \label{fig:nonisothermal_damper}
\end{figure}

\begin{figure}[htbp]
  \centering
  \includegraphics[width=0.9\textwidth]{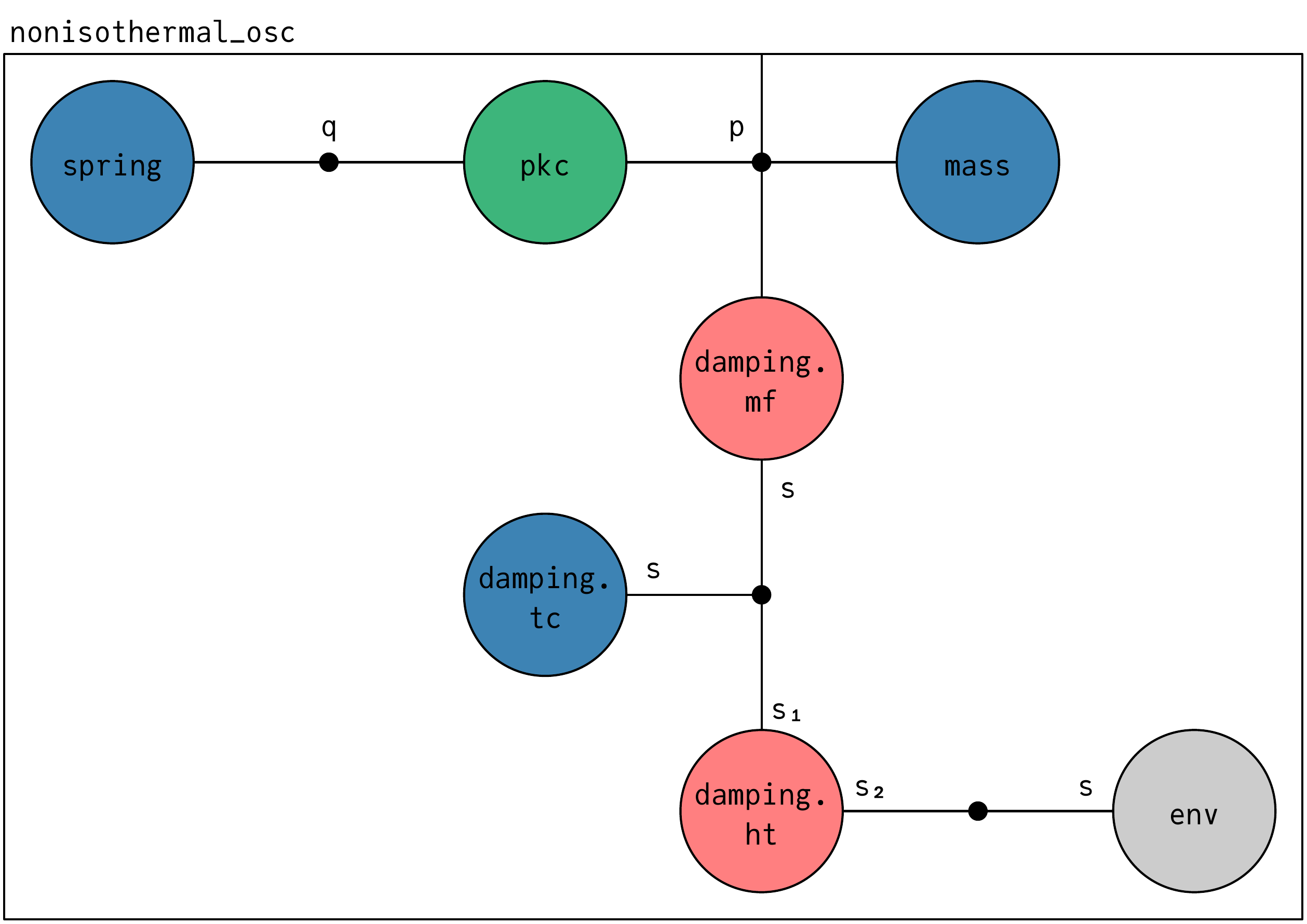}
  \caption{%
		Interconnection pattern of a nonisothermal oscillator model.
		Except for the graphical layout,
		the pattern is uniquely defined as
		the composite of the patterns in~%
		\cref{fig:isothermal_osc_sugar,fig:nonisothermal_damper}.
	}%
  \label{fig:nonisothermal_osc}
\end{figure}

Substituting the pattern in~\cref{fig:nonisothermal_damper}
into the pattern in~\cref{fig:isothermal_osc_sugar},
yields the pattern in~\cref{fig:nonisothermal_osc}.
Graphically,
the former pattern
is inserted into
the inner box called $\mathtt{damping}$ of the latter pattern.
The common interface is deleted
and for each of its ports,
the two assigned junctions (one in each pattern) are identified.
The named sum for interfaces is used
to obtain the inner interface of the composed pattern.

Category theory provides
a language to talk about composition,
and
it includes
well-behavedness conditions,
which ensure that composition works in the way we expect,
just as we need not worry about the difference between
$(1 + 2) + 3$ and $1 + (2 + 3)$.
A future paper shall formalize
the composition of interconnection patterns
based on directories and multicategories/operads.
To limit the scope of this article,
we just mention that
it is not difficult to see why
the procedure for composition
outlined in the previous paragraph
preserves the three conditions in~\cref{def:pattern}.

\section{Semantics of interconnection patterns}%
\label{sec:interconnection-semantics}

In~\cref{ssec:port_variables} we have defined
bundles of port variables associated to interfaces.
In~\cref{ssec:interconnection_patterns} we have defined
interconnection patterns,
which have an inner interface
(containing all ports of the subsystems)
and an outer interface
(containing the ports of the composite system).
We now discuss how to interpret an interconnection pattern
as imposing a relation between
the port variables of its inner interface and
the port variables of its outer interface.
We do this in two parts.
In the first part,
we talk about the relation as
a collection of equations,
and we discuss its physical interpretation.
In the second part,
we reify it as a geometric object,
namely as a relation between bundles of port variables.

\subsection{Interconnection patterns as equations}%

The semantics of an interconnection pattern
$(\beta, \, I, \, J)$
can be seen as
a collection of equations.
Specifically,
for each junction $j \in J$, we have the following.
\\
Let $j_p \subseteq j$ be the subset of connected power ports.
\begin{itemize}
  \item{%
    \textbf{Equality of state.}
    For all ports $p, p' \in j$,
    \begin{equation*}
      p \mathtt{.x}
      \: = \:
      p' \mathtt{.x}
    \end{equation*}
  }
  \item{%
    \textbf{Equality of effort.}
    For all power ports $p, p' \in j_p$,
    \begin{equation*}
      p \mathtt{.e}
      \: = \:
      p' \mathtt{.e}
    \end{equation*}
  }
  \item{%
    \textbf{Equality of net flow.}
    \begin{equation*}
      \sum_{p \, \in \, j_p \cap \sum \beta} p \mathtt{.f}
      \: = \:
      \sum_{p \, \in \, j_p \cap I} p \mathtt{.f}
    \end{equation*}
  }
\end{itemize}
We note that
the sum on the left-hand side
has at least one term (surjectivity condition)
and
the sum on the right-hand side
has at most one term (injectivity condition).

As detailed in~\cref{sec:composite-systems},
combining these equations with
equations that determine the semantics of
the subsystems filling the inner boxes
yields
a collection of equations
that determines the semantics of the resulting composite system.

Since all ports connected to a junction have the same quantity,
their associated state variables
take values in the same space,
making equality of state a well-typed equation.
As a consequence of this,
all effort variables
take values in the same fiber of the same cotangent bundle,
making equality of effort also well-typed.
Similarly,
all flow variables
take values in the same fiber of the same tangent bundle,
making equality of net flow well-typed.

Equality of state is required,
since the behavior of one system
may in general depend on the state of another system,
even if the two systems do not exchange energy.
We refer to~\cref{fig:motor,sec:example} for an example of this.

As a consequence of
equality of effort and
equality of net flow,
at every junction $j \in J$
the following power balance equation is satisfied:
\begin{equation*}
  \sum_{p \, \in \, j_p \cap \sum \beta} \langle p \mathtt{.e} \mid p \mathtt{.f} \rangle
  \: = \:
  \sum_{p \, \in \, j_p \cap I} \langle p \mathtt{.e} \mid p \mathtt{.f} \rangle
\end{equation*}
Summing over all junctions on both sides
gives that
the net power supplied to all subsystems
is equal to
the net power supplied to the composite system.
%
%

Junctions correspond to energy domains
and
so do (outer) ports,
since they expose energy domains.
Whenever two ports are connected to the same junction
the respective energy domains are identified.
This is exactly how interconnection patterns compose.
Interconnection means
that systems share energy domains.
The state of a shared energy domain
is given by the shared state variable,
according to the \emph{equality of state}.
The connected systems
may exchange the respective physical quantity
according to the \emph{equality of net flow}.
Moreover,
based on the \emph{equality of effort},
this exchange implies a transfer of (exergetic) power.

\begin{example}
  Let's again consider the pattern in~\cref{fig:nonisothermal_osc}.
  At the junction which represents the kinetic energy domain,
  the momentum variable (state) is shared and
  forces (flows) balance at equal velocity (effort):
  \begin{equation*}
    \begin{alignedat}{3}
      &\mathtt{pkc.p.x} \: {}={} \: &&\mathtt{mf.p.x} \: {}={} \: &&\mathtt{mass.p.x} \: {}={} \: \mathtt{p.x}
      \\
      &\mathtt{pkc.p.f} \: {}+{} \: &&\mathtt{mf.p.f} \: {}+{} \: &&\mathtt{mass.p.f} \: {}={} \: \mathtt{p.f}
      \\
      &\mathtt{pkc.p.e} \: {}={} \: &&\mathtt{mf.p.e} \: {}={} \: &&\mathtt{mass.p.e} \: {}={} \: \mathtt{p.e}
    \end{alignedat}
    \qedhere
  \end{equation*}
\end{example}

\Cref{tab:junctions}
summarizes the physical interpretation of junctions
representing a number of different energy domains.
\begin{table}[ht]
  \centering
  \resizebox{\columnwidth}{!}{%
    \begin{tabular}{ccc}
      \toprule
      energy domain
      &
      state variable
      &
      flows balance at equal efforts
      \\
      \midrule
      kinetic
      &
      momentum
      &
      forces balance at equal velocity
      \\
      magnetic
      &
      flux linkage
      &
      voltages balance at equal current
      \\
      potential
      &
      displacement
      &
      velocities balance at equal force
      \\
      hydraulic
      &
      volume
      &
      volume rates balance at equal pressure
      \\
      electric
      &
      charge
      &
      currents balance at equal voltage 
      \\
      thermal
      &
      entropy
      &
      entropy rates balance at equal temperature 
      \\
      \bottomrule
    \end{tabular}
  }
  \caption{Physical interpretation of junctions as energy domains.}%
  \label{tab:junctions}
\end{table}

We can now make sense of
the surjectivity and injectivity conditions
distinguishing interconnection patterns
form undirected wiring diagrams.
The surjectivity condition says that
each junction must have at least one connected inner port.
In other words,
an energy domain can only belong to a composite system
if it belongs to at least one subsystem.
The injectivity condition says that
each junction may have at most one connected outer port.
In other words,
an energy domain can be exposed at most once.
Exposing an energy domain twice
and identifying it again at a higher level
by connecting the two respective ports
would lead to undefined flow variables.

\subsection{Interconnection patterns as relations}%

Recall that
a relation $R : A \nrightarrow B$
between two sets $A$ and $B$
is given by
a subset of their Cartesian product,
i.e.~$R \subseteq A \times B$.
We use the same symbol,
here $R$,
to denote the relation and
the subobject that defines it.
The arrow symbol $\nrightarrow$ signifies that
relations have no inherent direction,
however
the formal direction is relevant to denote their composition,
and hence to think about relations as forming a category.

Analogous to
the prototypical example of a relation between two sets,
the relation assigned to an interconnection pattern
is defined by
a vector subbundle of the Cartesian product of
the bundle of port variables
associated to the inner interface and
the bundle of port variables
associated to the outer interface.

\begin{definition}
  Let $P = (\beta, \, I, \, J)$ be
  an interconnection pattern
  and
  let $I_\text{in} = \sum \beta$ denote its inner interface.
  Further,
  let $\cP_{I_\text{in}} \twoheadrightarrow \cX_{I_\text{in}}$
  be the bundle of port variables associated to the inner interface,
  and
  let $\cP_{I} \twoheadrightarrow \cX_{I}$
  be the bundle of port variables associated to the outer interface.
  Based on the partition $J$,
  let $\cP_P \subset \cP_{I_\text{in}} \times \cP_I$
  be the subspace where
  equality of state,
  equality of effort, and
  equality of net flow
  are satisfied.
  Similarly,
  let $\cX_P \subset \cX_{I_i} \times \cX_I$
  be the subspace where
  equality of state
  is satisfied.
  Then,
  the \textbf{semantics assigned to interconnection pattern} $P$
  is the relation
  $\cP_P : \cP_{I_\text{in}} \nrightarrow \cP_I$
  between the port variables of the inner interface
  and the port variables of the outer interface
  that is defined by
  the vector subbundle
  $\cP_P \twoheadrightarrow \cX_P$.
  \label{def:semantics_pattern}
\end{definition}

\begin{figure}[ht]
  \centering
  \includegraphics[width=16em]{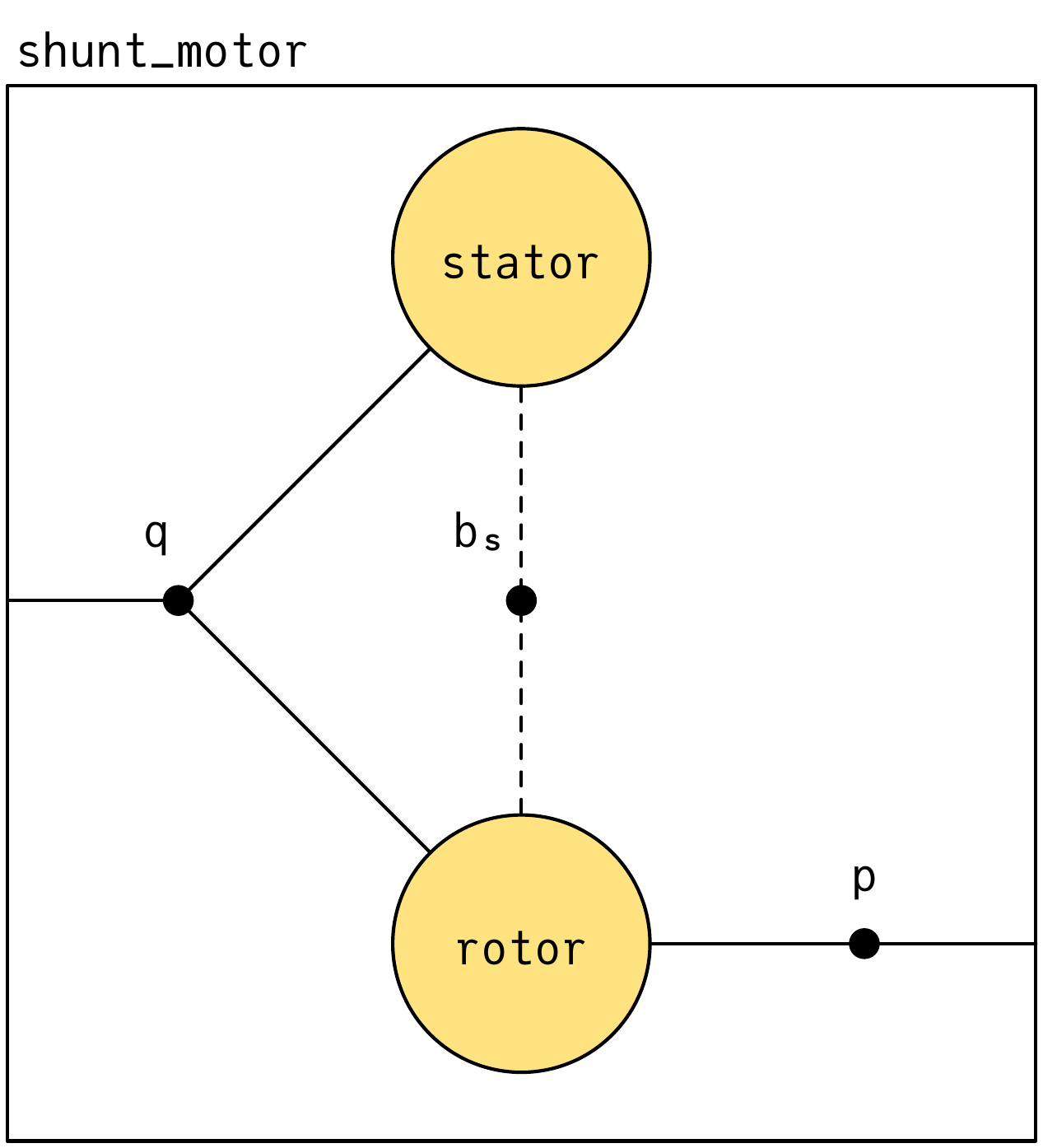}%
  \caption{%
    Interconnection pattern of the DC shunt motor model
    discussed in~\cref{sec:example}.
    The stator and rotor are connected to
    a common electric energy domain
    that is exposed via the outer port $\mathtt{q}$.
    The kinetic energy domain of the rotor
    is exposed via the outer port $\mathtt{p}$.
    The conversion between electric and kinetic energy
    in the rotor
    depends on the magnetic energy domain of the stator,
    which is represented by the junction
    with two connected state ports named $\mathtt{b_s}$.
  }%
  \label{fig:motor}
\end{figure}

\begin{example}
  Consider the interconnection pattern
  $P = (\beta, \, I, \, J)$
  depicted in~\cref{fig:motor}
  with names of inner boxes given by
  $\dom(\beta) = \{ \mathtt{rotor}, \, \mathtt{stator} \}$
  and
  interfaces given by
  \begin{equation*}
    \begin{aligned}
      &\beta(\mathtt{rotor})
      \: = \:
      \left(
        \{ \mathtt{b_s}, \, \mathtt{p}, \, \mathtt{q} \}, \,
        \tau_\text{r}
      \right)
      \\
      &\beta(\mathtt{stator})
      \: = \:
      \left(
        \{ \mathtt{b_s}, \, \mathtt{q} \}, \,
        \tau_\text{s}
      \right)
      \\
      &I
      \: = \:
      \left(
        \{ \mathtt{p}, \, \mathtt{q} \}, \,
        \tau
      \right)
      \,,
    \end{aligned}
  \end{equation*}
  where
  \begin{equation*}
    \begin{aligned}
      &\tau_\text{r}(\mathtt{b_s})
      \: = \:
			\tau_\text{s}(\mathtt{b_s})
      \: = \:
      ((\bR, \, \mathtt{flux\_linkage}), \, \mathsf{s})
      \\
      &\tau_\text{r}(\mathtt{p})
      \: = \:
      \tau(\mathtt{p})
      \: = \:
      ((\bR, \, \mathtt{angular\_momentum}), \, \mathsf{p})
      \\
      &\tau_\text{r}(\mathtt{q})
      \: = \:
			\tau_\text{s}(\mathtt{q})
      \: = \:
      \tau(\mathtt{q})
      \: = \:
      ((\bR, \, \mathtt{charge}), \, \mathsf{p})
      \,.
    \end{aligned}
  \end{equation*}
  Further,
  the interconnection is determined by
  \begin{equation*}
    J
    \: = \:
    \bigl\{
      \{ \mathtt{rotor.b_s}, \, \mathtt{stator.b_s} \}, \,
      \{ \mathtt{rotor.p}, \, \mathtt{p} \}, \,
      \{ \mathtt{rotor.q}, \, \mathtt{stator.q}, \, \mathtt{q} \}
    \bigr\}
    \,.
  \end{equation*}
  The port space of the inner interface
  $I_\text{in} = \sum \beta$ is
  \begin{equation*}
    \begin{split}
      \cP_{I_\text{in}}
      \: = \:
      &\bR \times
			\bT \bR \times
      \bT \bR \times
      \bR \times
      \bT \bR
      \\
      \: \ni \:
      &\bigl(
        \mathtt{rotor.b_s.x}, \,
        (\mathtt{rotor.p.x}, \, \mathtt{rotor.p.f}, \, \mathtt{rotor.p.e}), \,
        \\
        &\:\:
        (\mathtt{rotor.q.x}, \, \mathtt{rotor.q.f}, \, \mathtt{rotor.q.e}), \,
        \\
        &\:\:
        \mathtt{stator.b_s.x}, \,
        (\mathtt{stator.q.x}, \, \mathtt{stator.q.f}, \, \mathtt{stator.q.e})
      \bigr)
    \end{split}
  \end{equation*}
  and
  the port space of the outer interface $I$ is
  \begin{equation*}
    \cP_I
    \: = \:
    \bT \bR \times
    \bT \bR
    \: \ni \:
    \bigl(
      (\mathtt{p.x}, \, \mathtt{p.f}, \, \mathtt{p.e}), \,
      (\mathtt{q.x}, \, \mathtt{q.f}, \, \mathtt{q.e})
    \bigr)
    \,.
  \end{equation*}
  Finally,
  the subspace
  $\cP_P \subset \cP_{I_\text{in}} \times \cP_I$
  defining the relation is
  \begin{equation*}
    \begin{split}
      \cP_P
      \: = \:
      \bigl\{
        &\bigl(
          \mathtt{rotor.b_s.x}
          \, , \,
					(\mathtt{rotor.p.x}, \, \mathtt{rotor.p.f}, \, \mathtt{rotor.p.e})
          \, , \,
          \\
          &(\mathtt{rotor.q.x}, \, \mathtt{rotor.q.f}, \, \mathtt{rotor.q.e})
          \, , \,
          \\
          &\mathtt{stator.b_s.x}
          \, , \,
					(\mathtt{stator.q.x}, \, \mathtt{stator.q.f}, \, \mathtt{stator.q.e})
          \, , \,
          \\
          &(\mathtt{p.x}, \, \mathtt{p.f}, \, \mathtt{p.e})
          \, , \,
					(\mathtt{q.x}, \, \mathtt{q.f}, \, \mathtt{q.e})
        \bigr)
        \in \cP_{I_i} \times \cP_{I}
        \, \mid \,
        \\
        &\mathtt{rotor.b_s.x} = \mathtt{stator.b_s.x}
        \, , \,
        \\
        &\mathtt{rotor.p.x} = \mathtt{p.x}
        \, , \,
        \\
        &\mathtt{rotor.p.f} = \mathtt{p.f}
        \, , \,
        \\
        &\mathtt{rotor.p.e} = \mathtt{p.e}
        \, , \,
        \\
        &\mathtt{rotor.q.x} = \mathtt{stator.q.x} = \mathtt{q.x}
        \, , \,
        \\
        &\mathtt{rotor.q.f} + \mathtt{stator.q.f} = \mathtt{q.f}
        \, , \,
        \\
        &\mathtt{rotor.q.e} = \mathtt{stator.q.e} = \mathtt{q.e}
        \,
      \bigr\}
      \,.
      \qedhere
    \end{split}
  \end{equation*}%
  \label{ex:motor_pattern}
\end{example}

Given a hierarchy of interconnection patterns,
we can first compose the patterns
and then determine the relation assigned to the composed pattern.
Alternatively,
we can first determine the relation assigned to each individual pattern
and then compose these relations.
Both ways lead to the same result.
The combinatorial syntax
and its relational semantics
hence compose in compatible ways.
In the language of category theory,
we say that the (assignment of) semantics is \emph{functorial}.
A future paper shall use
directories and the theory of multicategories/operads
to demonstrate that~\cref{def:semantics_pattern} indeed
defines a suitable notion of functor.

\section{Exergy}%
\label{sec:exergy}

Every system is ultimately determined by
a power-preserving interconnection of primitive subsystems,
called components.
These represent elementary physical behaviors, namely
storage, and reversible as well as irreversible exchange of energy.
To achieve a thermodynamically consistent
combination of reversible and irreversible dynamics,
components are defined
with respect to
an exergy reference environment.
Before components can be defined in the next section,
in this section we define the reference environment.
We start with its physical interpretation.

\subsection{Intuition}%

As briefly discussed in~\cref{ssec:thermodynamics},
thermodynamic models
have an irreversible dynamics,
since not all physically-relevant degrees of freedom
are determined by the state variables.
According to the first and second law of thermodynamics,
internal energy,
which is given by
an `imprecise' phenomenological description,
cannot be fully converted into
mechanical or electromagnetic energy,
which is `precise',
in the sense that
all relevant degrees of freedom
are fully determined by the state variables.

The irreversible degradation of
`mechanically precise' forms of energy
into
`thermodynamically imprecise' forms of energy
can be quantified
using the concept of exergy.
A hypothetical conversion device,
whose mode of operation is only restricted by
the first and second law of thermodynamics
is used to
determine how much mechanical (or electromagnetic) energy
can be extracted from a system,
before it reaches thermodynamic equilibrium
with a reference environment,
see~\cref{fig:exergy}.
The environment is defined by
intensive variables,
such as temperature and pressure,
which are constant in time and space.

\begin{figure}[ht]
  \centering
  \includegraphics[width=0.6\textwidth]{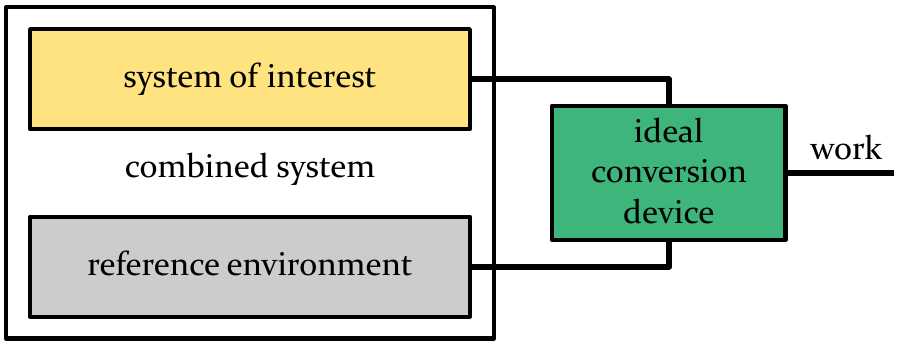}
  \caption{
    To assess the exergy content of a system,
    it is combined with a reference environment.
    The exergy content is then defined as
    the maximum amount of work
    that can be extracted
    from the combined system
    via a hypothetical conversion device
    that operates completely reversibly.
  }%
  \label{fig:exergy}
\end{figure}

The ideal conversion device
is \textit{not} restricted to
interacting with the system through its interface.
Instead,
the device operates in a counterfactual
where all energy domains
of both the system and the environment can be accessed.
The exergy content of a composite system
is hence obtained by
(recursively) summing
the exergy stored in each of its storage components.
The exergy content of
a component storing
purely mechanical or electromagnetic forms of energy
is simply equal to its energy content.
For components storing internal energy,
the exergy content is assessed based on
the ideal conversion device
and the reference environment.
A concrete statement of this
follows in~\cref{ex:gas},
after the definition of
the exergy storage function in~\cref{def:exergy}.


For more information about
exergy,
its history,
and its links to port-Hamiltonian as well as metriplectic or GENERIC systems,
we refer to~\cite{2021LohmayerKotyczkaLeyendecker}.

\subsection{Reference environment}%

The internal energy of a thermodynamic system
is a function of its entropy,
and possibly further extensive quantities
such as volume, mass of chemical species, etc.
To assess the exergy content of the system,
we need for each such variable
an energy-conjugate, intensive variable
attributed to the reference environment.
Since these intensive variables are constant,
they determine the equilibrium state
of the combined system,
see~\cref{fig:exergy}.

\begin{definition}
  A \textbf{reference environment}
  $(I_R, \, \lambda_R)$
  is defined by
  an interface
  $I_R = (N_R, \, \tau_R)$
  whose ports
  are all power ports
  and have
  a unique extensive quantity
  in $\QuantitySet_\bR$.
  One port must have the quantity
  $(\bR, \, \mathtt{entropy})$.
  Moreover,
  each port $p \in I_R$ has
  an associated value
  $\lambda_R(p) \in \bR^*$
  of the energy-conjugate intensive quantity.
\end{definition}

Uniqueness of quantities,
i.e.~injectivity of
$\tau_R : N_R \rightarrow \mathbf{Q}_\bR \times \{ \mathsf{p} \}$,
is required to
determine the exergy storage function of a system
from its energy storage function.
Regarding the physical interpretation,
each port corresponds to
an energy domain that is present in the environment.
Allowing multiple energy domains of the same kind
would conflict with
the environment always being in equilibrium with itself.

\begin{example}
  The minimal definition of a reference environment
  represents an isothermal heat bath.
  This amounts to
  an interface with a single entropy port
  and an associated reference temperature,
  henceforth denoted by
  ${\color{violet} \theta_0}$:
  \begin{equation*}
    \begin{alignedat}{1}
      N_R
      \: &= \:
      \{ \mathtt{s} \}
      \\
      \tau_R(\mathtt{s})
      \: &= \:
      \left( (\bR, \, \mathtt{entropy}), \, \mathsf{p} \right)
      \\
      \lambda_R(\mathtt{s})
      \: &= \:
      {\color{violet} \theta_0}
      \,.
    \end{alignedat}
  \end{equation*}
  To determine the exergy content of
  pressurized fluids
  with respect to an isobaric atmosphere,
  the definition of the environment is extended to
  $N_R = \{ \mathtt{s}, \, \mathtt{v} \}$
  with
  \begin{equation*}
    \begin{alignedat}{1}
      \tau_R(\mathtt{v})
      \: &= \:
      \left( (\bR, \, \mathtt{volume}), \, \mathsf{p} \right)
      \\
      \lambda_R(\mathtt{v})
      \: &= \:
      {\color{violet} -\pi_0}
      \,.
    \end{alignedat}
  \end{equation*}
  The minus sign in front of the reference pressure
  ${\color{violet} \pi_0}$
  stems from the fact that
  more volume means less energy.
  To model for instance
  a mass diffusion process,
  a port $\mathtt{m}$
  representing
  the mass of a chemical species
  with a constant chemical potential
  ${\color{violet} \mu_0}$
  can be included,
  i.e.~let
  $N_R = \{ \mathtt{s}, \, \mathtt{v}, \, \mathtt{m} \}$
  with
  \begin{equation*}
    \begin{alignedat}{1}
      \tau_R(\mathtt{m})
      \: &= \:
      \left( (\bR, \, \mathtt{mass}), \, \mathsf{p} \right)
      \\
      \lambda_R(\mathtt{m})
      \: &= \:
      {\color{violet} \mu_0}
      \,.
    \end{alignedat}
    \qedhere
  \end{equation*}
\end{example}

A software implementation should already include
a comprehensive definition of a reference environment.
Allowing the user to change the values of the intensive quantities
would be of interest for
exergy analysis and thermodynamic optimization.

\section{Primitive systems}%
\label{sec:primitive-systems}

At the lowest level,
any hierarchically-structured model
is defined by primitive systems,
called components.
There are four kinds of them:
\begin{itemize}
  \item{%
    \textbf{Storage components} represent
    energy storage.
  }
  \item{%
    \textbf{Reversible components} represent
    reversible energy exchange
    \\ 
    or a transformation between different forms of
    describing an energy domain.
  }
  \item{%
    \textbf{Irreversible components} represent
    irreversible energy exchange.
  }
  \item{%
    \textbf{Environment components} may be used to
    model energy exchange with
    \\ 
    the reference environment.
  }
\end{itemize}

Components are in general defined with respect to
a reference environment
$(I_R, \, \lambda_R)$.
Further,
every component has an interface
$I = (N, \, \tau)$
with port variables
$(x, \, f, \, e) \in \cP_I$.
To define the exergy storage function
of storage components
and to
state conditions that
reversible and irreversible components have to satisfy,
we define the following notation:
For any port $p \in I$,
let
\begin{equation*}
  \lambda(p)
  \: = \:
  \begin{cases}
    \lambda_R(p_R)
    &\exists \, p_R \in I_R
    \, \colon \,
    \tau_R(p_R)
    \: = \:
    \tau(p)
    \\
    0 \in \mathrm{T}^* \cX_p
    &\text{else.}
  \end{cases}
\end{equation*}
With regard to the existential quantifier,
this is well-defined due to
the injectivity of $\tau_R$.
For instance,
if $p$ represents a thermal energy domain then
$\lambda(p) = {\color{violet} \theta_0}$,
whereas
if $p$ represents a kinetic energy domain then
$\lambda(p) = 0$,
since the environment does not have a kinetic energy domain.
We extend the above notation
to an entire interface:
Given for example
$N = \{ \mathtt{p_1}, \, \ldots, \, \mathtt{p_n} \}$,
let
\begin{equation*}
  \lambda(I)
  \: = \:
  \bigl(
    \lambda(\mathtt{p_1}), \,
    \ldots, \,
    \lambda(\mathtt{p_n})
  \bigr)
  \,.
\end{equation*}
We also need a similar notation
that references a specific environment port:
For any port $p \in I$ and $p_R \in I_R$,
let
\begin{equation*}
  \lambda(p, \, p_R)
  \: = \:
  \begin{cases}
    \lambda_R(p_R)
    &\tau_R(p_R)
    \: = \:
    \tau(p)
    \\
    0 \in \mathrm{T}^* \cX_p
    &\text{else.}
  \end{cases}
\end{equation*}
Again,
given for example
$N = \{ \mathtt{p_1}, \, \ldots, \, \mathtt{p_n} \}$,
let
\begin{equation*}
  \lambda(I, \, p_R)
  \: = \:
  \bigl(
    \lambda(\mathtt{p_1}, \, p_R), \,
    \ldots, \,
    \lambda(\mathtt{p_n}, \, p_R)
  \bigr)
  \,.
\end{equation*}

\subsection{Storage components}%
\label{ssec:storage}

Storage components
are primitive systems
that represent energy storage.
Examples include
elastic springs, movable masses,
capacitors, inductors,
and thermal capacities.

\begin{definition}
  A \textbf{storage component}
  $(I, \, E)$
  is defined by
  an interface $I$
  containing only power ports
  and
  an energy storage function
  $E : \cX_I \rightarrow \bR$.
\end{definition}

The function $E$
yields the stored energy
for any state $x \in \cX_I$.
Based on the reference environment $I_R$,
the exergy content is determined as follows.

\begin{definition}
  Let $(I, \, E)$ be a storage component.
  Its \textbf{exergy storage function}
  $H : \cX_I \rightarrow \bR$
  is defined by
  \begin{equation*}
    H(x)
    \: = \:
    E(x) -
    \sum_{p \, \in \, I}
    \sum_{\substack{%
      p_R \, \in \, I_R \\
      \tau_R(p_R) \, = \, \tau(p)
    }}
    \langle \lambda_R(p_R) \mid p \mathtt{.x} \rangle
    \,.
    \qedhere
  \end{equation*}
  \label{def:exergy}
\end{definition}

The exergy function $H$
follows directly from the energy function $E$
by subtracting the pairing between
each state variable
representing an extensive quantity
also present in the environment,
and the constant, energy-conjugate quantity
from the environment.
We come back to this in~\cref{ex:gas}.


By definition,
the total exergy of a (composite) system
is zero when it reaches equilibrium
in itself and with the environment.
However,
the equilibrium state
of the considered composite system
is usually unknown
when defining a (reusable) component.
Regarding the dynamics,
this is irrelevant,
as it depends only on the differential $\dd H$.
In case it would matter for thermodynamic analysis,
the contribution of the component to the total exergy
is given by $H(x) - H(x_0)$,
where $x_0$ is the state of the component
when the considered composite system reached equilibrium.

The effort variables are given by
the differential of the exergy function
evaluated at the current state
$x \in \cX_I$, i.e.
\begin{equation*}
  \forall \, p \in I
  \: : \:
  p \mathtt{.e}
  \: = \:
  \left. \pdv{H}{(p \mathtt{.x})} \right\vert_{x}
  \,.
\end{equation*}
The evolution of the state variables
is obtained by integrating the flow variables,
i.e.
\begin{equation*}
  \forall \, p \in I
  \: : \:
  \dv{(p \mathtt{.x})}{t}
  \: = \:
  p \mathtt{.f}
  \,.
\end{equation*}
Based on the chain rule,
the rate of change of the stored exergy
is given by
$
\dv{H}{t} =
\sum_{p \in I} \:
\langle p \mathtt{.e} \mid p \mathtt{.f} \rangle
$.

\begin{example}
  Consider
  a storage component
  $(I_\text{gas}, \, E_\text{gas})$
  representing
  a gas-filled compartment
  whose entropy and volume can change.
  Its interface
  $I_\text{gas}$
  is defined in~\cref{ex:interface},
  and we assume
  knowledge of its energy storage function
  $E_\text{gas} : \bR \times \bR \rightarrow \bR$.
  With
  $x = (s, \, v)$,
  the exergy storage function
  is then given by
  \begin{equation*}
    H_\text{gas}(s, \, v)
    \: = \:
    E_\text{gas}(s, \, v)
    \, - \, {\color{violet} \theta_0} \, s
    \, + \, {\color{violet} \pi_0} \, v
    \,.
  \end{equation*}
  As
  the temperature is defined by
  $\theta = \pdv{E_\text{gas}}{s}$
  and the (negative) pressure is defined by
  $-\pi = \pdv{E_\text{gas}}{v}$,
  the effort variables are
  given by
  \begin{equation*}
    \begin{split}
      \mathtt{s.e}
      \: &= \:
      \theta - {\color{violet} \theta_0}
      \\
      \mathtt{v.e}
      \: &= \:
      -(\pi - {\color{violet} \pi_0})
      \,.
    \end{split}
  \end{equation*}
  Since the entropy rate
  $\dot{s} = \mathtt{s.f}$
  is related to
  the heat flow rate
  $\Phi$
  via
  $\dot{s} = \frac{1}{\theta} \, \Phi$,
  the exergetic power
  at the entropy port is
  $
  \langle \mathtt{s.e} \mid \mathtt{s.f} \rangle
  =
  \frac{\theta - {\color{violet} \theta_0}}{\theta} \, \Phi
  =
  \eta_\text{Carnot} \, \Phi
  $,
  where
  $\eta_\text{Carnot}$
  is the efficiency
  of an ideal heat engine
  operating between
  a hot source with temperature $\theta$
  and
  a cold sink with temperature ${\color{violet} \theta_0}$.
  Considering a piston with
  the pressurized gas on one side
  and the atmosphere on the other,
  $
  \langle \mathtt{v.e} \mid \mathtt{v.f} \rangle
  =
  -(\pi - {\color{violet} \pi_0}) \, \dot{v}
  $
  is the power that can be extracted
  from the pressurized gas
  when the volume expansion rate is $\dot{v}$.
  \label{ex:gas}
\end{example}

Next,
we formally state
the semantics of a storage component
with interface $I$
in terms of a relation
$\cS : \mathrm{T} \cX_I \nrightarrow \cP_I$.
This is used in~\cref{sec:composite-systems},
where we define the semantics of a composite system
in terms of
the semantics of its primitive subsystems
and
the semantics of
its (hierarchy of nested) interconnection pattern(s).

\begin{definition}
  The \textbf{semantics assigned to storage component} $(I, \, E)$
  is the relation
  $\cS : \mathrm{T} \cX_I \nrightarrow \cP_I$
  that is given by the subbundle
  \begin{equation*}
    \cS
    \: = \:
    \Bigl\{
			\,
      \bigl( (\tilde{x}, \, \tilde{f}), \, (x, \, f, \, e) \bigr)
      \in \mathrm{T} \cX_I \times \cP_I
      \, \mid \,
      \tilde{x} \: = \: x, \,
      \tilde{f} \: = \: f, \,
      e \: = \: \dd H(x)
			\,
    \Bigr\}
    \,,
  \end{equation*}
  where $H$ is induced from $E$
  according to~\cref{def:exergy}.
  \label{def:semantics_storage}
\end{definition}

\subsection{Environment components}%

From a user perspective,
environment components provide
a simple alternative to storage components,
when dealing with low-grade energy.%
They may represent
an isothermal heat bath,
which absorbs heat
generated by irreversible components,
or an isobaric atmosphere,
which exchanges pressure-volume work
with other components, etc.
Environment components are essentially
storage components
with a trivial energy and exergy function.

\begin{definition}
  An \textbf{environment component} is defined by
  a subinterface
  $I \subseteq I_R$
  containing the exposed energy domains
  of the reference environment.
\end{definition}

The state variables
keep track of
the exchanged entropy, volume, etc.
As for storage components,
their rate of change
is given by
the flow variables, i.e.
\begin{equation*}
  \forall \, p \in I \: : \:
  \dv{(p \mathtt{.x})}{t}
  \: = \:
  p \mathtt{.f}
  \,.
\end{equation*}
By definition, the environment has zero exergy content.
Since the effort variables are given by the differential of
the exergy function, they are also zero:
\begin{equation*}
  \forall \, p \in I \: : \:
  p \mathtt{.e}
  \: = \:
  0
  \,.
\end{equation*}
In agreement with the definition,
the chain rule also tells us that
the rate of change of the stored exergy is zero.


\begin{definition}
  The \textbf{semantics assigned to environment component}
  $I \subseteq I_R$
  is the relation
  \\ 
  $\cE : \mathrm{T} \cX_I \nrightarrow \cP_I$
  that is given by the subbundle
  \begin{equation*}
    \cE
    \: = \:
    \Bigl\{
			\,
      \bigl( (\tilde{x}, \, \tilde{f}), \, (x, \, f, \, e) \bigr)
      \in \mathrm{T} \cX_I \times \cP_I
      \, \mid \,
      \tilde{x} \: = \: x, \,
      \tilde{f} \: = \: f, \,
      e \: = \: 0
			\,
    \bigr\}
    \,.
    \qedhere
  \end{equation*}
  \label{def:semantics_environment}
\end{definition}

\subsection{Reversible components}%
\label{ssec:reversible}

In a generalized sense,
reversible components represent
gyrators,
transformers,
and/or kinematic constraints.
Generalized gyrators may describe
a coupling of kinetic and potential energy domains
in mechanics,
or a coupling of magnetic and electric energy domains
in electromagnetism.
Examples
where one domain reversibly interacts with itself are
gyroscopic forces in rigid body dynamics
and
self-advection of kinetic energy in fluid dynamics.
The Lorentz force is
an example
where a kinetic energy domain interacts with itself
in a way that depends on
the state of a magnetic energy domain.
Generalized transformers may for instance describe
ideal mechanical levers
or
electrical transformers.
Frame transformations in multibody systems
are an example relating
different representations
of the same energy domain.
Examples of
kinematic constraints include
joint constraints in multibody dynamics
and
the incompressibility constraint in fluid dynamics.
Further,
the parallel connection of capacitors,
the series connection of inductors or springs,
and rigid links between point masses
are described by constraints.

The geometric structure behind reversible components
is called Dirac structure.
For our purposes,
a Dirac structure $\cD$ is
a power-preserving, locally linear relation of port variables.
More precisely,
we could define
$\cD \subset \cP_I$
as a maximally isotropic vector subbundle
of the bundle of port variables
$\cP_I \twoheadrightarrow \cX_I$,
see~\cite{1990Courant,2014SchaftJeltsema,2013Bursztyn}.
However,
to state additional conditions that reversible components have to satisfy,
we give a more practical definition
of Dirac structures
in terms of a particular representation,
which might be called
\emph{constrained hybrid input-output representation}
(cf.~\cite{2014SchaftJeltsema}).

\begin{definition}
  Let $I$ be an interface
  and let
  $
  \cX_{I,p}
  \cong \cX_1 \times \cX_2
  \ni x_p = (x_1, \, x_2)
  $
  be a factorization of
  the state space of all power ports
  with flow and effort variables correspondingly denoted by
  $f = (f_1, \, f_2)$
  and
  $e = (e_1, \, e_2)$.
  A \textbf{Dirac structure} $\cD \subset \cP_I$
  admits the following representation:
  \begin{equation*}
    \cD
    \, = \,
    \left\{
      (x, \, f, \, e)
      \in \cP_I
      \, \mid \,
      \exists \, \lambda_c \in V^*
      \, \colon
      \left[
        \begin{array}{c}
          f_1
          \\ \hline
          e_2
          \\ \hline
          0
        \end{array}
      \right]
      \, = \,
      \left[
        \begin{array}{c | c | c}
          L(x) & -g(x) & C^*(x)
          \\ \hline
          g^*(x) & 0 & 0
          \\ \hline
          -C(x) & 0 & 0
        \end{array}
      \right]
      \,
      \left[
        \begin{array}{c}
          e_1
          \\ \hline
          f_2
          \\ \hline
          \lambda_c
        \end{array}
      \right]
    \right\}
  \end{equation*}
  Here,
  $
  L :
  \mathrm{T}^* \cX_1 \rightarrow
  \mathrm{T} \cX_1
  $,
  $
  g :
  \mathrm{T} \cX_2 \rightarrow
  \mathrm{T} \cX_1
  $,
  its linear dual
  $
  g^* :
  \mathrm{T}^* \cX_1 \rightarrow
  \mathrm{T}^* \cX_2
  $,
  \\ 
  $
  C :
  \mathrm{T}^* \cX_1 \rightarrow
  \cX_I \times V
  $,
  and its dual
  $
  C^* :
  \cX_I \times V^* \rightarrow
  \mathrm{T} \cX_1
  $
  are vector bundle maps
  over the identity on $\cX_I$.
  Further,
  $V$ is a vector space
  and
  $L(x)$ is skew-symmetric
  for all $x \in \cX_I$.%
	\label{def:dirac}
\end{definition}

The maps $L$, $g$ and $C$ respectively define
a generalized gyrator,
a generalized transformer and
a constraint.
The constraint is enforced via the Lagrange multiplier $\lambda_c$.
Due to
the skew-symmetry of
the matrix in~\cref{def:dirac},
exergetic power is conserved, i.e.
\begin{equation*}
  \langle e_1 \mid f_1 \rangle
  \: + \:
  \langle e_2 \mid f_2 \rangle
  \: = \:
  0
  \,.
\end{equation*}

We remark that
this actually defines an \emph{almost} Dirac structure,
as no integrability condition,
akin to the Jacobi identity for Poisson structures,
is required.
However,
in the unconstrained or holonomically constrained case,
such a condition can be checked.

\begin{example}
  Let the interface
  $I_\text{hkc} = (N_\text{hkc}, \, \tau_\text{hkc})$
  be given by
  \begin{equation*}
    \begin{split}
      N_\text{hkc}
      \: &= \:
      \{ \mathtt{v_1}, \, \mathtt{v_2}, \, \mathtt{p} \}
      \\
      \tau_\text{hkc}(\mathtt{v_1})
      \: &= \:
      \tau_\text{hkc}(\mathtt{v_2})
      \: = \:
      ((\bR, \, \mathtt{volume}), \, \mathsf{p})
      \\
      \tau_\text{hkc}(\mathtt{p})
      \: &= \:
      ((\bR, \, \mathtt{momentum}), \, \mathsf{p})
    \end{split}
  \end{equation*}
  and let $a \in \bR$
  be a constant surface area of a piston.
  Then,
  the equation
  \begin{equation*}
    \left[
      \begin{array}{c}
        \mathtt{v_1.f} \\
        \mathtt{v_2.f} \\
        \mathtt{p.f}
      \end{array}
    \right]
    \: = \:
    \left[
      \begin{array}{rrr}
        0  &  0 & -a \\
        0  &  0 & +a \\
        +a & -a &  0
      \end{array}
    \right]
    \,
    \left[
      \begin{array}{c}
        \mathtt{v_1.e} \\
        \mathtt{v_2.e} \\
        \mathtt{p.e}
      \end{array}
    \right]
  \end{equation*}
  defines a Dirac structure
  $\cD_\text{hkc} \subset \cP_{I_\text{hkc}}$.
  This describes a generalized gyrator
  that couples the two hydraulic energy domains
  at the front and the back side of a piston
  with the kinetic energy domain
  of the piston's moving mass.
  \label{ex:hkc_dirac}
\end{example}

A reversible component is defined by a Dirac structure
that satisfies two additional conditions
ensuring thermodynamic consistency.

\begin{definition}
  A \textbf{reversible component}
  $(I, \, \cD)$
  is given by
  an interface $I$
  and
  a Dirac structure
  $\cD \subset \cP_I$
  such that
  all quantities present in the reference environment
	(in particular entropy)
	are conserved
  and
  the time-reversal invariance property holds.
  The first condition requires that
	$\forall \, p_R \in I_R$
	and
	$\forall \, x \in \cX_I \, \colon$
  \begin{equation*}
      0
      \: = \:
      L(x) \, \lambda(I_1, \, p_R)
      \quad \text{and} \quad
      0
      \: = \:
      g^*(x) \, \lambda(I_1, \, p_R)
      \quad \text{and} \quad
      0
      \: = \:
      C(x) \, \lambda(I_1, \, p_R)
    \,.
  \end{equation*}
  The time-reversal invariance property holds if
  $\forall \, x \in \cX_I \, \colon$
  \begin{equation*}
    \begin{alignedat}{3}
      {(L(x))}^{ij} \neq 0 &\implies
      -&&\mathrm{P}(x_1^i)
      \: &&= \:
      \mathrm{P}({(L(x))}^{ij}) \, \mathrm{P}(x_1^j)
      \\
      {(g(x))}^{ik} \neq 0 &\implies
      &&\mathrm{P}(x_1^i)
      \: &&= \:
      \mathrm{P}({(g(x))}^{ik}) \, \mathrm{P}(x_2^k)
      \,.
    \end{alignedat}
  \end{equation*}%
  Here, we use abstract index notation
  without implied summation.
  Indices $i$ and $j$ range over
  the power ports associated with $\cX_1$
  and $k$ ranges over
  the power ports associated with $\cX_2$.
  The state variable is accordingly split as
  $x = (x_p, \, x_s) = ((x_1, \, x_2), \, x_s)$
  and
  $\mathrm{P}$ yields the parity of the respective quantity
  with
  $\mathrm{P}(\mathrm{const.}) = +1$.
  \label{def:reversible}
\end{definition}

According to the chain rule,
a Dirac structure $\cD$
conserves the quantity
corresponding to some port $p_R \in I_R$
if any resulting flow variable $f_1$
satisfies
$0 = \langle \lambda(I_1, \, p_R) \mid f_1 \rangle$.
Inserting the expression for $f_1$,
this condition is satisfied if
$0 = \langle \lambda(I_1, \, p_R) \mid L(x) \, e_1 \rangle$
is satisfied for any effort variable $e_1$,
and
$0 = \langle \lambda(I_1, \, p_R) \mid g(x) \, f_2 \rangle$
is satisfied for any flow variable $f_2$,
and
$0 = \langle \lambda(I_1, \, p_R) \mid C^*(x) \, \lambda_c \rangle$
is satisfied for any Lagrange multiplier $\lambda_c$.
Taking the linear adjoint (transpose),
the first condition requires that
$0 = \langle L^*(x) \, \lambda(I_1, \, p_R) \mid e_1 \rangle$
for any $e_1$.
This simplifies to the first part of the condition in the definition,
since $L$ is skew-adjoint.
The other two parts follow analogously.
Regarding details on the time-reversal invariance condition,
we refer to~\cite{2014PavelkaKlikaGrmela}
or~\cite{2021LohmayerKotyczkaLeyendecker}.

\begin{example}
  The reversible component
  $(I_\text{hkc}, \, \cD_\text{hkc})$,
  as already defined in~\cref{ex:hkc_dirac},
  describes
  the coupling between
  the kinetic energy domain
  of a piston with surface area $a$
  and the two hydraulic energy domains on either side of the piston.
  The first condition here ensures that volume is conserved.
  It is satisfied since
  \begin{equation*}
    \left[
      \begin{array}{c}
        0 \\
        0 \\
        0
      \end{array}
    \right]
    \: = \:
    \left[
      \begin{array}{rrr}
        0  &  0 & -a \\
        0  &  0 & +a \\
        +a & -a &  0
      \end{array}
    \right]
    \,
    \left[
      \begin{array}{c}
        {\color{violet} -\pi_0} \\
        {\color{violet} -\pi_0} \\
        0
      \end{array}
    \right]
    \,.
  \end{equation*}
  The time-reversal invariance property holds since
  \begin{equation*}
    \begin{split}
      -\mathrm{P}(\mathtt{(\bR, \, volume}))
      \: &= \:
      \mathrm{P}(-a) \, \mathrm{P}((\bR, \, \mathtt{momentum}))
      \\
      -\mathrm{P}((\bR, \, \mathtt{volume}))
      \: &= \:
      \mathrm{P}(+a) \, \mathrm{P}((\bR, \, \mathtt{momentum}))
      \,.
    \end{split}
  \end{equation*}
  Checks for
  the other two non-zero entries of $L$
  pass due to skew-symmetry of $L$.
\end{example}

The semantics of a reversible component is simply given by
its Dirac structure
interpreted as a relation.

\begin{definition}
  The \textbf{semantics assigned to reversible component}
  $(I, \, \cD)$
  is the relation
  $\cD : 1 \nrightarrow \cP_I$
  that is given by its Dirac structure $\cD$.
  \label{def:semantics_reversible}
\end{definition}

The symbol $1$
in $1 \nrightarrow \cP_I$
refers to
the Cartesian unit for vector bundles,
i.e.~the vector bundle with
a single point as its base space
and the zero-dimensional vector space above it.

\subsection{Irreversible components}%
\label{ssec:irreversible}

Irreversible components are
phenomenological models of
irreversible processes
with local thermodynamic equilibrium.
Examples include
mechanical friction,
fluid viscosity,
electrical conduction / resistance,
thermal conduction,
and mass diffusion.
Such models
can be defined in terms of
a function that maps
thermodynamic forces,
which drive the relaxation process,
to
thermodynamic fluxes,
which seek to restore equilibrium.
Based on the assumption that
the microscopic dynamics are
reversible,
Onsager found that
such maps must satisfy a symmetry property~%
\cite{1931Onsager}.
This motivates our definition of
an~\emph{Onsager structure}.

\begin{definition}
  Let $I$ be an interface.
  An \textbf{Onsager structure} $\cO \subset \cP_I$
  is a fiber subbundle that
  admits the following representation:
  \begin{equation*}
    \cO
    \, = \,
    \left\{
      (x, \, f, \, e)
      \in \cP_I
      \, \mid \,
      f
      \: = \:
      \frac{1}{\color{violet} \theta_0} \,
      M(e) \, e
    \right\}
    \,,
  \end{equation*}
  where
  $M$ is a smooth function
  that yields
  a symmetric non-negative definite linear operator
  $M(e) \colon \mathrm{T}^* \cX_{I,p} \to \mathrm{T} \cX_{I,p}$
  for all effort variables
  $e \in \mathrm{T}^* \cX_{I,p}$.
  \label{def:onsager_structure}
\end{definition}

In contrast to a Dirac structure,
an Onsager structure is
neither power-preserving,
nor is it a locally linear relation
of the port variables,
since destruction of exergy,
or equivalently production of entropy,
is usually described
by a non-linear relation.
Hence,
$\cO \subset \cP_I$
is not a vector subbundle of $\cP_I$,
but merely a fiber subbundle.
The symmetry of $M(e)$
corresponds to
Onsager's reciprocal relations
and
its non-negative definiteness
implies a non-negative exergy destruction rate,
i.e.~$\langle e \mid f \rangle \geq 0$.

\begin{example}
  Let the interface
  $I_\text{mf} = (N_\text{mf}, \, \tau_\text{mf})$
  be given by
  \begin{equation*}
    \begin{split}
      N_\text{mf}
      \: &= \:
      \{ \mathtt{p}, \, \mathtt{s} \}
      \\
      \tau_\text{mf}(\mathtt{p})
      \: &= \:
      ((\bR, \, \mathtt{momentum}), \, \mathsf{p})
      \\
      \tau_\text{mf}(\mathtt{s})
      \: &= \:
      ((\bR, \, \mathtt{entropy}), \, \mathsf{p})
    \end{split}
  \end{equation*}
  and let $d \in \bR_{>0}$
  be a friction coefficient.
  Then,
  the following
  defines an Onsager structure
  $\cO_\text{mf} \subset \cP_{I_\text{mf}}$
  representing mechanical friction:
  \begin{equation*}
    \begin{bmatrix}
      \mathtt{p.f} \\
      \mathtt{s.f}
    \end{bmatrix}
    \: = \:
    \frac{1}{\color{violet} \theta_0} \,
    d \,
    \begin{bmatrix}
      \theta & -\upsilon \\
      -\upsilon & \frac{\upsilon^2}{\theta}
    \end{bmatrix}
    \,
    \begin{bmatrix}
      \mathtt{p.e} \\
      \mathtt{s.e}
    \end{bmatrix}
  \end{equation*}
  Here,
  $\upsilon = \mathtt{p.e}$
  is the velocity
  and
  $\theta = {\color{violet} \theta_0} + \mathtt{s.e}$
  is the absolute temperature
  at which kinetic energy is dissipated
  into the thermal energy domain,
	since adding ${\color{violet} \theta_0}$
	cancels the shift $-{\color{violet} \theta_0}$
	in the effort variable,
	see~\cref{ex:gas}.
  The phenomenological parameter $d$
  could also be a function of
  the effort variables $e$,
  i.e.~it could depend on
  the velocity and the temperature,
  or theoretically also the base point $x$.
  The above can be simplified to
  $\mathtt{p.f} = d \, \upsilon$
  and
  $\mathtt{s.f} = -\frac{1}{\theta} \, d \, \upsilon^2$.
  Hence,
  the environment temperature ${\color{violet} \theta_0}$
  cancels out
  in the flow variables,
  which here are
  the friction force
  and
  the entropy production rate
  (with a minus sign since
  thermal energy is leaving the component).
	The exergy destruction rate is given by
	$
	\langle \mathtt{p.e} \mid \mathtt{p.f} \rangle +
	\langle \mathtt{s.e} \mid \mathtt{s.f} \rangle =
	{\color{violet} \theta_0} \, \frac{d \, \upsilon^2}{\theta} \geq 0
	$,
	with $\frac{d \, \upsilon^2}{\theta} \geq 0$
	being the entropy production rate.
  \label{ex:mf_onsager}
\end{example}

An irreversible component is defined by an Onsager structure
that satisfies two additional conditions
ensuring thermodynamic consistency.

\begin{definition}
  An \textbf{irreversible component}
  $(I, \, \cO)$
  is given by
  an interface $I$
  and
  an Onsager structure
  $\cO \subset \cP_I$
  such that
  energy and all quantities of the reference environment,
  except for entropy,
  are conserved
  and
  the time-reversal invariance property does strictly not hold.
  The condition for
  conservation of energy
  requires that
  $\forall \, e \in \mathrm{T}^* \cX_{I,p} \, \colon$
  \begin{equation*}
    0
    \: = \:
    M(e) \, \bigl( \lambda(I) + e \bigr)
    \,.
  \end{equation*}
  The condition that
  all quantities present in the reference environment,
  except for entropy,
  are conserved
  requires that
  $\forall \, p_R \in I_R$
	with
  $\tau_R(p_R) \neq ((\bR, \, \mathtt{entropy}), \, \mathsf{p})$,
	we have that
	$\forall \, e \in \mathrm{T}^* \cX_{I,p} \, \colon$
  \begin{equation*}
		0
		\: = \:
		M(e) \, \lambda(I, \, p_R)
    \,.
  \end{equation*}
  The time-reversal invariance property
  does strictly not hold if
  $\forall \, x \in \cX_I \, \colon$
  \begin{equation*}
    {(M(x))}^{ij} \neq 0 \implies
    \mathrm{P}(x^i)
    \: = \:
    \mathrm{P}({(M(x))}^{ij}) \, \mathrm{P}(x^j)
    \,.
  \end{equation*}
  Here, we use the same notation as
  in~\cref{def:reversible}
  with
  indices $i$ and $j$ ranging over
  all power ports of $I$.
  \label{def:irreversible}
\end{definition}

From~\cref{def:semantics_storage} and \cref{ex:gas}
we know that efforts variables are given by
the differential of the energy function
and additional shifts
by the constant, intensive quantities of the environment,
whenever the quantity of the respective port
matches with a quantity present in the envirionment.
We thus have
$e = e^\prime - \lambda(I)$,
where $e^\prime$ is precisely the part
that stems from the differential of the energy.
From this and the chain rule,
it follows that
energy is conserved if
$
0
= \langle e^\prime \mid f \rangle
= \langle e + \lambda(I) \mid M(e) \, e \rangle
$.
Taking the linear adjoint,
this results in the condition in the definition,
since $M(e)$ is symmetric.
Regarding the condition that
all quantities that are also present in the environment,
except for entropy, are conserved,
follows as already shown for
reversible components,
following~\cref{def:reversible}.
Again,
regarding the time-reversal invariance condition,
we refer to~\cite{2014PavelkaKlikaGrmela}
or~\cite{2021LohmayerKotyczkaLeyendecker}.

\begin{example}
  The irreversible component
  $(I_\text{mf}, \, \cO_\text{mf})$,
  as already defined in~\cref{ex:mf_onsager},
  models mechanical friction.
  The condition for conservation of energy
  is satisfied since
  \begin{equation*}
    \begin{bmatrix}
      0 \\
      0
    \end{bmatrix}
    \: = \:
    d \,
    \begin{bmatrix}
      \theta & -\upsilon \\
      -\upsilon & \frac{\upsilon^2}{\theta}
    \end{bmatrix}
    \,
    \begin{bmatrix}
      \upsilon \\
      \theta
    \end{bmatrix}
    \,.
  \end{equation*}
  The condition that
  all quantities of the reference environment,
  except for entropy,
  are conserved
  is trivially satisfied.
	As required,
  the time-reversal invariance property
  does strictly not hold since
  \begin{equation*}
    \begin{split}
      \mathrm{P}((\bR, \, \mathtt{momentum}))
      \: &= \:
      \mathrm{P}(d) \,
      \mathrm{P}(\theta) \,
      \mathrm{P}((\bR, \, \mathtt{momentum}))
      \\
      \mathrm{P}((\bR, \, \mathtt{momentum}))
      \: &= \:
      \mathrm{P}(d) \,
      \mathrm{P}(-\upsilon) \, \mathrm{P}((\bR, \, \mathtt{entropy}))
      \\
      \mathrm{P}((\bR, \, \mathtt{entropy}))
      \: &= \:
      \mathrm{P}(d) \,
      \mathrm{P}(-\upsilon) \, \mathrm{P}((\bR, \, \mathtt{momentum}))
      \\
      \mathrm{P}((\bR, \, \mathtt{entropy}))
      \: &= \:
      \mathrm{P}(d) \,
      {(\mathrm{P}(\upsilon))}^2 \, / \, \mathrm{P}(\theta) \,
      \mathrm{P}((\bR, \, \mathtt{entropy}))
      \,.
    \end{split}
  \end{equation*}
  Here, we used that
  temperature has positive parity
  and
  velocity has negative parity,
	which follows
	from the parities of entropy and momentum,
	knowing that energy has positive parity.
  \label{ex:mf}
\end{example}

Irreversible components,
and hence EPHS in general,
respect
Onsager's reciprocal relations
because the symmetry of
$M(e)$ in~\cref{def:onsager_structure}
implies equal cross-effect coefficients
(see, e.g.,~\cite{2016MielkeRengerPeletier}).
In the terminology of
\textit{Linear Irreversible Thermodynamics} (LIT),
thermodynamic fluxes
are related to thermodynamic forces
by a symmetric operator.
The following example shows this in detail.

\begin{example}
  Let the interface
  $I_\text{ht} = (N_\text{ht}, \, \tau_\text{ht})$
  be given by
  \begin{equation*}
    \begin{split}
      N_\text{ht}
      \: &= \:
      \{ \mathtt{s_1}, \, \mathtt{s_2} \}
      \\
      \tau_\text{ht}(\mathtt{s_1})
      \: &= \:
      \tau_\text{ht}(\mathtt{s_2})
      \: = \:
      ((\bR, \, \mathtt{entropy}), \, \mathsf{p})
      \,,
    \end{split}
  \end{equation*}
  let $\alpha \in \bR_{>0}$
  be a heat transfer coefficient,
  and
  let the Onsager structure
  $\cO_\text{ht} \subset \cP_{I_\text{ht}}$
  be given by
  \begin{equation*}
    \begin{bmatrix}
      \mathtt{s_1.f} \\
      \mathtt{s_2.f}
    \end{bmatrix}
    \: = \:
    \frac{1}{\color{violet} \theta_0} \,
    \alpha \,
    \begin{bmatrix}
      \frac{\theta_2}{\theta_1} & -1 \\
      -1 & \frac{\theta_1}{\theta_2}
    \end{bmatrix}
    \,
    \begin{bmatrix}
      \mathtt{s_1.e} \\
      \mathtt{s_2.e}
    \end{bmatrix}
    \,,
  \end{equation*}
  where
  $\theta_1 = {\color{violet} \theta_0} + \mathtt{s_1.e}$
  and
  $\theta_2 = {\color{violet} \theta_0} + \mathtt{s_2.e}$
  are the absolute temperatures of the two thermal energy domains.
  It can be easily checked that
  $(I_\text{ht}, \, \cO_\text{ht})$
  defines an irreversible component.
  Again,
  the phenomenological parameter $\alpha$
  could for instance be a function of the temperatures.
  The above can be simplified to
  $\mathtt{s_1.f} = -\frac{1}{\theta_1} \, \alpha \, (\theta_2 - \theta_1)$
  and
  $\mathtt{s_2.f} = -\frac{1}{\theta_2} \, \alpha \, (\theta_1 - \theta_2)$.
  Hence,
  the environment temperature ${\color{violet} \theta_0}$
  cancels out
  in the flow variables,
  which are
  rates of entropy change due to heat transfer.
	Since $M(e)$ is symmetric and non-negative definite,
	an Onsager structure can be represented in a factorized form:
	\begin{equation*}
		\underbrace{%
		\begin{bmatrix}
			\mathtt{s_1.f} \\
			\mathtt{s_2.f}
		\end{bmatrix}
		}_{f}
		\: = \:
		\underbrace{%
		\begin{bmatrix}
			+\frac{1}{\theta_1} \\
			-\frac{1}{\theta_3}
		\end{bmatrix}
		}_{C(e)}
		\,
		\underbrace{%
		\begin{bmatrix}
			\alpha \,
			\theta_1 \, \theta_3
		\end{bmatrix}
		}_{D(e)}
		\frac{1}{{\color{violet} \theta_0}} \,
		\underbrace{%
		\begin{bmatrix}
			+\frac{1}{\theta_1} &
			-\frac{1}{\theta_3}
		\end{bmatrix}
		}_{C^*(e)}
		\,
		\underbrace{%
		\begin{bmatrix}
			\mathtt{s_1.e} \\
			\mathtt{s_2.e}
		\end{bmatrix}
		}_{e}
	\end{equation*}
	The factorization has a physical interpretation in terms of LIT
	(see, e.g.,~\cite{2005Oettinger}).
	The thermodynamic force
	that drives the irreversible process is given by
	\begin{equation*}
		X
		\: = \:
		\frac{1}{{\color{violet} \theta_0}} \,
		C^*(e) \, e
		\: = \:
		\frac{1}{\theta_2} - \frac{1}{\theta_1}
		\,.
	\end{equation*}
	Application of the symmetric operator $D(e)$
	yields the thermodynamic flux
	\begin{equation*}
		J
		\: = \:
		D(e) \,
		\frac{1}{{\color{violet} \theta_0}} \,
		C^*(e) \, e
		\: = \:
		\alpha \, ( \theta_1 - \theta_2)
		\,.
	\end{equation*}
	Since $M(e)$ models a single process,
	$D(e)$ has rank $1$.
	Application of $C(e)$
	yields the flow variables $f$.
	The net power
	$
	\langle \mathtt{s_1.e} \mid \mathtt{s_1.f} \rangle +
	\langle \mathtt{s_2.e} \mid \mathtt{s_2.f} \rangle =
	\langle e \mid f \rangle =
	{\color{violet} \theta_0} \,
	\frac{\alpha \, {(\theta_1 - \theta_2)}^2}{\theta_1 \, \theta_2}
	\geq 0
	$
	is the exergy destruction rate,
	and
	$
	\frac{\alpha \, {(\theta_1 - \theta_2)}^2}{\theta_1 \, \theta_2}
	= \langle X \mid J \rangle
	\geq 0
	$
	is the entropy production rate.
  \label{ex:ht}
\end{example}

The semantics of an irreversible component is simply given by
its Onsager structure
interpreted as a relation.

\begin{definition}
  The \textbf{semantics assigned to irreversible component}
  $(I, \, \cO)$
  is the relation
  $\cO : 1 \nrightarrow \cP_I$
  that is given by its Onsager structure $\cO$.
  \label{def:semantics_irreversible}
\end{definition}

\section{Composite systems}%
\label{sec:composite-systems}

A composite system is formed by
interconnecting
a finite number of (primitive or composite) systems
according to an interconnection pattern.
The data defining a composite system consists of
an interconnection pattern and,
for each of its inner boxes,
a system with the corresponding interface.

\begin{definition}
  Let $\SystemSet$ refer to
  the \textbf{set of all systems}
  defined with respect to a fixed reference environment.
  A \textbf{composite system}
  $(I, \, \gamma, \, J)$
  is defined by the following data:
  \begin{enumerate}
    \item{%
      An interface $I$ of the composite system.
    }
    \item{%
      A directory of systems $\gamma \in \Dtry(\SystemSet)$,
      which can naturally be reduced to
      a directory of interfaces $\beta \in \Dtry(\InterfaceSet)$.
    }
    \item{
      A partition $J$ of
      the combined interface
      $
      P =
      \sum \left[
        \mathtt{inner} \mapsto \sum \beta, \,
        \mathtt{outer} \mapsto I
      \right]
      $
      such that
      $(\beta, \, I, \, J)$ defines an interconnection pattern.
      \qedhere
    }
  \end{enumerate}
\end{definition}

\begin{table}[hbt]
  \centering
  {
    \setlength{\tabcolsep}{1pt}
    \begin{tabular}{c@{\hskip 20pt}c@{\hskip 20pt}cccccc@{\hskip 3pt}c@{\hskip 20pt}c}
      \toprule
      class of systems &
      interface &
      \multicolumn{7}{c}{relation} &
      definition
      \\
      \midrule
      storage component &
      $I_s$ &
      $\cS$ & $\colon$ & $\mathrm{T} \cX_{I_s}$ & $\times$ & $1$ & $\nrightarrow$ & $\cP_{I_s}$ &
      \ref{def:semantics_storage}
      \\
      environment component &
      $I_e$ &
      $\cE$ & $\colon$ & $1$ & $\times$ & $\mathrm{T} \cX_{I_e}$ & $\nrightarrow$ & $\cP_{I_e}$ &
      \ref{def:semantics_environment}
      \\
      reversible component &
      $I_r$ &
      $\cD$ & $\colon$ & $1$ & $\times$ & $1$ & $\nrightarrow$ & $\cP_{I_r}$ &
      \ref{def:semantics_reversible}
      \\
      irreversible component &
      $I_i$ &
      $\cO$ & $\colon$ & $1$ & $\times$ & $1$ & $\nrightarrow$ & $\cP_{I_i}$ &
      \ref{def:semantics_irreversible}
      \\
      composite system &
      $I$ &
      $\cC$ & $\colon$ & $\mathrm{T} \cX_{I_s}$ & $\times$ & $\mathrm{T} \cX_{I_e}$ & $\nrightarrow$ & $\cP_{I}$ &
      \ref{def:semantics_composite}
      \\
      \bottomrule
    \end{tabular}
  }
  \caption{%
    Overview of systems
    and their semantics given by a relation.
  }%
  \label{tab:systems}
\end{table}

\Cref{tab:systems} recalls
the different kinds of components
defined in~\cref{sec:primitive-systems}.
In order to
write the relations that determine their semantics
in a unified form,
the Cartesian unit is inserted a few times.
As shown in the last row of the table,
the semantics of a composite system
takes the same form.
We just need to compose
the relation assigned to the interconnection pattern
with the Cartesian product of
the relations assigned to the subsystems
filling its inner boxes.

\begin{definition}
  Let $(I, \, \gamma, \, J)$ be a composite system
  and
  let $P = (\beta, \, I, \, J)$ be
  its underlying interconnection pattern.
  Further, let
  $\cP_P$ be the relation assigned to pattern $P$
  according to~\cref{def:semantics_pattern}.
	For any filled inner box $b \in \dom(\gamma)$,
  let $\semantic{s}$ denote the relation
  assigned to the subsystem $s = \gamma(b)$.
  In case system $s$ is itself a composite system,
  the present definition applies recursively.
  Then,
  the \textbf{semantics assigned to composite system}
  $(I, \, \gamma, \, J)$
  is the relation
  $
  \cC :
  \mathrm{T} \cX_{I_s}
  \times
  \mathrm{T} \cX_{I_e}
  \nrightarrow \cP_I$
  given by
  \begin{equation*}
    \cC
    \: = \:
    \cP_P \circ
    \left(
      \prod_{b \, \in \, \dom(\gamma)} \semantic{\gamma(b)}
    \right)
    \,.
  \end{equation*}
  Here,
  $\cX_{I_s}$ denotes the state space of
  all storage components
  and
  $\cX_{I_e}$ denotes the state space of
  all environment components
  that are subsystems of the composite system
  in a recursive sense.
  \label{def:semantics_composite}
\end{definition}

\begin{example}
  Consider the composite system
  given by filling
  the inner boxes of
  the interconnection pattern
  for the nonisothermal damper
  shown in~\cref{fig:nonisothermal_damper}
  with three suitable systems.
  The system filling the box $\mathtt{tc}$
  represents a thermal capacity
  and
  it is defined analogous to~\cref{ex:gas},
  but without a port for volume exchange.
  It gives a relation
  \begin{equation*}
    \cS_\text{tc} : \mathrm{T} \cX_\text{tc} \times 1 \nrightarrow \bT \cX_\text{tc}
  \end{equation*}
  with $\cX_\text{tc} = \bR$.
  The system filling the box $\mathtt{mf}$
  is a mechanical friction model
  as defined in~\cref{ex:mf}
  and
  it gives a relation
  \begin{equation*}
    \cO_\text{mf} : 1 \times 1 \nrightarrow \bT \cX_\text{mf}
  \end{equation*}
  with $\cX_\text{mf} = \bR \times \bR$.
  The system filling the box $\mathtt{ht}$
  is a heat transfer model
  as defined in~\cref{ex:ht}
  and
  it gives a relation
  \begin{equation*}
    \cO_\text{ht} : 1 \times 1 \nrightarrow \bT \cX_\text{ht}
  \end{equation*}
  with $\cX_\text{ht} = \bR \times \bR$.
  Let the Cartesian product of the three relations
  be denoted by
  \begin{equation*}
    \cS_\text{tc} \times \cO_\text{mf} \times \cO_\text{ht}
    \colon
    \mathrm{T} \cX_\text{tc} \times 1 \nrightarrow
    \bT \cX_\text{tc} \times \bT \cX_\text{mf} \times \bT \cX_\text{ht}
    \,.
  \end{equation*}
  Further,
  let the semantics of the interconnection pattern
  be given by the relation
  \begin{equation*}
    \cP_P \colon
    \bT \cX_\text{tc} \times \bT \cX_\text{mf} \times \bT \cX_\text{ht}
    \nrightarrow
    \cP_I
    \,,
  \end{equation*}
  where
  $\cP_I = \bT \cX_\mathtt{p} \times \bT \cX_\mathtt{s} = \bT \bR \times \bT \bR$
  is the bundle of port variables of the outer interface.
  Then,
  the semantics of the composite system is given by
  the relation
  \begin{equation*}
    \cC
    \: = \:
    \cP_P \circ
    \left( \cS_\text{tc} \times \cO_\text{mf} \times \cO_\text{ht} \right)
  \end{equation*}
  from $\mathrm{T} \cX_\text{tc} \times 1$
  to $\cP_I$.
  \label{ex:composite}
\end{example}

For an isolated composite system
the assigned relation
$\cC \colon \mathrm{T} \cX_{I_s} \times \mathrm{T} \cX_{I_e} \nrightarrow 1$
simply amounts to a subbundle of
$\mathrm{T} \cX_{I_s} \times \mathrm{T} \cX_{I_e}$,
which implicitly defines
a vector field
on $\cX_{I_s} \times \cX_{I_e}$.
Given an initial condition,
one can integrate this vector field
to determine the evolution of the system.
We simply say that
$(x, \, \dot{x}) \in \cC$,
where
$x = (x_s, \, x_e)$,
gives the dynamics of the isolated system.
The following definition generalizes this statement.

\begin{definition}
  Let $s \in \SystemSet$ be a system
  with interface $I$
  and
  let
  the relation
  $
  \cR \colon \mathrm{T} \cX_{I_s} \times \mathrm{T} \cX_{I_e}
  \nrightarrow
  \cP_I
  $
  define its semantics,
  according to
  the definitions mentioned in~\cref{tab:systems}.
  Further,
  let $(x_s, \, x_e) \in \cX_{I_s} \times \cX_{I_e}$
  denote the state variables associated to
  all (nested) storage and environment components.
  If the system is a reversible or an irreversible component
  or given by an interconnection of only such components
  this is vacuous.
  Moreover,
  let $(x, f, e) \in \cP_I$
  denote the port variables of its (outer) interface.
  If the system is isolated this is vacuous.
  Then,
  the \textbf{dynamics of system} $s$
  is implicitly determined by
  \begin{equation*}
    \bigl(
      (x_s, \, \dot{x}_s), \,
      (x_e, \, \dot{x}_e), \,
      (x, \, f, \, e)
    \bigr)
    \in \cR
    \,.
    \qedhere
  \end{equation*}
\end{definition}

\begin{example}
  The dynamics of
  the composite system described in~\cref{ex:composite}
  is given by
  \begin{equation*}
    \bigl(
      (s, \dot{s}), \,
      (\mathtt{p.x}, \, \mathtt{p.f}, \, \mathtt{p.e}), \,
      (\mathtt{s.x}, \, \mathtt{s.f}, \, \mathtt{s.e})
    \bigr)
    \in \cC
    \,.
  \end{equation*}
  This boils down to the following equations:
  \begin{align*}
    \dot{s}
    \: &= \:
    \frac{1}{\theta_1} \left(
      d \, \upsilon^2
      \, + \,
      \alpha \, (\theta_2 - \theta_1)
    \right)
    \\
    \mathtt{p.f}
    \: &= \:
    d \, \upsilon
    \\
    \mathtt{s.f}
    \: &= \:
    -\frac{1}{\theta_2} \,
    \alpha \, (\theta_1 - \theta_2)
  \end{align*}
  Here,
  $\upsilon = \mathtt{p.e}$
  is the velocity of the kinetic energy domain
  represented by the outer port $\mathtt{p}$, while
  $\theta_1 = \dd U(s)$
  is the absolute temperature of the thermal capacity
  with
  internal energy function
  $U : \bR \rightarrow \bR$
  and
  $\theta_2 = {\color{violet} \theta_0} + \mathtt{s.e}$
  is the absolute temperature of the thermal energy domain
  represented by the outer port $\mathtt{s}$.
  Further,
  $d \in \bR_{>0}$
  is the damping coefficient
  and
  $\alpha \in \bR_{>0}$
  is the heat transfer coefficient.
\end{example}

In the context of programming languages,
syntactically correct code
does not necessarily define
a semantically valid program.
Similarly, there are cases where
a compiler for the EPHS language
must raise an error.
For example,
no junction can have two or more connected ports
belonging to storage or environment components.
A more comprehensive discussion of the conditions
required for a composite system to be well-defined
is left for future work.

\section{Thermodynamic consistency}%
\label{sec:thermodynamic-consistency}

Akin to
metriplectic systems or the GENERIC,
the structure of EPHS ensures
thermodynamic consistency.
To simplify things a bit,
we focus on isolated systems.

\begin{proposition}
  The dynamics of
	an isolated EPHS respect
  the first law and the second law.
\end{proposition}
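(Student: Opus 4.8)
The plan is to reduce the statement to three facts about the fully composed relation $\cC$, each resting on the single structural principle that interconnection preserves power at every junction. First I would invoke functoriality of the semantics (noted just after \cref{def:semantics_pattern}) to flatten the nesting, so that $\cC = \cP_P \circ \prod_b \semantic{\gamma(b)}$ for one interconnection pattern $P$ whose boxes are filled by primitive storage, environment, reversible, and irreversible components. Since the system is isolated, its outer interface is empty; hence \emph{equality of net flow} forces $\sum_{p \in j_p} p\mathtt{.f} = 0$ at every junction $j$, while \emph{equality of effort} supplies a common effort $e_j$ there, and all ports at $j$ carry one quantity and thus one reference value $\lambda_j \in \{\lambda_R, 0\}$. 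For the bookkeeping I would use the well-formedness rule that every junction carries at most one storage-or-environment port to define the total entropy $S$, volume $V$, and the remaining reference extensive quantities as sums of the shared state variables at the respective junctions, and the total exergy $H_{\mathrm{tot}}$ as the sum of the components' exergy storage functions. By \cref{def:semantics_storage} each storage port obeys $e = \dd H$ and $\dot{x} = f$, so $\dot{H}_{\mathrm{tot}} = \sum_{\text{storage } p} \langle p\mathtt{.e} \mid p\mathtt{.f}\rangle$.

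The exergy balance comes from summing the junction power balances. As every power port lies in exactly one junction and one component, the total exergetic power $\sum_{\text{all } p}\langle p\mathtt{.e} \mid p\mathtt{.f}\rangle$ regroups as $\sum_j \langle e_j \mid \sum_{p \in j_p} p\mathtt{.f}\rangle = 0$. Regrouping instead by component class and using that environment components have $e = 0$ (\cref{def:semantics_environment}) and that reversible components are power-preserving by skew-symmetry of their defining Dirac structure, the only surviving contributions are the storage term $\dot{H}_{\mathrm{tot}}$ and the irreversible term, which by \cref{def:onsager_structure} equals $\sum_{\text{irr}} \tfrac{1}{\theta_0}\langle e \mid M(e)\,e\rangle \geq 0$ by non-negative definiteness of $M(e)$. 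Hence $\dot{H}_{\mathrm{tot}} = -\sum_{\text{irr}} \tfrac{1}{\theta_0}\langle e \mid M(e)\,e\rangle \leq 0$, so exergy is never created.

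For the first law I would repeat the regrouping with the \emph{energy} effort $e_j + \lambda_j$, which by \cref{def:exergy} equals $\dd E$ at a storage port and is again uniform over each junction, giving $\sum_{\text{all } p}\langle e_p + \lambda_p \mid p\mathtt{.f}\rangle = \sum_j \langle e_j + \lambda_j \mid \sum_{p \in j_p} p\mathtt{.f}\rangle = 0$. It then remains to show that reversible and irreversible components carry no net energy flow. For a reversible component the energy flow is $\langle e \mid f\rangle + \langle \lambda(I) \mid f\rangle$; the first term vanishes by power preservation and the second because the conditions $L(x)\,\lambda(I_1, p_R) = 0$ and $g^*(x)\,\lambda(I_1, p_R) = 0$ of \cref{def:reversible} make the net flow of each reference quantity through the component vanish. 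For an irreversible component, symmetry of $M(e)$ and the condition $M(e)\bigl(\lambda(I) + e\bigr) = 0$ of \cref{def:irreversible} give $\langle e + \lambda(I) \mid f\rangle = \tfrac{1}{\theta_0}\langle M(e)(e + \lambda(I)) \mid e\rangle = 0$. Consequently the total energy $E_{\mathrm{tot}}$ (the storage energies together with the energy $\langle \lambda(I) \mid x\rangle$ accumulated in the environment components) satisfies $\dot{E}_{\mathrm{tot}} = 0$. The same flow-vanishing argument, applied to each individual non-entropy reference quantity via the remaining conditions of \cref{def:reversible,def:irreversible}, shows that the total amount of every reference quantity other than entropy is conserved.

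Finally I would assemble the second law. Differentiating $H_{\mathrm{tot}} = E_{\mathrm{tot}} - \theta_0 S + \pi_0 V - \dots$ from \cref{def:exergy} and substituting $\dot{E}_{\mathrm{tot}} = 0$ together with the vanishing time-derivatives of all reference quantities except entropy leaves $\dot{H}_{\mathrm{tot}} = -\theta_0\,\dot{S}$; the exergy inequality $\dot{H}_{\mathrm{tot}} \leq 0$ and $\theta_0 > 0$ then give $\dot{S} \geq 0$, with production rate $\dot{S} = \theta_0^{-2}\sum_{\text{irr}}\langle e \mid M(e)\,e\rangle$. Onsager reciprocity is immediate: the flux--force relation of each irreversible component is the operator $M(e)$, symmetric by \cref{def:onsager_structure}, which is exactly the equality of cross-effect coefficients. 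I expect the main obstacle to be neither the energy nor the entropy accounting but the faithful translation of the component-level degeneracy conditions --- phrased through $L$, $g^*$, $M$, and $\lambda$ in the hybrid and Onsager representations --- into the statement that the net flow of each reference quantity through a component vanishes, since this must be verified in the presence of the Lagrange-multiplier constraints of the Dirac structure and then tracked consistently across the flattening.
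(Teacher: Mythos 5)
Your proof is correct and follows essentially the same route as the paper's: exergy is conserved by reversible components and by the interconnection and destroyed at a non-negative rate by irreversible components, the degeneracy conditions together with the definition of the exergy storage function convert these facts into energy conservation and non-negative entropy production, and Onsager reciprocity is read off from the symmetry of $M(e)$. You merely carry out explicitly (via the junction-wise power balance and the substitution $e \mapsto e + \lambda$) the bookkeeping that the paper's proof only sketches at the level of components and their interconnection.
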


To prove this,
one could first show that
the semantics of any (isolated) composite system
can be represented by
some abstract system of evolution equations.
Analogous to metriplectic systems or the GENERIC,
one could then use
the structural properties of
this abstract system of equations
to demonstrate thermodynamic consistency.
The following proof,
however, does not require this intermediate step.

\begin{proof}
	As the semantics
	associated to interconnection patterns
	is functorial,
	it suffices to
	show thermodynamic consistency
	for composite systems,
	where all subsystems are primitive.
	In terms of semantics,
	environment components are
	a special case of storage components.
	Any finite number of storage components
	can be combined into a single component
	by taking the named sum of their interfaces
	and by summing their energy functions.
	Similarly,
	any finite number of reversible/irreversible components
	can be combined into a single such component
	by taking the named sum of their interfaces
	and by taking the Cartesian product of the relations that define them.
	It hence suffices to
	show thermodynamic consistency
	for a composite system with
	one storage component,
	one reversible component, and
	one irreversible component.
  \begin{enumerate}
		\item{
			First, we consider the storage component
			and focus specifically on
			the structure of its exergy function $H$,
			see~\cref{def:exergy}.
			The following two facts follow immediately:
			\begin{enumerate}
				\item{%
					The energy $E$ must be conserved
					if the exergy $H$
					and all quantities
					also present in the reference environment
					are conserved.
				}
				\item{%
					Entropy must be growing at a non-negative rate
					if the exergy $H$ is decreasing at a non-negative rate,
					while the energy $E$
					and all quantities
					also present in the reference environment,
					except for entropy,
					are conserved.
				}
			\end{enumerate}
		}
    \item{%
			Regarding the interconnection
			of the components
      based on~\cref{def:semantics_pattern},
			we know that
			at each junction
			the flow variables balance,
			while the effort variables are equal.
			We can conclude the following:
			\begin{enumerate}
				\item{%
					The interconnection is power-preserving,
					i.e.~it conserves exergy.
				}
				\item{%
					The interconnection conserves
					the quantities associated with the ports
					because each flow variable represents
					the rate of change of the corresponding quantity,
					and the flow variables balance.
				}
				\item{%
					The interconnection conserves energy.
					This follows from the previous two points
					and point 1.~(a)
				}
			\end{enumerate}
    }
    \item{%
			Next,
			we consider the reversible component
			defined by a Dirac structure $\cD$,
			see~\cref{def:dirac}.
			Since $\cD$ is a power-preserving relation,
			it conserves the exergy $H$.
			One of the conditions in~\cref{def:reversible}
			guarantees that $\cD$ conserves
			all quantities
			also present in the reference environment,
			in particular entropy.
			Considering also point 1.~(a),
			it follows that $\cD$
			must also conserve the energy $E$.
    }
    \item{%
			Finally,
			we consider the irreversible component
			defined by an Onsager structure $\cO$,
			see~\cref{def:onsager_structure}.
			As a consequence of the non-negative definiteness property,
			$\cO$ destroys exergy at a non-negative rate.
			One of the conditions in~\cref{def:irreversible}
			guarantees that $\cO$ conserves
			the energy $E$.
			Another condition guarantees that $\cO$ conserves
			all quantities
			also present in the reference environment,
			except for entropy.
			Considering also point 1.~(b),
			it follows that $\cO$
			must produce entropy at a non-negative rate.
    }
  \end{enumerate}
	The composite system must respect
	the first law, since
	the interconnection,
	the reversible component, and
	the irreversible component
	all conserve energy.
	\\
  The composite system must respect
	the second law, since
	the interconnection and
	the reversible component
	conserve entropy,
	while the irreversible component
	produces entropy at a non-negative rate.
\end{proof}

The concept of a \textit{storage function}
was introduced in
Willems' theory of dissipative systems~%
\cite{1972Willems}.
In this framework,
EPHS without irreversible components
are classified as \textit{lossless},
whereas EPHS that include irreversible components
are considered \textit{dissipative} or \textit{passive}.
The additional conditions
defining reversible and irreversible components
in EPHS
extend the notion of dissipativity,
ensuring full thermodynamic consistency.

\section{Example: shunt motor}%
\label{sec:example}

This section
compares
a bond graph
and
an EPHS model
of a direct current (DC) shunt motor.
A DC motor
has a fixed outer part,
called stator,
and a rotating inner part,
called rotor.
Both parts contain a coil of wire
through which current flows.
In the case of a shunt motor,
these two inductors are connected
in parallel
to a source of electric energy.
The working principle of such motors
is based on
a coupling of
the electromagnetic and the kinetic energy domains
via
the Lorentz force
and
the commutation of the rotor coil
after every half turn.
However,
both aspects are not directly resolved
by the considered models.
Instead, the coupling
is considered to be lumped into a gyrator element.

\begin{figure}[ht]
  \centering
  \includegraphics[width=0.9\textwidth]{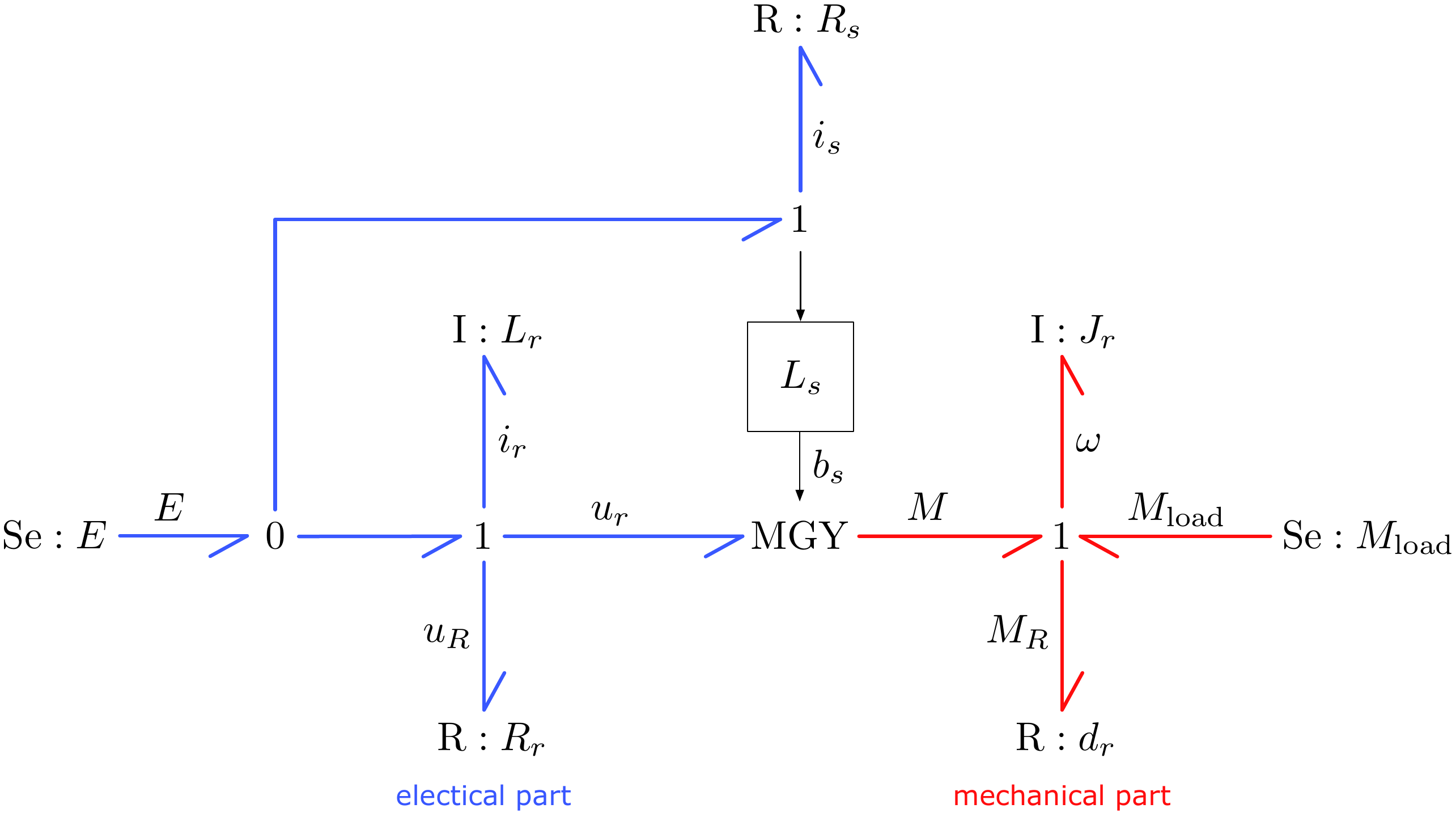}
  \caption{%
    Bond graph model of a DC shunt motor.
  }%
  \label{fig:motor_bond_graph}
\end{figure}

We start with the bond graph model
that is
taken from~\cite{2010Borutzky}
and
reproduced,
with slightly different naming of variables,
in~\cref{fig:motor_bond_graph}.
The half arrows are called bonds
and
each bond has two associated variables,
called flow and effort.
The name of the flow variable is written on
the side of the arrow head,
while the name of the effort variable is written on
the opposite side.
However,
not all variables are explicitly named in the example.
At any $0$-junction,
the effort variables of all connected bonds are equal,
while the flow variables balance,
with their sign given by the bond direction.
Dually,
at $1$-junctions,
flows are equal and efforts balance.
%
Specifically,
at the $0$-junction on the left,
the current from the $\mathrm{Se}$-element (effort source)
with effort/voltage $E$
splits into
the current $i_s$ through the stator coil
and
the current $i_r$ through the rotor coil.
The inductance of the rotor coil
is represented by the $\mathrm{I}$-element (generalized inductor)
with inductance $L_r$,
while
an $\mathrm{I}$-element representing
the stator coil inductance is omitted.
The (steady-state) stator coil current $i_s$ is
instead entirely determined by
the $\mathrm{R}$-element (generalized resistor)
at the top representing
the coil's electrical resistance $R_s$.
The magnetic flux of the stator coil $b_s$
is then obtained as the signal
that results from multiplying
the signal $i_s$,
which is the shared effort variable at the upper $1$-junction,
with the constant gain $L_s$.
The bond graph hence combines energy-based modeling
with signal-based modeling based on block diagrams.
At the $1$-junction on the left,
the source voltage $E$
is equal to
the sum of
the voltage over the rotor coil inductance $L_r$,
the voltage $u_R$ over the rotor coil resistance $R_r$,
and the induced (back EMF) voltage $u_r$.
The latter is determined by
the $\mathrm{MGY}$-element (modulated gyrator),
which describes the coupling of
the electrical and the mechanical parts.
The gyrator ratio is the signal $b_s$
coming from the gain block.
At the gyrator,
the effort variable of one port is equal to
the product of
the gyrator ratio
and
the flow variable of the other port.
On the mechanical side,
the load torque $M_\text{load} \leq 0$
is modeled by the effort source on the right,
while
the rotor's angular mass $J_r$
is represented by an $\mathrm{I}$-element.
At the $1$-junction
the sum of
the generated torque $M$
and
the load torque $M_\text{load}$
is equal to
the sum of
the torque accelerating the angular mass
and
the friction torque $M_R$.
The latter is
the product of
the friction coefficient $d_r$
and
the angular velocity $\omega$.
Putting all this together,
the bond graph gives
the system of ordinary differential equations
\begin{align*}
  L_r \, \dv{i_r}{t}
  \: &= \:
  E
  - u_r
  - u_R
  \\
  J_r \, \dv{\omega}{t}
  \: &= \:
  M
  - M_R
  + M_\text{load}
  \,,
\end{align*}
where
\begin{align*}
  i_s
  \: &= \:
  \frac{1}{R_s} \, E
  \\
  b_s
  \: &= \:
  L_s \, i_s
  \\
  u_r
  \: &= \:
  b_s \, \omega
  \\
  M
  \: &= \:
  b_s \, i_r
  \\
  M_R
  \: &= \:
  d_r \, \omega
  \\
  u_R
  \: &= \:
  R_r \, i_r
  \,.
\end{align*}

As already shown in~\cref{ex:motor_pattern},
the EPHS model
is specified as
an interconnection of two subsystems
named $\mathtt{stator}$ and $\mathtt{rotor}$.
The two systems
have a common
electric energy domain
represented by the junction
with connected ports named $\mathtt{q}$.
The kinetic energy domain of the rotor
is represented by the junction
with connected ports named $\mathtt{p}$.
Both energy domains are exposed
such that the motor model
can be integrated into another system.
The conversion between electric and kinetic energy
in the rotor
depends on the state of
the magnetic energy domain of the stator,
which is represented by the junction
with connected state ports named $\mathtt{b_s}$.

\begin{figure}[ht]
  \centering
  \includegraphics[width=27em]{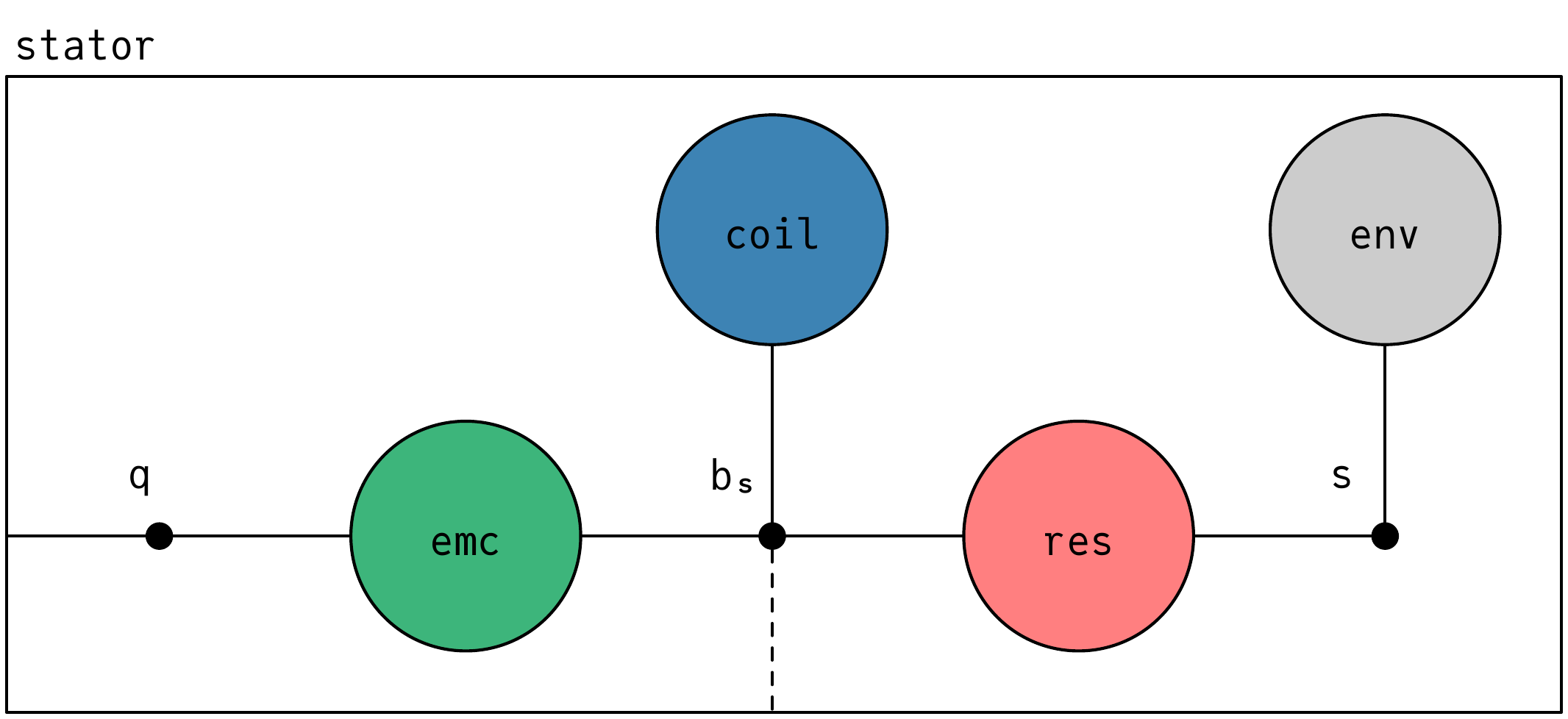}%
  \caption{%
    Interconnection pattern of the stator model.
  }%
  \label{fig:stator}
\end{figure}

The interconnection pattern of the stator model
is shown in~\cref{fig:stator}.
Box $\mathtt{emc}$
is filled by a reversible component
that represents
the coupling of
the electric energy domain associated with port $\mathtt{q}$
and the magnetic energy domain associated with port $\mathtt{b_s}$.
Box $\mathtt{coil}$
is filled by a storage component
that represents
the inductance of the stator coil.
Further,
box $\mathtt{res}$
is filled by an irreversible component
that represents
the resistance of the coil.
Finally,
box $\mathtt{env}$
is filled by an environment component
whose thermal energy domain associated with port $\mathtt{s}$
absorbs
the dissipated electromagnetic energy.

The reversible component
$(I_\text{emc}, \, \cD_\text{emc})$
is defined by its interface
\\ 
$I_\text{emc} = ( \{ \mathtt{q}, \, \mathtt{b_s} \}, \, \tau_\text{emc})$
with
\begin{alignat*}{2}
  &\tau_\text{emc}(\mathtt{q})
  \: &&= \:
  ((\bR, \, \mathtt{charge}), \, \mathsf{p})
  \\
  &\tau_\text{emc}(\mathtt{b_s})
  \: &&= \:
  ((\bR, \, \mathtt{flux\_linkage}), \, \mathsf{p})
\end{alignat*}
and its Dirac structure
$\cD_\text{emc} \subset \cP_{I_\text{emc}}$
given by
\begin{equation*}
  \begin{bmatrix}
      \mathtt{q.f} \\
      \mathtt{b_s.f}
  \end{bmatrix}
  \: = \:
  \begin{bmatrix}
      0 & 1 \\
      -1 & 0
  \end{bmatrix}
  \,
  \begin{bmatrix}
      \mathtt{q.e} \\
      \mathtt{b_s.e}
  \end{bmatrix}
  \,.
\end{equation*}

The storage component
$(I_\text{coil}, \, E_\text{coil})$
is defined by its interface
$
I_\text{coil} =
(\{ \mathtt{b_s} \}, \, \tau_\text{coil})
$
with
$
\tau_\text{coil}(\mathtt{b_s}) =
((\bR, \, \mathtt{flux\_linkage}), \, \mathsf{p})
$
and its energy function
$E_\text{coil} : \cX_{I_\text{coil}} \rightarrow \bR$
with
$\cX_{I_\mathtt{coil}} = \bR \ni \mathtt{b_s.x} = b_s$.
The function is given by
\begin{equation*}
  E_\mathtt{coil}(b_s)
  \: = \:
  \frac{1}{2 \, L_s} \, b_s^2
  \,,
\end{equation*}
where
the parameter
$L_s$ is the inductance.
%
The effort variable
$\mathtt{b_s.e} = \frac{b_s}{L_s}$
is the current through the coil.

The irreversible component
$(I_\text{res}, \, \cO_\text{res})$
is defined by
its interface
$
I_\text{res} =
(\{ \mathtt{b_s}, \, \mathtt{s} \}, \, \tau_\text{res})
$
with
\begin{alignat*}{2}
  &\tau_\text{res}(\mathtt{b_s})
  \: &&= \:
  ((\bR, \, \mathtt{flux\_linkage}), \, \mathsf{p})
  \\
  &\tau_\text{res}(\mathtt{s})
  \: &&= \:
  ((\bR, \, \mathtt{entropy}), \, \mathsf{p})
\end{alignat*}
and its Onsager structure
$\cO_\text{res} \subset \cP_\text{res}$
given by
\begin{equation*}
  \begin{bmatrix}
    \mathtt{b_s.f} \\
    \mathtt{s.f}
  \end{bmatrix}
  \: = \:
  \frac{1}{\color{violet} \theta_0} \,
  R_s \,
  \begin{bmatrix}
    \theta & -i \\
    -i & \frac{i^2}{\theta}
  \end{bmatrix}
  \,
  \begin{bmatrix}
      \mathtt{b_s.e} \\
      \mathtt{s.e}
  \end{bmatrix}
  \: = \:
  \begin{bmatrix}
      R_s \, i \\
      -\frac{1}{\theta} \, R_s \, i^2
  \end{bmatrix}
  \,,
\end{equation*}
where
the parameter $R_s$ is
the resistance of the coil,
$i = \mathtt{b_s.e}$ is
the current through the coil, and
$\theta = \textcolor{violet}{\theta_0 +} \mathtt{s.e}$ is
the absolute temperature at which heat is dissipated
into the thermal energy domain.

The environment component
is defined by
the subinterface
$I_\text{env} = \{ \mathtt{s} \} \subseteq I_R$.

\begin{figure}[ht]
  \centering
  \includegraphics[width=27em]{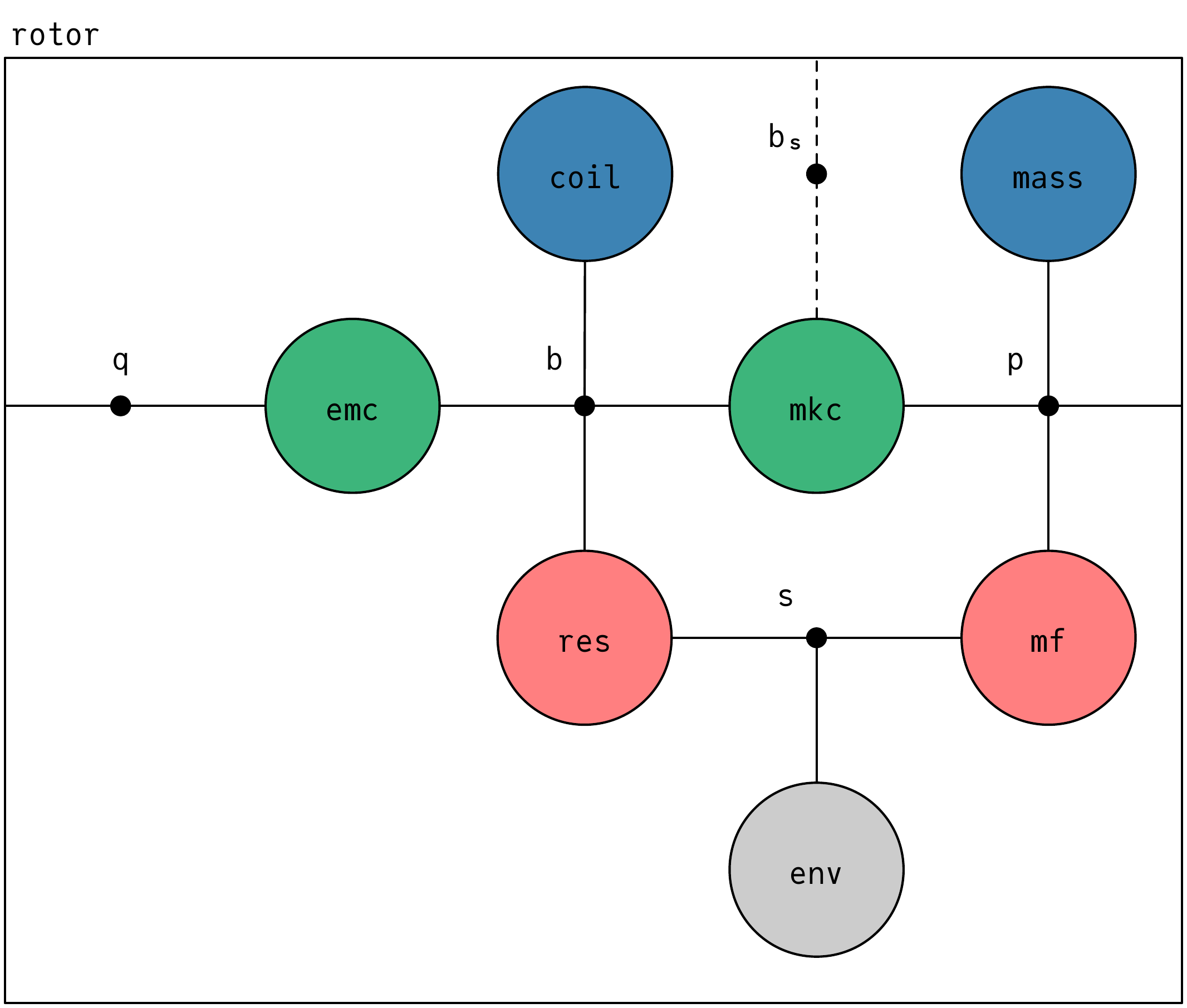}%
  \caption{%
    Interconnection pattern of the rotor model.
  }%
  \label{fig:rotor}
\end{figure}

The interconnection pattern of
the rotor model
is shown in~\cref{fig:rotor}.
The components
filling the boxes
$\mathtt{coil}$,
$\mathtt{emc}$,
$\mathtt{res}$ and
$\mathtt{env}$
are defined
as for the stator model,
except for
a different port name, namely
$\mathtt{b}$ instead of $\mathtt{b_s}$,
and
different parameters, namely
$L_r$ instead of $L_s$
and
$R_r$ instead of $R_s$.
Further,
box $\mathtt{mkc}$
is filled by a reversible component
that describes
the coupling of
the magnetic energy domain
associated with port $\mathtt{b}$
and the (rotational) kinetic energy domain
associated with port $\mathtt{p}$.
This coupling depends on
the magnetic flux $\mathtt{b_s.x}$
that is generated by the stator coil.
Box $\mathtt{mass}$
is filled by a storage component
that represents
the rotor's angular mass
and
box $\mathtt{mf}$
is filled by an irreversible component
that represents
its mechanical friction.

The reversible component
$(I_\text{mkc}, \cD_\text{mkc})$
is defined by
its interface
\\ 
$
I_\text{mkc} =
(\{ \mathtt{b}, \, \mathtt{p}, \, \mathtt{b_s} \}, \, \tau_\text{mkc})
$
with
\begin{alignat*}{2}
  &\tau_\text{mkc}(\mathtt{b})
  \: &&= \:
  ((\bR, \, \mathtt{flux\_linkage}), \, \mathsf{p})
  \\
  &\tau_\text{mkc}(\mathtt{p})
  \: &&= \:
  ((\bR, \, \mathtt{angular\_momentum}), \, \mathsf{p})
  \\
  &\tau_\text{mkc}(\mathtt{b_s})
  \: &&= \:
  ((\bR, \, \mathtt{flux\_linkage}), \, \mathsf{s})
\end{alignat*}
and its Dirac structure
$\cD_\mathtt{mkc} \subset \cP_{I_\text{mkc}}$
given by
\begin{equation*}
  \begin{bmatrix}
      \mathtt{b.f} \\
      \mathtt{p.f}
  \end{bmatrix}
  \: = \:
  \begin{bmatrix}
      0 & \mathtt{b_s.x} \\
      -\mathtt{b_s.x} & 0
  \end{bmatrix}
  \,
  \begin{bmatrix}
      \mathtt{b.e} \\
      \mathtt{p.e}
  \end{bmatrix}
  \,.
\end{equation*}

The storage component
$(I_\text{mass}, \, E_\text{mass})$
is defined by
its interface
$
I_\text{mass} =
(\{ \mathtt{p} \}, \, \tau_\text{mass})
$
with
$
\tau_\text{mass}(\mathtt{p}) =
((\bR, \, \mathtt{angular\_momentum}), \, \mathsf{p})
$
and
its energy function
\\ 
$E_\text{mass} : \cX_{I_\text{mass}} \rightarrow \bR$
with
$\cX_{I_\text{mass}} = \bR \ni \mathtt{p.x} = p$.
The function is given by
\begin{equation*}
  E_\text{mass}(p)
  \: = \:
  \frac{1}{2 \, J_r} \, p^2
  \,,
\end{equation*}
where
the parameter $J_r$ is the angular mass.

The irreversible component
$(I_\text{mf}, \cO_\text{mf})$
is defined by
its interface
$
I_\text{mf} =
(\{ \mathtt{p}, \, \mathtt{s} \}, \, \tau_\text{mf})
$
with
\begin{alignat*}{2}
  &\tau_\text{mf}(\mathtt{p})
  \: &&= \:
  ((\bR, \, \mathtt{angular\_momentum}), \, \mathsf{p})
  \\
  &\tau_\text{mf}(\mathtt{s})
  \: &&= \:
  ((\bR, \, \mathtt{entropy}), \, \mathsf{p})
\end{alignat*}
and
its Onsager structure
$\cO_\text{mf} \subset \cP_{I_\text{mf}}$
given by
\begin{equation*}
  \begin{bmatrix}
    \mathtt{p.f} \\
    \mathtt{s.f}
  \end{bmatrix}
  \: = \:
  \frac{1}{\color{violet} \theta_0} \,
  d_r \,
  \begin{bmatrix}
    \theta & -\upsilon \\
    -\upsilon & \frac{\upsilon^2}{\theta}
  \end{bmatrix}
  \,
  \begin{bmatrix}
    \mathtt{p.e} \\
    \mathtt{s.e}
  \end{bmatrix}
  \: = \:
  \begin{bmatrix}
    d_r \, \upsilon \\
    -\frac{d_r \, \upsilon^2}{\theta}
  \end{bmatrix}
  \,,
\end{equation*}
where the parameter $d_r$ is
the mechanical friction coefficient,
$\upsilon = \mathtt{p.e}$ is
the angular velocity, and
$\theta = \textcolor{violet}{\theta_0 +} \mathtt{s.e}$ is
the absolute temperature at which heat is dissipated
into the thermal energy domain.

We write
$(b_s, \, b, \, p, \, s) \in \bR^4$
for the state of the interconnected motor model,
where we identify
the entropy state variables $s$ of
both environment components.
In a future work,
the identification of
environment components
can be formalized by
composing with a suitable relation.
The EPHS model then yields the equations
\begin{align*}
  \dot{b}_s
  \: &= \:
  \mathtt{q.e} - R_s \, i_s
  \\
  \dot{b}
  \: &= \:
  \mathtt{q.e} - R_r \, i_r -  b_s \, \omega
  \\
  \dot{p}
  \: &= \:
  b_s \, i_r - d_r \, \omega + \mathtt{p.f}
  \\
  \dot{s}
  \: &= \:
  \frac{1}{\textcolor{violet}{\theta_0}}
  \left(
    R_s \, i_s^2 +
    R_r \, i_r^2 +
    d_r \, \omega^2
  \right)
  \\
  \mathtt{q.f}
  \: &= \:
  i_s + i_r
  \\
  \mathtt{p.e}
  \: &= \:
  \omega
  \,,
\end{align*}
where
\begin{align*}
  i_s
  \: &= \:
  \frac{1}{L_s} \, b_s
  \\
  i_r
  \: &= \:
  \frac{1}{L_r} \, b
  \\
  \omega
  \: &= \:
  \frac{1}{J_r} \, p
  \,,
\end{align*}
which are similar to
the equations for the bond graph model,
except that
(1)
the source of electrical energy
and
the mechanical load
are regarded as external dynamical systems,
rather than internal constants,
and
(2)
no quasi-static assumption is hence made for
the magnetic field of the stator.

\section{Discussion}%
\label{sec:discussion}

\subsection{Comparison with bond graphs}%

Bond graphs provide a graphical notation
for representing physical systems
in terms of subsystems that exchange energy.
However,
the notation does not naturally support
the hierarchical nesting of systems.
Relying exclusively on primitive subsystems
limits the applicability of bond graphs to complex systems,
where a hierarchical breakdown into simpler parts
would facilitate both model construction and reuse.

Although the notation is not inherently composable,
one can introduce conventions
that allow for subsystems
defined by open bond graphs.
For example,
The convention in~\cite{2021GawthropPanCrampin}
considers bond graphs
that share a specific element in common
as a modules.
These modules can be `glued' together
in a parent-level bond graph
via an additional $0$-junction.
In this approach,
the common element
in each individual bond graph
is replaced by an external port;
the element is instantiated once
in a parent-level bond graph,
where it is connected to a $0$-junction.
The modules are represented by additional elements,
and their ports are also connected to the $0$-junction.
The direction of these bonds
must match the direction
of the corresponding bond (or port)
at the module level.

While port-Hamiltonian systems can be seen as
a geometric formalization of
the dynamical systems derived from
\textit{generalized bond graphs},
the algebraic formalization of
the bond graph notation itself
is not straightforward
(see, e.g.,~\cite{1989BirkettRoe,2017Coya}).
Some inherent complexity in the notation
is evident also in practice,
where the direction
in which exchange of power is counted as positive,
and sometimes also the role of a variable as flow versus effort,
must be decided on a bond-by-bond basis.
These decisions must either be left to the user
or managed through heuristic algorithms or conventions.
This can impede
seamless hierarchical decomposition
and reuse of subsystems.

Interconnection patterns
provide a composable, graphical syntax
for the EPHS modeling language.
Compared to bond graphs,
interconnection patterns
have an outer box,
and they exclusively rely on $0$-junctions.
Based on this simplification,
the mentioned issues are resolved:
\begin{itemize}
  \item{%
		Interconnection patterns have
		a clear mathematical structure,
		which supports
		their hierarchical (de-)composition.
	}
	\item{%
		Interconnection patterns are
		relatively easy to interpret,
		owing to their simple structure,
		the direct correspondence between
		junctions and energy domains,
		and the clear role of a power variables
		as either flow or effort.
	}
	\item{%
		The direction
		in which exchange of power is counted positive
		is predetermined
		in agreement with
		the convention used in thermodynamics,
		i.e.~energy supplied to a system has a positive sign.
	}
\end{itemize}

In contrast to bond graphs,
EPHS models are guaranteed
to be thermodynamically consistent.

Bond graphs may incorporate
elements from block diagrams.
In addition to bonds,
which represent energy exchange,
there may be edges that transmit signals
according to their orientation.
Elements may have signal ports
to receive signals,
which modulate their behavior,
and there may be `blocks'
that perform arbitrary signal processing.
In contrast,
the formalism presented here does not cover
arbitrary signal processing;
system behavior can be modulated
only by state variables of other systems.
To express this,
every port has a state variable,
which is shared with all other ports
connected to the same junction.
We distinguish between
state ports,
which share only state,
and power ports,
which additionally model energy exchange.

\subsection{Comparison with port-Hamiltonian systems}%

Port-Hamiltonian Systems (PHS)
provide a mathematical framework
for open dynamical systems
with guaranteed passivity.
While bond graphs are frequently used
to depict port-Hamiltonian models,
establishing a rigorous link
between bond graph syntax and port-Hamiltonian semantics
remains challenging.
On the one hand,
bond graphs offer
a network representation of individual systems,
but the representation is not inherently composable.
On the other hand,
PHS provide
a composable representation of individual systems,
albeit without an explicit internal network structure.
In particular,
the PHS framework provides no built-in support for
referencing subsystems and their ports.
Previous work has focused on
algorithms for converting
certain classes of bond graphs to PHS
(see, e.g.,~\cite{2020PfeiferCaspartHampelMullerKrebsHohmann}).

A well-known issue concerns
the relationship between port-Hamiltonian structure,
which implies passivity,
and its physical interpretation.
Owing to its roots in mechanics,
the `Hamiltonian' storage function
is typically interpreted as the system's energy.
However,
the differential of the `Hamiltonian'
generally also drives dissipative dynamics,
resulting in an apparent loss of `energy'.
This raises questions,
especially outside the realm of electromechanical systems,
where thermodynamic considerations are often overlooked.
In response,
some members of the port-Hamiltonian community
have proposed alternative port-based frameworks
that explicitly encode thermodynamic structure
(see, e.g.,~\cite{2005EberardMaschkeSchaft,2022RamirezGorrec,2019SchaftMaschke}).

EPHS provide a mathematical framework
for open dynamical systems
with guaranteed thermodynamic consistency.
In contrast to PHS,
this framework is directly based on
a composable, graphical syntax
used to specify how
more complex systems are formed by
interconnecting simpler systems.
An explicit understanding of
the translation between graphical syntax and relational semantics
guarantees that syntax and semantics compose in compatible ways.
This enables
a modular and hierarchical modeling approach,
which inherently supports
referencing nested subsystems and their ports.
In a sense,
this approach unites
the strengths of both bond graphs and PHS.

The dynamical systems defined by EPHS models
essentially possess port-Hamiltonian structure.
Beyond passivity,
additional structural properties
imposed on the primitive systems
guarantee that all EPHS
share a uniform and consistent physical interpretation,
aligned with nonequilibrium thermodynamics.
Since EPHS compose simply by sharing energy domains,
users typically do not need to explicitly define external ports.
As a result,
users can typically integrate existing models
into more complex systems without any prior adaptation.

\subsection{Comparison with metriplectic and GENERIC systems}%

Metriplectic systems and the GENERIC
provide a framework for (usually isolated) dynamical systems
with guaranteed thermodynamic consistency.
The dynamical systems
defined by (isolated) EPHS models
are basically metriplectic systems or instances of the GENERIC,
since the built-in thermodynamic consistency
follows from essentially the same ideas.
Regarding the reversible part,
EPHS appear to be more general,
since they use Dirac structures,
rather than the less general Poisson structures
used in the definition of metriplectic systems or the GENERIC.
This allows EPHS to easily model kinematic constraints,
which is important e.g.~for multibody systems
(see~\cite{2024LohmayerCapobiancoLeyendecker}).
However,
it should be no problem to generalize
the metriplectic or GENERIC framework accordingly.
Regarding the irreversible part,
EPHS appears to be less general,
since it remains unclear if
the single-generator formulation based on exergy
is useful beyond local thermodynamic equilibrium.

In contrast to
the metriplectic or GENERIC framework,
EPHS enable a modular and hierarchical approach
to thermodynamic modeling,
based on a graphical syntax
for building more complex models from simpler parts.

\subsection{Future work}%

Future work shall
address the details omitted in this work.
A more technical paper will define
interconnection patterns
and their functorial semantics
using a categorical framework
based on directories
and the theory of operads/multicategories.

As presented here,
the definition of (composite) EPHS
exhibits a recursive structure.
It would be interesting to study
the `closure' of all EPHS models
by deriving representations of EPHS
as structured systems of differential-algebraic equations.
The analysis could benefit
further developments
and further clarify the relationship with
the metriplectic or GENERIC framework
as well as
port-Hamiltonian systems.

As an important groundwork for
further practical developments,
we have implemented the framework
within the Julia programming language,
see~\cite{2025Lohmayer}.
The reader is invited to
review its application in several examples.
Based on this,
future work can explore topics such as
advanced applications
(see, e.g.,~\cite{2024LohmayerCapobiancoLeyendecker,2024LohmayerKrausLeyendecker}),
structure-preserving numerical methods,
semi-automated control design,
and the integration with scientific machine learning methods.

\section*{Author contribution statement}%

\textbf{Markus Lohmayer}: Conceptualization, Investigation, Writing -- Original Draft, Writing -- Review \& Editing, Visualization;
\textbf{Owen Lynch}: Investigation, Writing -- Original Draft;
\textbf{Sigrid Leyendecker}: Supervision, Review \& Editing, Funding

\section*{Acknowledgments}%

We thank
David Spivak for help
when working out the concept of namespaces.
The second author acknowledges support from
the DARPA ASKEM program under grant HR00112220038.


\bibliographystyle{link-elsarticle-num}
\bibliography{literature}

\begin{thebibliography}{10}
\expandafter\ifx\csname url\endcsname\relax
  \def\url#1{\texttt{#1}}\fi
\expandafter\ifx\csname urlprefix\endcsname\relax\def\urlprefix{URL }\fi
\expandafter\ifx\csname href\endcsname\relax
  \def\href#1#2{#2} \def\path#1{#1}\fi
\providecommand*{\nolinkurl}{\url}

\bibitem{1990Courant}
\href{https://doi.org/10.1090/S0002-9947-1990-0998124-1}{T.~J. Courant, {D}irac
  manifolds, Transactions of the American Mathematical Society 319~(2) (1990)
  631--661.
\newblock \path{doi:10.1090/S0002-9947-1990-0998124-1}.}

\bibitem{2018BadlyanMaschkeBeattieMehrmann}
A.~M. Badlyan, B.~Maschke, C.~Beattie, V.~Mehrmann, Open physical systems: from
  {GENERIC} to port-{H}amiltonian systems, in: Proceedings of the 23rd
  International Symposium on Mathematical Theory of Systems and Networks, Hong
  Kong, China, 2018, pp. 204--211.

\bibitem{2021LohmayerKotyczkaLeyendecker}
\href{https://doi.org/10.1080/13873954.2021.1979592}{M.~Lohmayer, P.~Kotyczka,
  S.~Leyendecker, Exergetic port-{H}amiltonian systems: modelling basics,
  Mathematical and Computer Modelling of Dynamical Systems 27~(1) (2021)
  489--521.
\newblock \path{doi:10.1080/13873954.2021.1979592}.}

\bibitem{2014SchaftJeltsema}
\href{https://doi.org/10.1561/2600000002}{A.~{van der Schaft}, D.~Jeltsema,
  Port-{H}amiltonian systems theory: An introductory overview, Foundations and
  Trends in Systems and Control 1~(2) (2014) 173--378.
\newblock \path{doi:10.1561/2600000002}.}

\bibitem{1984Morrison}
\href{https://doi.org/10.1016/0375-9601(84)90635-2}{P.~J. Morrison, Bracket
  formulation for irreversible classical fields, Physics Letters A 100~(8)
  (1984) 423--427.
\newblock \path{doi:10.1016/0375-9601(84)90635-2}.}

\bibitem{2018PavelkaKlikaGrmela}
\href{https://doi.org/10.1515/9783110350951}{M.~Pavelka, V.~Klika, M.~Grmela,
  Multiscale Thermo-Dynamics, De Gruyter, Berlin, 2018.
\newblock \path{doi:10.1515/9783110350951}.}

\bibitem{1961Paynter}
H.~Paynter, Analysis and Design of Engineering Systems, MIT Press, Cambridge,
  1961.

\bibitem{2010Borutzky}
\href{https://doi.org/10.1007/978-1-84882-882-7}{W.~Borutzky, Bond Graph
  Methodology, Springer London, 2010.
\newblock \path{doi:10.1007/978-1-84882-882-7}.}

\bibitem{2019FongSpivak}
\href{https://doi.org/10.1017/9781108668804}{B.~Fong, D.~I. Spivak, An
  Invitation to Applied Category Theory, Cambridge University Press, 2019.
\newblock \path{doi:10.1017/9781108668804}.}

\bibitem{1998Maclane}
S.~Mac~Lane, {Categories for the Working Mathematician}, 2nd Edition, Graduate
  Texts in Mathematics, Springer, 1998.

\bibitem{2024LohmayerKrausLeyendecker}
\href{https://doi.org/10.48550/ARXIV.2410.00009}{M.~Lohmayer, M.~Kraus,
  S.~Leyendecker, {Energy-based, geometric, and compositional formulation of
  fluid and plasma models} (2024).
\newblock \path{doi:10.48550/ARXIV.2410.00009}.}

\bibitem{1985Callen}
H.~Callen, Thermodynamics and an Introduction to Thermostatistics, 2nd Edition,
  John Wiley {\&} Sons Inc, New York, 1985.

\bibitem{2005Oettinger}
H.~C. Öttinger, Beyond Equilibrium Thermodynamics, John Wiley {\&} Sons Inc,
  Hoboken, New Jersey, 2005.

\bibitem{1984GrootMazur}
S.~de~Groot, P.~Mazur, Non-equilibrium Thermodynamics, Dover Books on Physics,
  Dover Publications, New York, 1984.

\bibitem{1931Onsager}
\href{https://doi.org/10.1103/physrev.37.405}{L.~Onsager, Reciprocal relations
  in irreversible processes. i., Physical Review 37~(4) (1931) 405–426.
\newblock \path{doi:10.1103/physrev.37.405}.}

\bibitem{2013Spivak}
\href{https://doi.org/10.48550/ARXIV.1305.0297}{D.~I. Spivak, The operad of
  wiring diagrams: formalizing a graphical language for databases, recursion,
  and plug-and-play circuits (2013).
\newblock \path{doi:10.48550/ARXIV.1305.0297}.}

\bibitem{2015BaezErberle}
J.~Baez, J.~Erberle, Categories in control, Theory and Applications of
  Categories 30~(24) (2015) 836--881.

\bibitem{2020Libkind}
\href{https://doi.org/10.48550/ARXIV.2007.14442}{S.~Libkind, An algebra of
  resource sharing machines (2020).
\newblock \path{doi:10.48550/ARXIV.2007.14442}.}

\bibitem{2023Myers}
D.~J. Myers, Categorical systems theory,
  \url{http://davidjaz.com/Papers/DynamicalBook.pdf} (2023).

\bibitem{1991Maschke}
\href{https://doi.org/10.1016/0016-0032(91)90050-d}{B.~Maschke, Geometrical
  formulation of bond graph dynamics with application to mechanisms, Journal of
  the Franklin Institute 328~(5-6) (1991) 723--740.
\newblock \path{doi:10.1016/0016-0032(91)90050-d}.}

\bibitem{1982Breedveld}
\href{https://doi.org/10.1016/0016-0032(82)90050-3}{P.~C. Breedveld,
  Thermodynamic bond graphs and the problem of thermal inertance, Journal of
  the Franklin Institute 314~(1) (1982) 15 -- 40.
\newblock \path{doi:10.1016/0016-0032(82)90050-3}.}

\bibitem{1974Willems}
\href{https://doi.org/10.1007/978-1-4613-4555-8_4}{J.~C. Willems, Qualitative
  behavior of interconnected systems, in: Annals of Systems Research, Springer
  {US}, 1974, pp. 61--80.
\newblock \path{doi:10.1007/978-1-4613-4555-8_4}.}

\bibitem{2007Willems}
\href{https://doi.org/10.1109/mcs.2007.906923}{J.~C. {Willems}, The behavioral
  approach to open and interconnected systems, {IEEE} Control Systems 27~(6)
  (2007) 46--99.
\newblock \path{doi:10.1109/mcs.2007.906923}.}

\bibitem{1992MaschkeSchaftBreedveld}
\href{https://doi.org/10.1016/S0016-0032(92)90049-M}{B.~Maschke, A.~{van der
  Schaft}, P.~Breedveld, {An Intrinsic Hamiltonian Formulation of Network
  Dynamics: Non-standard Poisson Structures and Gyrators}, Journal of the
  Franklin Institute 329~(5) (1992) 923--966.
\newblock \path{doi:10.1016/S0016-0032(92)90049-M}.}

\bibitem{1995SchaftMaschke}
A.~{van der Schaft}, B.~Maschke, The {H}amiltonian formulation of energy
  conserving physical systems with external ports, AEU - Archiv für Elektronik
  und Übertragungstechnik 49~(5-6) (1995) 362--371.

\bibitem{2013Bursztyn}
\href{https://doi.org/10.1017/CBO9781139208642.002}{H.~Bursztyn, A brief
  introduction to {Dirac} manifolds, in: A.~Cardona, I.~Contreras, A.~F.
  Reyes-Lega (Eds.), Geometric and Topological Methods for Quantum Field
  Theory: Proceedings of the 2009 Villa de Leyva Summer School, Cambridge
  University Press, Cambridge, 2013, p. 4–38.
\newblock \path{doi:10.1017/CBO9781139208642.002}.}

\bibitem{1984Grmela1}
\href{https://doi.org/10.1090/conm/028/751978}{M.~Grmela, Particle and bracket
  formulations of kinetic equations, in: Fluids and plasmas: geometry and
  dynamics ({B}oulder, {C}olo., 1983), Vol.~28 of Contemp. Math., Amer. Math.
  Soc., Providence, RI, 1984, pp. 125--132.
\newblock \path{doi:10.1090/conm/028/751978}.}

\bibitem{1984Grmela2}
\href{https://doi.org/10.1016/0375-9601(84)90297-4}{M.~Grmela, Bracket
  formulation of dissipative fluid mechanics equations, Physics Letters A
  102~(8) (1984) 355--358.
\newblock \path{doi:10.1016/0375-9601(84)90297-4}.}

\bibitem{1984Kaufman}
\href{https://doi.org/10.1016/0375-9601(84)90634-0}{A.~N. Kaufman, Dissipative
  {H}amiltonian systems: A unifying principle, Physics Letters A 100~(8) (1984)
  419--422.
\newblock \path{doi:10.1016/0375-9601(84)90634-0}.}

\bibitem{1997GrmelaOettinger}
\href{https://doi.org/10.1103/PhysRevE.56.6620}{M.~Grmela, H.~C. Öttinger,
  Dynamics and thermodynamics of complex fluids. {I}. {D}evelopment of a
  general formalism, Phys. Rev. E 56 (1997) 6620--6632.
\newblock \path{doi:10.1103/PhysRevE.56.6620}.}

\bibitem{2017BaezPollard}
\href{https://doi.org/10.1142/s0129055x17500283}{J.~C. Baez, B.~S. Pollard, A
  compositional framework for reaction networks, Reviews in Mathematical
  Physics 29~(09) (2017) 1750028.
\newblock \path{doi:10.1142/s0129055x17500283}.}

\bibitem{2017Coya}
\href{https://doi.org/10.48550/ARXIV.1710.00098}{B.~Coya, A compositional
  framework for bond graphs (2017).
\newblock \path{doi:10.48550/ARXIV.1710.00098}.}

\bibitem{2018BaezFong}
\href{https://doi.org/10.48550/ARXIV.1504.05625}{J.~C. Baez, B.~Fong, A
  compositional framework for passive linear networks (2018).
\newblock \path{doi:10.48550/ARXIV.1504.05625}.}

\bibitem{2020BaezFoleyMoellerPollard}
J.~C. Baez, J.~Foley, J.~Moeller, B.~S. Pollard, Network models (2020).

\bibitem{2021FoleyBreinerSubrahmanianDusel}
\href{https://doi.org/10.1098/rspa.2021.0099}{J.~D. Foley, S.~Breiner,
  E.~Subrahmanian, J.~M. Dusel, Operads for complex system design
  specification, analysis and synthesis, Proceedings of the Royal Society A:
  Mathematical, Physical and Engineering Sciences 477~(2250) (2021) 20210099.
\newblock \path{doi:10.1098/rspa.2021.0099}.}

\bibitem{2004Leinster}
\href{https://doi.org/10.1017/CBO9780511525896}{T.~Leinster, Higher Operads,
  Higher Categories, London Mathematical Society Lecture Note Series, Cambridge
  University Press, 2004.
\newblock \path{doi:10.1017/CBO9780511525896}.}

\bibitem{2016Yau}
D.~Yau, Colored Operads, American Mathematical Society ({AMS}), Providence,
  Rhode Island, 2016.

\bibitem{2022LohmayerLeyendecker1}
\href{https://doi.org/10.1016/j.ifacol.2022.11.077}{M.~Lohmayer,
  S.~Leyendecker, {EPHS: A Port-Hamiltonian Modelling Language}, in: 25th IFAC
  Symposium on Mathematical Theory of Networks and Systems MTNS 2022, Vol.~55,
  Elsevier {BV}, 2022, pp. 347--352.
\newblock \path{doi:10.1016/j.ifacol.2022.11.077}.}

\bibitem{2014Spivak}
D.~Spivak, Category Theory for the Sciences, MIT Press, 2014.

\bibitem{2018Yau}
\href{https://doi.org/10.1007/978-3-319-95001-3}{D.~Yau, Operads of Wiring
  Diagrams, Springer International Publishing, 2018.
\newblock \path{doi:10.1007/978-3-319-95001-3}.}

\bibitem{2015VagnerSpivakLerman}
D.~Vagner, D.~Spivak, E.~Lerman, Algebras of open dynamical systems on the
  operad of wiring diagrams, Theory and Applications of Categories 30~(51)
  (2015) 1793--1822.

\bibitem{2019SchultzSpivakVasilakopoulou}
\href{https://doi.org/10.1007/s10485-019-09565-x}{P.~Schultz, D.~I. Spivak,
  C.~Vasilakopoulou, Dynamical systems and sheaves, Applied Categorical
  Structures 28~(1) (2019) 1--57.
\newblock \path{doi:10.1007/s10485-019-09565-x}.}

\bibitem{2021LibkindBaasPattersonFairbanks}
\href{https://doi.org/10.48550/ARXIV.2105.12282}{S.~Libkind, A.~Baas,
  E.~Patterson, J.~Fairbanks, Operadic modeling of dynamical systems:
  Mathematics and computation (2021).
\newblock \path{doi:10.48550/ARXIV.2105.12282}.}

\bibitem{2022Lynch}
\href{https://doi.org/10.48550/arXiv.2310.06088}{O.~Lynch, Relational
  composition of physical systems: A categorical approach (2022).
\newblock \path{doi:10.48550/arXiv.2310.06088}.}

\bibitem{1960Fredkin}
\href{https://doi.org/10.1145/367390.367400}{E.~Fredkin, Trie memory,
  Communications of the ACM 3 (1960) 490--499.
\newblock \path{doi:10.1145/367390.367400}.}

\bibitem{2014PavelkaKlikaGrmela}
\href{https://doi.org/10.1103/PhysRevE.90.062131}{M.~Pavelka, V.~Klika,
  M.~Grmela, Time reversal in nonequilibrium thermodynamics, Phys. Rev. E 90
  (2014) 062131.
\newblock \path{doi:10.1103/PhysRevE.90.062131}.}

\bibitem{2012Lee}
\href{https://doi.org/10.1007/978-1-4419-9982-5}{J.~M. Lee, Introduction to
  Smooth Manifolds, 2nd Edition, Springer New York, 2012.
\newblock \path{doi:10.1007/978-1-4419-9982-5}.}

\bibitem{2016MielkeRengerPeletier}
\href{https://doi.org/10.1515/jnet-2015-0073}{A.~Mielke, D.~R.~M. Renger, M.~A.
  Peletier, A generalization of {O}nsager's reciprocity relations to gradient
  flows with nonlinear mobility, Journal of Non-Equilibrium Thermodynamics
  41~(2) (2016).
\newblock \path{doi:10.1515/jnet-2015-0073}.}

\bibitem{1972Willems}
\href{https://doi.org/10.1007/bf00276493}{J.~C. Willems, Dissipative dynamical
  systems part {I}: General theory, Archive for Rational Mechanics and Analysis
  45~(5) (1972) 321--351.
\newblock \path{doi:10.1007/bf00276493}.}

\bibitem{2021GawthropPanCrampin}
\href{https://doi.org/10.1098/rsif.2021.0478}{P.~J. Gawthrop, M.~Pan, E.~J.
  Crampin, Modular dynamic biomolecular modelling with bond graphs: the
  unification of stoichiometry, thermodynamics, kinetics and data, Journal of
  The Royal Society Interface 18~(181) (2021) 20210478.
\newblock \path{doi:10.1098/rsif.2021.0478}.}

\bibitem{1989BirkettRoe}
\href{https://doi.org/10.1016/0016-0032(89)90015-x}{S.~Birkett, P.~Roe, The
  mathematical foundations of bond graphs{\textemdash}i. algebraic theory,
  Journal of the Franklin Institute 326~(3) (1989) 329--350.
\newblock \path{doi:10.1016/0016-0032(89)90015-x}.}

\bibitem{2020PfeiferCaspartHampelMullerKrebsHohmann}
\href{https://doi.org/10.1016/j.automatica.2020.109121}{M.~Pfeifer, S.~Caspart,
  S.~Hampel, C.~Muller, S.~Krebs, S.~Hohmann, Explicit port-{H}amiltonian
  formulation of multi-bond graphs for an automated model generation,
  Automatica 120 (2020) 109121.
\newblock \path{doi:10.1016/j.automatica.2020.109121}.}

\bibitem{2005EberardMaschkeSchaft}
\href{https://doi.org/10.1109/cdc.2005.1583118}{D.~Eberard, B.~Maschke,
  A.~van~der Schaft, Port contact systems for irreversible thermodynamical
  systems, in: Proceedings of the 44th IEEE Conference on Decision and Control,
  IEEE, 2005, p. 5977–5982.
\newblock \path{doi:10.1109/cdc.2005.1583118}.}

\bibitem{2022RamirezGorrec}
\href{https://doi.org/10.3390/e24101478}{H.~Ramirez, Y.~Le~Gorrec, An overview
  on irreversible port-hamiltonian systems, Entropy 24~(10) (2022) 1478.
\newblock \path{doi:10.3390/e24101478}.}

\bibitem{2019SchaftMaschke}
\href{https://doi.org/10.1007/978-3-030-26980-7_24}{A.~{van der Schaft},
  B.~Maschke, About some system-theoretic properties of port-thermodynamic
  systems, in: Lecture Notes in Computer Science, Springer International
  Publishing, 2019, pp. 228--238.
\newblock \path{doi:10.1007/978-3-030-26980-7_24}.}

\bibitem{2024LohmayerCapobiancoLeyendecker}
\href{https://doi.org/10.1007/s11044-024-10038-w}{M.~Lohmayer, G.~Capobianco,
  S.~Leyendecker, {Exergetic port-Hamiltonian systems for multibody dynamics},
  Multibody System Dynamics (2024).
\newblock \path{doi:10.1007/s11044-024-10038-w}.}

\bibitem{2025Lohmayer}
\href{https://MarkusLohmayer.github.io/EPHS.jl}{M.~Lohmayer, {EPHS.jl -- a
  compositional, energy-based software framework for modeling mechanical,
  electromagnetic, and thermodynamic systems}
  (2025).\newline\urlprefix\nolinkurl{https://MarkusLohmayer.github.io/EPHS.jl}
}

\end{thebibliography}
\addcontentsline{toc}{section}{References}

\end{document}